\keywords{Differential Privacy, Interactive Mechanisms, Concurrent Composition, Adaptivity}
 \newtheorem{theorem}{Theorem}
 \numberwithin{theorem}{section}
 \newtheorem{lemma}[theorem]{Lemma}
 \newtheorem{corollary}[theorem]{Corollary}
 \newtheorem{definition}[theorem]{Definition}
 \newtheorem{example}[theorem]{Example}
\newcommand{\View}{\texttt{View}}
\newcommand{\abs}[1]{\lvert{#1}\rvert}
\newcommand{\eps}{\epsilon}
\newcommand{\comp}{{\text{Comp}}}
\newcommand{\concomp}{{\text{ConComp}}}
\newcommand{\CF}{\mathcal{F}\text{-}FiltCon(IM)}
\newcommand{\CO}{\mathcal{G}\text{-}OdomCon(IM)}
\newcommand{\NIF}{\mathcal{F}\text{-}Filt(NIM)}
\newcommand{\NIO}{\mathcal{G}\text{-}Odom(NIM)}
\newcommand{\zo}{\{0,1\}}
\newcommand{\cM}{\mathcal{M}}
\newcommand{\cP}{\mathcal{P}}
\newcommand{\cA}{\mathcal{A}}
\newcommand{\calM}{\mathcal{M}}
\newcommand{\calA}{\mathcal{A}}
\newcommand{\calF}{\mathcal{F}}
\newcommand{\calD}{\mathcal{D}}
\newcommand{\calG}{\mathcal{G}}
\newcommand{\calO}{\mathcal{O}}
\begin{document}

\title[Concurrent Composition for Interactive DP with
Adaptive Privacy-Loss Parameters]{Concurrent Composition for Interactive Differential Privacy with
Adaptive Privacy-Loss Parameters }
\titlecomment{{\lsuper*} This article is for the TPDP special issue. A preliminary version of this paper has appeared at the ACM Conference on Computer and Communications Security (CCS) 2023.}

\author[S. Haney]{Samuel Haney}
\address{%
Tumult Labs, Raleigh, USA}

\author[M. Shoemate]{Michael Shoemate}
\address{Harvard University, Cambridge, USA}

\author[G. Tian]{Grace Tian}
\address{Harvard University, Cambridge, USA}

\author[S. Vadhan]{Salil Vadhan}
\address{Harvard University, Cambridge, USA}

\author[A. Vyrros]{Andrew Vyrros}
\address{Harvard University, Cambridge, USA}

\author[V. Xu]{Vicki Xu}
\address{Harvard University, Cambridge, USA}

\author[W. Zhang]{Wanrong Zhang}
\address{Harvard University, Cambridge, USA}


\begin{abstract}
  In this paper, we study the concurrent composition of interactive mechanisms with adaptively chosen privacy-loss parameters. In this
setting, the adversary can interleave queries to existing interactive mechanisms, as well as create new ones. We prove that  every valid privacy filter and odometer for noninteractive mechanisms
extends to the concurrent composition of interactive mechanisms if privacy loss is measured using $(\epsilon, \delta)$-DP, $f$-DP, or R\'enyi DP of fixed order. Our results offer strong theoretical foundations for enabling full adaptivity in composing differentially private interactive mechanisms, showing that concurrency does not affect the privacy guarantees. We also provide an implementation for users to deploy in practice.
\end{abstract}





\maketitle

\section{Introduction}


\subsection{Differential Privacy}

Differential privacy is a framework for protecting the privacy of individuals when analyzing data. 
It is a mathematical definition of privacy that ensures that the results of an analysis do not reveal too much information about any individual in the dataset. Because of its powerful worst-case guarantee, differential privacy has become a leading approach in privacy-preserving data analysis, where it is used to enable the analysis of sensitive data while preserving the privacy of individuals.

Differential privacy can be defined in terms of a general database space $\mathcal X$ and a binary neighboring relation on $\mathcal X$. For example, if databases contain an ordered and known number $n$ of real-valued entries, then $\mathcal X = \mathbb R^n$.  The binary relation on $\mathcal X$ specifies which datasets are \emph{neighboring}, meaning that they differ on one individual's data. For example, if $x \in \mathbb R^n$, then $x, x' \in \mathcal X$ are neighboring if they differ on one coordinate. Differential privacy requires that the output distributions should be roughly the same on the two neighboring datasets. 

\begin{definition}[Differential Privacy]\label{def.dp}
    A randomized mechanism $\mathcal M: \mathcal X \to \mathcal R$ is {\em $(\eps, \delta)$-differentially private} if for every pair of neighboring datasets $x, x' \in \mathcal X$, and for every subset of possible outputs $\mathcal S \subseteq \mathcal R$, $$Pr[\mathcal M(x) \in \mathcal S] \leq \exp(\eps) \cdot Pr[\mathcal M(x') \in \mathcal S] + \delta.$$
\end{definition}

By requiring that an analysis be robust to changes in neighboring datasets, differential privacy provides a guarantee that the privacy of individuals in the dataset is protected, regardless of what other data might be included or excluded.

In recent years, other forms of differential privacy have enjoyed use to address various shortcomings of $(\eps, \delta)$-DP (also known as approximate-DP) regarding composition. Some standard variants include R\'enyi DP (RDP) \cite{mironov2017renyi}, $f$-DP \cite{dong2019gaussian} (the formal definition is given in Section \ref{sec.fDP}), and zero-concentrated differential privacy (zCDP) \cite{dwork2016concentrated,bun2016concentrated}.

\begin{definition}[R\'enyi divergence \cite{renyi1961measures}]\label{def.renyi_divergence}
	For two probability distributions $P$ and $Q$, the \emph{R\'enyi divergence} of order $\alpha>1$ is 
	\begin{equation*}
		D_\alpha(P||Q)=\frac{1}{\alpha-1}\log \mathrm{E}_{x\sim Q} \left[\frac{P(x)}{Q(x)}\right]^\alpha.
	\end{equation*}
\end{definition}

\begin{definition}[R\'enyi DP \cite{mironov2017renyi}]\label{def.renyi_dp}
A randomized mechanism $\mathcal M$ is \emph{$(\alpha,\eps)$-R\'enyi differentially private} ($\eps$-RDP$_\alpha$)
if for all every two neighboring datasets $x$ and $x'$, $$D_{\alpha}(\mathcal M(x)||\mathcal M(x')) \leq \eps$$ 
\end{definition}

\subsection{Composition of Differentially Private Mechanisms}

A fundamental question in differential privacy concerns how privacy degrades under multiple mechanisms run on the same database. 

\subsubsection{Composition Theorems for Noninteractive Mechanisms}\label{sec.compnon}

Definition \ref{def.dp} defines differential privacy for \emph{noninteractive mechanisms} $\cM$. Composition for such noninteractive mechanisms has been extensively studied in the literature. We will denote the (noninteractive, non-adaptive) \emph{composition} of $k$ noninteractive mechanisms $\mathcal M_1, \dots, \mathcal M_k$ on a dataset $x$ as $\mathcal M := Comp(\mathcal M_1, \dots, \mathcal M_k)$, where $\calM(x)$ returns $\calM_1(x), \dots, \calM_k(x)$, with each $\calM_j$ executed independently on its own random coins.

Basic composition of $(\eps, \delta)$-differential privacy \cite{dwork2006our} finds that the privacy-loss parameters scale at most linearly with the number of mechanisms composed. Advanced composition \cite{dwork2010boosting} provides a tighter bound where the parameters scale sublinearly with the number of mechanisms, and optimal composition \cite{kairouz2015composition, murtagh2016complexity} provides an exact guarantee of this composition. DP variants such as R\'enyi DP, $f$-DP, and $z$-CDP can provide tighter composition bounds by capturing more information about the mechanisms $\cM_i$ being composed than just the two parameters $\eps_i$ and $\delta_i$. 

\subsubsection{Composition Theorems for Interactive Mechanisms}\label{sec.compinter}

While many differential privacy mechanisms are noninteractive, some mechanisms are expressly desired to be interactive, receiving and responding to adaptive queries from analysts. One example is the {\em adaptive} formulations of the aforementioned composition theorems \cite{vadhan2021concurrent, lyu2022composition, vadhan2022concurrent}. These allow an adversary to select noninteractive mechanisms adaptively based on the result received from $\calM_{i - 1}$. However, there are other, more sophisticated interactive mechanisms that answer queries in a correlated fashion based on secret state or randomness. (In adaptive composition \cite{dwork2010boosting}, each mechanism $\calM_i$ is run on the dataset using fresh randomness in dependently of how previous queries are answered.) Other examples include the Sparse Vector Technique \cite{DNRRV09,DNPR10, dwork2014algorithmic}, and Private Multiplicative Weights \cite{hardt2010multiplicative}. Thus, interactive mechanisms have been used as the basic abstraction in the programming frameworks of the open-source software project OpenDP \cite{gaboardi2020programming} as well as the Tumult Analytics platform  \cite{tumultanalyticssoftware} \cite{tumultanalyticswhitepaper}.
An interactive mechanism $\mathcal{M}$ is a party interacting with an analyst adversary in an interactive protocol, wherein each party has its \emph{random coin}, which captures the randomness used by the mechanism. 

\begin{definition}[Interactive Algorithms]
    An \emph{interactive algorithm}, also known as a {\em randomized state machine}, consists of a randomized algorithm
    $\cM : \zo^* \times \zo^* \rightarrow \zo^*\times \zo^*$ that takes the current state $s\in \zo^*$, a query $q\in \zo^*$, and returns a new state $s'\in \zo^*$ and an answer $a\in \zo^*$, written $(s',a) = \cM(s,q)$. We require that $s' \neq \lambda$ and $a \neq \lambda$, where $\lambda$ is the empty string. When we wish to make the randomness $r$ of $\cM$ more explicit, we write $(s', a) = \cM(s, q; r)$.  
\end{definition}

An interactive algorithm interacts with an analyst or adversary $\cA$ as follows.

\begin{definition}[Interaction between two mechanisms]\label{def.protocol}
	For two interactive algorithms $\cM$ and $\cA$, the \emph{interaction} between $\cM$ and $\cA$ on an input $x\in \zo^*$ to $\cM$ is the following random process (denoted $(\cA \leftrightarrow \cM(x))$):
	\begin{enumerate}
            \item Initialize $s^\cM_1 \gets x$, $m_0 \gets \lambda$, $s_1^\cA \gets \lambda$, where $\lambda$ is the empty string. 
		\item Repeat the following for $i=1,2,\ldots$.
            \begin{enumerate}
                \item If $i$ is odd, let $(s^\cM_{i+1},m_i) \gets \cM(s^\cM_i,m_{i-1})$ and $s^\cA_{i + 1} \gets s^\cA_i$.
                \item If $i$ is even, let $(s^\cA_{i+1},m_i) \gets \cA(s^\cA_i,m_{i-1})$ and $s^\cM_{i + 1} \gets s^\cM_i$.  
                \item if $m_i = \texttt{halt}$, then exit loop.
            \end{enumerate}
        \end{enumerate}
\end{definition}

In the context of DP, think of the input $x$ as a sensitive dataset that needs to be protected. In this context, an \emph{interactive mechanism} is a randomized state machine whose input space upon initialization is the space of datasets $\mathcal X$, and takes in a database $x \in \mathcal X$. An \emph{adversary} is a randomized state machine that takes in an empty string. The view of an adversary captures everything the adversary receives during the execution. 

\begin{definition}[View of the adversary in an interactive mechanism]\label{def.view}
	Let $\cM$ be an interactive mechanism and $\cA$ be an adversary interacting with the mechanism.  $\cA$'s {\em view} of $(\cA, \cM(x))$ is the tuple $\View_\cA( \cA\leftrightarrow \cM(x)) = (r_0,m_1,r_2,m_3,r_4,\ldots)$ consisting of all the messages $m_i$ received by $\cA$ from $\cM$ together with random coins $r_i$ that $\cA$ tosses when computing $m_i$ (i.e. for even $i$, $(s^\cA_{i + 1}, m_i) = \cA(s_i^\cA, m_{i - 1}; r_i)$).
\end{definition}

For shorthand, we will drop the subscript $\calA$ when referring to $\View$ for the rest of this paper, because future references to $\View$ in the rest of this paper always concern the adversary's view.

We can define interactive differential privacy based on measuring the same $(\eps,\delta)$-closeness, as in noninteractive differential privacy, between the views of the adversary on two neighboring datasets.

\begin{definition}[$(\eps, \delta)$-DP interactive mechanisms]\label{def.im_e_d}
    An interactive mechanism $\calM$ is an \emph{$(\eps, \delta)$-differentially private interactive mechanism}, or \emph{I.M.} for short, if for every pair of neighboring datasets $x, x' \in \mathcal X$, every interactive adversary algorithm $\cA$, and every subset of possible views $\mathcal S \subseteq \text{Range}(\texttt{View})$ we have $$Pr[\View(\cA \leftrightarrow \calM(x)] \in \mathcal S) \leq e^\eps Pr[\View(\cA \leftrightarrow \calM(x')) \in \mathcal S] + \delta.$$
    
\end{definition}

To capture variants such as R\'enyi DP and $f$-DP, we will define a broader version of I.M.s based on a generalized notion of differential privacy in Section \ref{sec.formaldef}.

When we consider the composition of interactive mechanisms, it is straightforward to extend the composition theorems for noninteractive mechanisms to the {\em sequential composition} of interactive mechanisms. A subtler case is the \emph{concurrent composition}, in which an adversary can interleave queries to multiple mechanisms concurrently, first studied by Vadhan and Wang \cite{vadhan2021concurrent}. The queries can therefore depend on the answers received from other mechanisms. This is useful in many practical settings. For example, an analyst may run data analyses using multiple interactive analyses on the same dataset simultaneously and the queries in multiple analyses might be correlated. Concurrent composition, formally defined in Section \ref{sec.prelim}, maintains $k$ interactive mechanisms $\calM_1, \ldots, \calM_k$. It is itself an interactive mechanism and the query it received from an adversary is of the form of $(j, q)$, meaning it issues a standard query $q$ to $\calM_j$. Concurrent composition theorems allow us to understand the privacy guarantee of the overall analysis given the privacy guarantees for each interactive analysis when they are executed independently.

Previous work provides concurrent composition theorems for several different types of differential privacy. Vadhan and Wang \cite{vadhan2021concurrent} shows that every composition theorem for noninteractive $\eps$-DP mechanisms extends to concurrent composition for interactive $\eps$-DP mechanisms. Lyu \cite{lyu2022composition} and Vadhan and Zhang \cite{vadhan2022concurrent} generalize this result to $(\eps, \delta)$-DP when $\delta > 0$.

\begin{theorem}[\cite{lyu2022composition, vadhan2022concurrent}]
Suppose that for all noninteractive mechanisms $\mathcal{M}_1, \ldots, \mathcal{M}_k$ such that $\mathcal{M}_i$ is $(\eps_i, \delta_i)$-DP for $i=1, \ldots, k$, their composition $\comp(\mathcal{M}_1, \ldots, \mathcal{M}_k)$ is $(\eps, \delta)$-DP. Then for all interactive mechanisms $\mathcal{M}_1, \ldots, \mathcal{M}_k$ with finite communication\footnote{Their proof relies on an induction argument on the number of messages exchanged, which requires an assumption of \emph{finite communication}.} such that $\mathcal{M}_i$ is $(\eps_i, \delta_i)$-DP for $i=1, \ldots, k$, 
	their concurrent composition $\concomp(\mathcal{M}_1, \ldots, \mathcal{M}_k)$, as defined formally in Algorithm \ref{def.concomp}, is $(\eps, \delta)$-DP.
 \end{theorem}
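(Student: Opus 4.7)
The plan is to proceed by induction on the total number of messages that can be exchanged in the interaction between the adversary $\mathcal{A}$ and $\concomp(\mathcal{M}_1, \ldots, \mathcal{M}_k)$, which is bounded thanks to the finite-communication assumption. The base case of zero messages is trivial, since the adversary's view then consists solely of its own random coins and is independent of the dataset. For the inductive step, I would examine the first query $(j, q_1)$ that $\mathcal{A}$ sends and the corresponding response $a_1$: by the $(\eps_j, \delta_j)$-DP of $\mathcal{M}_j$ applied to the singleton-round adversary issuing only $q_1$, the distribution of $a_1$ is already $(\eps_j, \delta_j)$-close between the neighboring inputs $x$ and $x'$.

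The principal obstacle is what happens after this first round. The updated internal state of $\mathcal{M}_j$ depends on $x$ and on the randomness used in answering $q_1$, so the residual $\mathcal{M}_j$ no longer satisfies $(\eps_j, \delta_j)$-DP in isolation, and naive round-by-round induction on a single mechanism cannot directly recover the full budget. To overcome this, I would not peel off one round at a time; instead, I would package the entire interaction of $\mathcal{M}_i$ with the rest of the system into a single object $T_i$, namely the transcript of queries and answers seen by $\mathcal{M}_i$. Because $\mathcal{M}_i$ is $(\eps_i, \delta_i)$-DP against every interactive adversary, including one whose behavior depends arbitrarily on $x$ through the other mechanisms, each transcript $T_i$ is individually $(\eps_i, \delta_i)$-close between $x$ and $x'$.

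The hardest technical step, which I expect to be the main obstacle, is combining these $k$ per-mechanism closeness guarantees into a joint closeness guarantee matching the conclusion of the hypothesized noninteractive composition theorem. The key observation is that the random coins of $\mathcal{M}_1, \ldots, \mathcal{M}_k$ are independent, so the $k$ couplings between the distributions of $T_i$ on $x$ and on $x'$ that witness $(\eps_i, \delta_i)$-DP can be chosen mutually independently; this brings the structure in line with the non-adaptive composition of $k$ independent noninteractive mechanisms. Formally, I would define noninteractive mechanisms $\tilde{\mathcal{M}}_1, \ldots, \tilde{\mathcal{M}}_k$ such that each $\tilde{\mathcal{M}}_i$ is $(\eps_i, \delta_i)$-DP, apply the hypothesized composition theorem to conclude that $\comp(\tilde{\mathcal{M}}_1, \ldots, \tilde{\mathcal{M}}_k)$ is $(\eps, \delta)$-DP, and then argue that the adversary's view in the true concurrent interaction is a dataset-independent postprocessing of the joint output $(\tilde{\mathcal{M}}_1(x), \ldots, \tilde{\mathcal{M}}_k(x))$. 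Postprocessing invariance of $(\eps, \delta)$-DP then yields the desired bound, with the finite-communication assumption ensuring every object in this construction is well-defined.
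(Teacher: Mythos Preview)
Your proposal contains a genuine gap at the claim that ``each transcript $T_i$ is individually $(\eps_i,\delta_i)$-close between $x$ and $x'$.'' The justification you offer---that $\mathcal{M}_i$ is DP ``against every interactive adversary, including one whose behavior depends arbitrarily on $x$ through the other mechanisms''---is precisely what the DP definition does \emph{not} give you. Differential privacy quantifies only over adversaries that have no direct access to the dataset; but from $\mathcal{M}_i$'s point of view, the composite system (analyst together with $\mathcal{M}_j(x)$ for $j\neq i$) is an adversary whose queries to $\mathcal{M}_i$ are a function of $x$. Hence the per-mechanism transcripts $T_i$ are not guaranteed to be $(\eps_i,\delta_i)$-close across $x,x'$, the couplings you invoke need not exist, and the $T_i$'s are not independent of one another either (they are entangled through the adversary's adaptivity), so the reduction to non-adaptive noninteractive composition does not go through as stated. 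This entanglement is exactly what makes concurrent composition nontrivial.

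The approach actually used in \cite{lyu2022composition,vadhan2022concurrent} (and restated here as Theorem~\ref{thm.reduction}) resolves this by proving a \emph{simulation/reduction} lemma: for each interactive $(\eps_i,\delta_i)$-DP mechanism $\mathcal{M}_i$ and each fixed pair of neighbors $x,x'$, there exist a noninteractive $(\eps_i,\delta_i)$-DP mechanism $\mathcal{N}_i$ and an interactive postprocessing $\mathcal{P}_i$ such that $\mathcal{M}_i(y)$ is perfectly simulated by $\mathcal{P}_i(\mathcal{N}_i(y))$ for $y\in\{x,x'\}$, against \emph{every} adversary. The crucial point is that $\mathcal{N}_i$ and $\mathcal{P}_i$ depend only on $\mathcal{M}_i$ and on the pair $(x,x')$, not on the adversary or on the other mechanisms; this is what lets you replace each $\mathcal{M}_i$ by $\mathcal{P}_i\circ\mathcal{N}_i$ inside the concurrent composition and then view the whole thing as a (dataset-independent) postprocessing of $\comp(\mathcal{N}_1,\ldots,\mathcal{N}_k)$. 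The induction on the number of messages that the footnote alludes to is used to \emph{construct} $\mathcal{N}_i$ and $\mathcal{P}_i$ in this reduction lemma (and it is a nontrivial structural fact about $(\eps,\delta)$-DP/$f$-DP), not to peel off rounds of the global interaction as you initially attempted. Your final paragraph gestures at the right target---finding noninteractive $\tilde{\mathcal{M}}_i$'s and a postprocessing---but the missing ingredient is exactly this simulation lemma; without it you have no candidate for $\tilde{\mathcal{M}}_i$ that is both $(\eps_i,\delta_i)$-DP and adversary-independent.
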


 This composition theorem also extends to $f$-DP \cite{vadhan2022concurrent}. Moreover, Lyu \cite{lyu2022composition} shows that the privacy adds up under concurrent composition for any fixed order of $\alpha>1$ for R\'enyi DP (RDP).

\begin{theorem}[\cite{lyu2022composition}]\label{thm.rdp2}
	For all $\alpha>1$, $k\in \mathbb{N}$, $\eps_1,\ldots, \eps_k>0$, and all interactive mechanisms $\mathcal{M}_1, \ldots, \mathcal{M}_k$ such that $\mathcal{M}_i$ is $(\alpha,\eps_i)$-RDP for $i=1,2 \ldots, k$, 
	their concurrent composition $\concomp(\mathcal{M}_1, \ldots, \mathcal{M}_k)$, as defined formally in Algorithm \ref{def.concomp}, is $(\alpha, \sum_{i = 1}^k \epsilon_i)$-RDP. 
\end{theorem}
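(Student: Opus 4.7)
The plan is to reduce to the two-mechanism case by induction on $k$, using the associativity of concurrent composition: interacting with $\concomp(\cM_1, \ldots, \cM_k)$ produces the same view distribution (for any adversary) as interacting with $\concomp(\cM_1, \concomp(\cM_2, \ldots, \cM_k))$. By the inductive hypothesis the inner composition is $(\alpha, \sum_{i \geq 2} \eps_i)$-RDP as an interactive mechanism, so the $k=2$ case, applied to the pair $\cM_1$ and $\concomp(\cM_2,\ldots,\cM_k)$, delivers the bound $\eps_1 + \sum_{i \geq 2} \eps_i = \sum_i \eps_i$.

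For the base case $k=2$, I would fix an adversary $\cA$ and neighbors $x, x'$ and induct on the total number of messages $T$ (finite by the finite-communication assumption). Each round either has $\cA$ sending a query --- whose conditional distribution given the history depends only on $\cA$'s coins and so contributes zero to the Rényi divergence between the $x$ and $x'$ worlds --- or has some $\cM_j$ returning a response $m_t$. For the latter case I would apply the chain rule for Rényi divergence,
\begin{equation*}
D_\alpha\!\left(P_{m_t, V_{>t}}(x) \,\|\, P_{m_t, V_{>t}}(x')\right) \;\leq\; D_\alpha\!\left(P_{m_t}(x) \,\|\, P_{m_t}(x')\right) + \sup_{m_t} D_\alpha\!\left(P_{V_{>t}\mid m_t}(x) \,\|\, P_{V_{>t}\mid m_t}(x')\right),
\end{equation*}
bounding the first term using $\cM_j$'s RDP guarantee on its first response and the second term by recursing on the strictly shorter remaining interaction.

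The main obstacle is getting the inductive accounting right: after $\cM_j$ produces its first response, I need the residual interaction to still correspond to a concurrent composition of two RDP mechanisms with budgets summing to at most $\eps_1 + \eps_2$. To make this work I would strengthen the inductive hypothesis to range over all interactive mechanisms and all RDP bounds, and use the fact that an $\eps_j$-RDP interactive mechanism can be decomposed, after one round, into a ``first response'' contribution and a continuation state machine $\cM_j'$ whose conditional RDP bounds combine via the (standard, noninteractive) adaptive composition of RDP to at most $\eps_j$. Plugging such a decomposition into the chain rule above makes the per-round contributions from $\cM_j$ telescope across all rounds in which $\cM_j$ speaks to a total of at most $\eps_j$, and summing over $j = 1, 2$ yields $\eps_1 + \eps_2$.

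The subtle step --- and the one I expect to require the most care --- is verifying that this continuation decomposition is well-defined as a mechanism (not tied to a single transcript path), so that the inductive hypothesis truly applies to the residual concurrent composition uniformly over the first response. Once that is in place, the induction closes cleanly and the theorem follows.
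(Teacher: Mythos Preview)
The paper does not prove this theorem: it is quoted from \cite{lyu2022composition} and used as a black box (restated later as Theorem~\ref{thm.concomp_renyi}). So there is no ``paper's own proof'' to compare against; the paper's inductive machinery (Lemma~\ref{lemma.renyi-general-case}) is for the \emph{filter} result and already assumes the $k=2$ concurrent composition bound you are trying to establish here.

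Your reduction from general $k$ to $k=2$ via associativity of $\concomp$ is fine and indeed mirrors the paper's inductive style. The genuine gap is in your base case. The ``fact'' you invoke---that an $\eps_j$-RDP interactive mechanism decomposes into a first-response contribution $\eps_j^{(1)}$ and a continuation whose worst-case conditional RDP is at most $\eps_j - \eps_j^{(1)}$---is not true for arbitrary interactive mechanisms. The R\'enyi chain rule is only an inequality in the direction
\[
D_\alpha(\text{total}) \;\le\; D_\alpha(\text{first response}) + \sup_{a_1} D_\alpha(\text{rest}\mid a_1),
\]
and the right-hand side can strictly exceed the left. Concretely, a mechanism can be exactly $\eps$-RDP overall while having a first round with divergence $\eps^{(1)}$ and a continuation that, conditioned on the worst first response, has divergence strictly greater than $\eps - \eps^{(1)}$. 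So when you peel off one round of $\cM_j$ and recurse, the residual $\cM_j$ is \emph{not} guaranteed to be $(\eps_j - \eps_j^{(1)})$-RDP, and the per-round contributions do not telescope to $\eps_j$. The difficulty you flag (well-definedness of the continuation as a mechanism) is real but secondary; even granting well-definedness, the budget accounting fails.

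Lyu's actual proof of the $k=2$ case does not proceed by a round-by-round chain rule. It instead exploits the product structure of the likelihood ratio $P_x(v)/P_{x'}(v) = L_1(v)\cdot L_2(v)$, where $L_j$ collects the conditional likelihood ratios of all of $\cM_j$'s responses, and bounds $E_{x'}[L_1^\alpha L_2^\alpha]$ by an argument that keeps the two mechanisms' contributions separated without ever conditioning on a single prefix and losing tightness. If you want to repair your approach, you would need to replace the lossy chain rule with this kind of global factorization argument.
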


In all of the above cases, privacy-loss parameters are set upfront prior to data analysis and are fixed for all queries or computations performed on the data. However, in practice, data analysts may not know in advance what they want to do with the data and may want to adaptively choose the privacy-loss parameters of subsequent mechanisms as they go along. This can lead to more efficient use of privacy resources. 
For example, we may wish to spend more privacy-loss budget on fine-grained analysis, and less privacy-loss budget on exploratory analysis. Therefore, allowing \emph{full adaptivity} where not only the mechanisms, but also the privacy-loss parameters themselves and the length of the composition can be chosen adaptively as a function of intermediate analyses, is important in practice. This consideration motivates the study of privacy \emph{filters} and \emph{odometers}, which we discuss in the following section.

\subsection{Odometers and Filters}\label{section.odoms_and_filters}

Rogers, Roth, Ullman, and Vadhan \cite{rogers2021privacy} define two primitives for the adaptive composition of DP mechanisms, privacy \emph{filters} and privacy \emph{odometers}, to allow for the ability to track privacy loss during an interaction. Their definitions were given specifically for $(\eps, \delta)$-DP; here we follow L\'ecuyer \cite{lecuyer2021practical} and work with a more general formalism that applies to arbitrary privacy measures, including $f$-DP and R\'enyi-DP. In particular, our definition of privacy \emph{odometers} is different from the odometers previously defined in \cite{rogers2021privacy}.

A privacy \emph{filter} is a mechanism that halts computation on a dataset once a preset privacy-loss budget is exceeded. It takes a global privacy-loss budget as an input and is equipped with a continuation rule that halts computation whenever the budget is exceeded. At each round, the continuation rule takes all privacy-loss parameters up to the current round, and outputs either $\texttt{continue}$ or $\texttt{halt}$. If $\texttt{continue}$, the mechanism answers the query with the current privacy parameter. Once the mechanism outputs \texttt{halt}, no further computation is allowed. (An equivalent alternative is to refuse to answer the correct query, but allow the analyst to try again with new queries. However, the halt formulation is more convenient for presentation in this paper. For notational convenience in this paper, our algorithm pseudocode will have the filter go into a looping state that always returns the same state and a dummy message once \texttt{halt} is reached.) The filter guarantees that the interaction is differentially private according to the desired privacy-loss budget.

Existing work on privacy filters and odometers has only studied cases in which the adversary adaptively chooses the next mechanism to compose with, but the mechanisms themselves are non-interactive (so there is no notion of concurrent composition, since each mechanism returns only one answer).

\begin{definition}[$\calF$-filtered composition of noninteractive $(\eps, \delta)$-DP mechanisms ($\calF \textit{-Filt(NIM)}$)]
Let $\calF$ be a continuation rule that takes in a sequence of privacy-loss parameters $(\eps_1,\delta_1), \ldots,$ and a target privacy-loss budget $(\eps, \delta)$, and maps to a binary decision: $\calF: (\mathbb{R}_{\ge 0})^2 \times (\mathbb{R}_{\ge 0})^2  \to  \{ 0, 1\}$, where $1$ means $\texttt{continue}$ and $0$ means $\texttt{halt}$, and $\mathbb R^*_{\ge 0} = \cup_{k = 0}^\infty \mathbb R^k$. The \emph{$\calF$-filtered composition of noninteractive mechanisms}, denoted as $\calF \textit{-Filt(NIM)}$, is an interactive mechanism. $\calF(\cdot; (\eps, \delta))$-$Filt(NIM)(s, m)$ is defined as follows:
\begin{enumerate}
    \item If $m = \lambda$, let $s \gets x$, and set $s' \gets (s, [])$, where $[]$ is an empty list. Return $(s', \lambda)$, where $\lambda$ is an empty string.
    \item Parse $s$ as $s= (x, [(\eps_1, \delta_1), \dots, (\eps_{k}, \delta_{k})])$
    \item If $\calM_{k + 1}$ is $(\eps_{k + 1}, \delta_{k + 1})$-DP and $\calF((\eps_1, \delta_1), \dots, (\eps_{k + 1}, \delta_{k + 1}); (\eps, \delta)) = 1$:  
    \begin{enumerate}
        \item Let $s' \gets (x, [(\calM_1, (\eps_1, \delta_1)), \dots, (\calM_{k + 1}, (\eps_{k + 1}, \delta_{k + 1}))])$
        \item Let $m' \gets \calM_{k + 1}(x)$
    \end{enumerate}
    \item Else, let $s' \gets \texttt{Halt}$, and let $m'$ store the reason for halting
    \item Return $(s', m')$
\end{enumerate}
\end{definition}

Note that at each round $k$ of computation, whether the mechanism $\calM_k$ is $(\eps_k, \delta_k)$-DP is something the implementation needs to be able to verify, e.g. by having verified privacy-loss parameters attached to every mechanism. 

\begin{definition}[Valid $(\eps,\delta)$-DP filter for noninteractive mechanisms]
We say a continuation rule $\calF$ is a 
 valid $(\eps,\delta)$-DP NIM-filter for noninteractive mechanisms if for every pair of
$(\eps, \delta)$,
$\calF(\cdot; (\eps, \delta)) \textit{-Filt(NIM)}(\cdot)$ is an $(\eps,\delta)$-DP
interactive mechanism.
\end{definition}

We define privacy \emph{odometer} as a mechanism that allows the analyst to keep track of the privacy loss at each step of computation.

\begin{definition}[Odometers]\label{def.odom_general}
    An odometer $\calO: \zo^* \times \zo^* \to \zo^* \times \zo^*$ is an I.M. that supports a specially-designated \emph{deterministic privacy-loss query}, denoted \texttt{privacy\_loss}, that is deterministic and does not change state. Upon receiving a privacy-loss query, $\calO$ will output the upper bound of the privacy loss at that point in computation.  
\end{definition}

We note that our terminology is slightly different from previous work by Rogers et al. \cite{rogers2021privacy}, where they refer to the privacy-loss accumulator as the odometer, whereas our odometer is a mechanism that has a specially-designated privacy loss query. When posing this query, the mechanism outputs the current privacy loss up to that point. 

We will stipulate that \texttt{privacy\_loss} queries are encoded as binary strings in a fixed and recognizable way, such as using all strings that begin with a $1$ to be \texttt{privacy\_loss} queries, and all strings that begin with a $0$ to be ordinary queries that can then be further parsed. 

To measure the privacy loss at each round, we use \emph{truncated view} of an adversary $\mathcal A$ interacting with an odometer $\calO$ as defined in Definition \ref{def.truncview}.
  
\begin{definition}[View between adversary $\cA$ with approximate-DP odometer $\calO$ truncated at $(\eps, \delta)$]\label{def.truncview} Given a privacy-loss parameter $(\eps, \delta)$, an adversary $\cA$, and approximate-DP odometer $\calO$, $\texttt{Trunc}_{(\eps,\delta)}(\texttt{View}(\calA \leftrightarrow \mathcal O(x))$ is constructed as follows: 

\begin{enumerate}
    \item Initialize $s^\cM_1 \gets x$, $q_0 \gets \lambda$, $s_1^\cA \gets \lambda$, where $\lambda$ is the empty string. 
    \item Repeat the following for $i=1,2,\ldots$.
        \begin{enumerate}
            \item If $i$ is odd, let  $(s^\calM_{i+1}, a_{i}) \gets \mathcal O(s^\calM_{i}, q_{i - 1})$, and let $(s_i^{\calM},(\eps_i, \delta_i)) \gets \mathcal O(s_i^{\calM}, \texttt{privacy\_loss}) $, $s^\cA_{i + 1} \gets s^\cA_i$, with the assumption that \texttt{privacy\_loss} queries do not change state. 
            \item If $\eps_i \geq \eps$ or $\delta_i \geq \delta$, then exit loop.
            \item If $i$ is even, let $(s^\calA_{i+1}, q_{i}) \gets \calA(s_{i }^\calA, a_{i - 1}; r^\calA_{i})$ and $s^\cM_{i + 1} \gets s^\cM_i$.  
            \item If $a_i = \texttt{halt}$, then exit loop.
        \end{enumerate}
    \item Return $\texttt{Trunc}_{(\eps,\delta)}(\texttt{View}(\calA \leftrightarrow \mathcal O(x)) = (r^\calA_0, a_1, \dots, r^\calA_{i-2}, a_{i - 1})$, where $i$ is the index that the for loop was halted on.
\end{enumerate}
\end{definition}

Using truncated view for analyzing odometers is first introduced by L\'ecuyer \cite{lecuyer2021practical} for RDP. We extend their definition to a general notion of truncated view, which allows us to quantify privacy loss with all other DP variants, where we can simply replace $(\eps,\delta)$ with other privacy-loss parameters for other DP variants. In contrast, the previous definition of odometer in Rogers et al. \cite{rogers2021privacy} only holds for $(\eps,\delta)$-DP.

Then the privacy guarantee of an odometer is defined as measuring the closeness between the distributions of the truncated views of an adversary. 

\begin{definition}[Valid approximate-DP odometer for noninteractive mechanisms]\label{def.eps-delta-odo}
   We say $\calO$ is a valid approximate-DP \emph{odometer} if for every pair of $(\eps,\delta)$, every adversary $\cA$, and every pair of adjacent datasets $x, x'$, 
   \begin{align*}
       D(& \texttt{Trunc}_{(\eps,\delta)}(\View(\cA\leftrightarrow \calO (x))) || \\
   & \texttt{Trunc}_{(\eps,\delta)}(\View(\cA \leftrightarrow \calO (x')))) \preceq (\eps,\delta),
   \end{align*}
   where $\preceq$ is the partial order defined by $(\eps_1, \delta_1)\preceq (\eps_2, \delta_2)$ iff $\eps_1\le \eps_2$ and $\delta_1 \le \delta_2$.
\end{definition}

A common example of an odometer is an I.M. equipped with a privacy-loss accumulator $\calG$, which is a function that gives an upper bound on the accumulated privacy loss at each step. 

\begin{definition}[Valid approximate-DP privacy-loss accumulator for noninteractive mechanisms]\label{def.eps-delta-g}
   We say $\calG$ is a \emph{valid approximate-DP NIM-privacy-loss accumulator} if $\NIO$ is a valid approximate-DP odometer.
\end{definition}

We term these odometers built from privacy-loss accumulators $\calG$-odometers. (However, a more general odometer need not be based on a privacy-loss accumulator.) Below, we provide a definition sketch for a $\calG$-odometer, which is further defined in Section \ref{sec.odometer}.

\begin{definition}[Definition sketch for $\calG$-odometer for composition of noninteractive mechanisms ($\calG \textit{-Odom(NIM)}$)]
Let $\calG$ be a privacy-loss accumulator that takes privacy-loss parameters $(\eps_1,\delta_1), \ldots, (\eps_k, \delta_k)$ in a sequence, and maps to a global privacy loss $(\eps, \delta)$, so $\calG: (\mathbb{R}^2_{\ge 0})^k \to (\mathbb{R}^2_{\ge 0})$, for every $k=1, 2 ,\ldots$. A $\calG$-odometer for composition of noninteractive mechanisms is denoted as $\calG \textit{-Odom(NIM)}$. $\calG$-$Odom(NIM)(s, m)$ is executed as follows: 
\begin{enumerate}
    \item If $m = \lambda$, let $x \gets s$, set $s' \gets (x, [])$, where $[]$ is an empty list, and return $(s', \lambda)$ 
    \item Parse $s$ as $s = (x, [(\eps_1, \delta_1), \dots, (\eps_k, \delta_k)])$
    \item If $m$ is of the form $(\calM_{k + 1}, (\eps_{k + 1}, \delta_{k + 1}))$ and $\calM_{k + 1}$ is $(\eps_{k + 1}, \delta_{k + 1})$-DP:
    \begin{enumerate}
        \item Let $s' \gets (x, [(\calM_1, (\eps_1, \delta_1)), \dots, (\calM_{k + 1}, (\eps_{k + 1}, \delta_{k + 1}))])$
        \item Let $m' \gets \calM_{k + 1}(x)$
    \end{enumerate}
    \item If $m = \texttt{privacy\_loss}$, then let $s' \gets s$ and $m' \gets \calG((\eps_1, \delta_1),\ldots, (\eps_k, \delta_k))$
    \item Return $(s', m')$.
\end{enumerate}
\end{definition}

A number of previous works construct privacy filters and odometers for noninteractive mechanisms. The original odometer definition proposed by Rogers et al.~\cite{rogers2021privacy} was defined specifically for $(\eps,\delta)$-DP and requires a simultaneous guarantee about the privacy loss being bounded at all points \emph{in time} when composing with adaptive privacy-loss parameters. Specifically, the privacy loss is defined as $\texttt{Loss}(v)=\log \left( \frac{\Pr[\View(\calA \leftrightarrow \calO(x)) =v]}{\Pr[\View(\calA \leftrightarrow \calO(x')) = v]}  \right)$, and the odometer is defined as follows.

\begin{definition}[Previous definition of Odometers in \cite{rogers2021privacy}] \label{def:RRUV-odometer}
      We say $\calG$ is a valid approximate DP \emph{odometer} if for every adversary $\cA$ in $k$-fold composition (as in, composition of $k$ mechanisms), with adaptive privacy-loss parameters, and every pair of adjacent datasets $x, x'$, the following holds with probability at most $\delta_g$ over $v\gets \View(\calA \leftrightarrow \calO(x))$
      $$\abs{\texttt{Loss}(v)} > \calG(\eps_1,\delta_1, \ldots, \eps_k,\delta_k).  $$
\end{definition}
Note that this definition yields a form of approximate DP in its final guarantee, because there is always a $\delta_g$ probability of failure.

Whitehouse, Ramdas, Rogers, and Wu~\cite{whitehouse2023fully} gave constructions of odometers and filters that quantitatively improve on the results presented in Rogers et al \cite{rogers2021privacy}, including when the composed mechanisms satisfy zCDP or R\'enyi DP (but their odometer only guarantees the same, $(\eps,\delta)$ property of Definition~\ref{def:RRUV-odometer}). 
For providing R\'enyi DP guarantees when composing R\'enyi DP mechanisms, Feldman and Zrnic \cite{feldman2022individual} constructed privacy filters.
Definitions and constructions of privacy odometers for R\'enyi DP were given by L\'ecuyer~\cite{lecuyer2021practical}. 
Our general definition of odometers follows L\'ecuyer's, which we find more natural and general, since it is applicable to all forms of DP.  This definition can be thought of as being in analogy to the definition of a $p$-value in statistics.  A $p$-value has an interpretation if we set a significance level $\alpha$ before we observe our random events, and then reject the null hypothesis if the $p$-value is larger than $\alpha$.  Similarly the privacy-loss reported by our odometer definition has an interpretation if we set a privacy-loss threshold $(\eps,\delta)$ before the mechanism is executed and halt when the odometer's privacy-loss query is not going to be smaller than $(\eps,\delta)$.  

\subsection{Our Results on Concurrent Filters and Odometers}\label{section.contrib}

In this paper, we consider privacy filters and odometers for composing \emph{interactive} mechanisms. Previous work \cite{vadhan2021concurrent, lyu2022composition, vadhan2022concurrent} have studied the concurrent composition of interactive mechanisms with pre-specified privacy-loss parameters, and in this paper we extend it to the filter and odometer settings where the privacy-loss parameters, and not just the mechanisms, can be chosen adaptively.
We analogously define $\calF$-filtered sequential composition of interactive mechanisms, or $\calF \textit{-FiltSeq(IM)}$ for short, and $\calF$-filtered concurrent composition of interactive mechanisms, or $\calF \textit{-FiltCon(IM)}$ for short. The continuation rule $\calF$ here determines whether the adversary can create new interactive mechanisms with adaptively-chosen privacy-loss parameters. Similarly, we define $\calG$-odometer for the sequential composition of interactive mechanisms, or $\calG \textit{-OdomSeq(IM)}$ for short, and $\calG$-odometer for the concurrent composition of interactive mechanisms, or $\calG \textit{-OdomCon(IM)}$ for short. It should be noted that in both our filters and odometers, the ``privacy-loss budget" is paid at the launch of each additional interactive mechanism. In particular, any \texttt{privacy\_loss} queries asked to the odometer between mechanism launches will return the same value.

As composition theorems for noninteractive mechanisms typically can be easily extended to the sequential composition of interactive mechanisms, we focus on the case of concurrent composition, as defined formally in Definition \ref{def.concomp}. 
In this setting, the adversary can interleave queries to existing interactive mechanisms. In the filter and odometer case, the adversary can also create new interactive mechanisms with adaptively-chosen privacy-loss parameters (for filters, with the added stipulation that the privacy loss with the new interactive mechanism added does not exceed the privacy budget). The complex dependence between the adversary's interacting with the different mechanisms makes it non-trivial to extend filters and odometers to this case. Nevertheless, we prove that every valid privacy filter and odometer for noninteractive mechanisms extends to a privacy filter and odometer for interactive mechanisms for $(\eps,\delta)$-DP, $f$-DP, and RDP.

\begin{definition}[Valid $(\eps,\delta)$-DP filter for interactive mechanisms]
We say a continuation rule $\calF$ is a 
 \emph{valid $(\eps,\delta)$-DP IM-filter} for interactive mechanisms if 
$\calF(\cdot; (\eps, \delta)) \textit{-FiltCon(IM)}$ is an $(\eps,\delta)$-DP
interactive mechanism.
\end{definition}

\begin{definition}[Valid ($\eps, \delta$)-DP privacy-loss accumulator for interactive mechanisms]
   We say $\calG$ is a valid \emph{approx-DP IM-privacy-loss accumulator} if $\calG$-$OdomCon(IM)$ is a valid approx-DP odometer.
\end{definition}

\begin{theorem}[$(\eps,\delta)$-DP filters and privacy-loss accumulators]\label{thm.main}
\hfill
\begin{enumerate}
    \item $\calF$ is a valid $(\eps,\delta)$-DP NIM-filter if and only if $\calF$ is a valid $(\eps,\delta)$-DP IM-filter. 
    \item $\calG$ is a valid ($\eps, \delta$)-DP NIM-privacy-loss accumulator if and only if $\calG$ is a valid ($\eps, \delta$)-DP IM-privacy-loss accumulator.
\end{enumerate}
\end{theorem}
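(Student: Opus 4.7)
The ``only if'' direction of both parts is immediate: a noninteractive mechanism is nothing but a one-round interactive mechanism that halts after its single response, so an execution of $\NIF$ (respectively $\NIO$) can be viewed as an execution of $\CF$ (respectively $\CO$) in which the adversary issues exactly one query to each launched mechanism before moving on. Consequently, validity as an IM-filter trivially implies validity as a NIM-filter, and likewise for accumulators.

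For the ``if'' direction I focus on filters; the accumulator case is analogous. Fix an adversary $\calA^*$ against $\CF$ and neighboring datasets $x, x'$. The goal is to show that $\View(\calA^* \leftrightarrow \CF(x))$ and $\View(\calA^* \leftrightarrow \CF(x'))$ are $(\eps,\delta)$-indistinguishable. The key structural observation, which drives the reduction, is that $\calF$'s continuation decisions depend only on the sequence $(\eps_1,\delta_1),(\eps_2,\delta_2),\ldots$ of parameters declared at each launch and not on whether the launched $\calM_j$'s are interactive. In particular, on any transcript produced by $\CF$, the halting behavior of $\calF$ coincides with its behavior inside $\NIF$ on the same parameter sequence, so once we lift the closeness of views from NIM to IM we can inherit the NIM-filter guarantee directly.

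I would lift this closeness by induction on the total number of messages exchanged, which is finite by the finite-communication assumption. In the inductive step, I peel the last message off the transcript. If the last message is a continuation query to an already-launched $\calM_j$, I use $\calM_j$'s $(\eps_j,\delta_j)$-DP guarantee against the ``effective'' adversary that runs $\calA^*$ in parallel with the other IMs, whose privacy contributions have already been charged on the prefix, and appeal to the inductive hypothesis on that prefix. If the last message is a new launch with parameters $(\eps_{k+1},\delta_{k+1})$, the inductive hypothesis combined with the NIM-validity of $\calF$ on the resulting parameter sequence yields $(\eps,\delta)$-closeness of the full view.

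The main obstacle is the concurrent interleaving: the next query $\calA^*$ sends to some $\calM_j$ can depend on responses just received from a different $\calM_i$, so the peeling has to respect these cross-mechanism dependencies. I would handle this by importing the two-point (bifurcation) decomposition used in the fixed-parameter concurrent composition theorem of Vadhan and Zhang, in which each $(\eps_j,\delta_j)$-DP mechanism is split into a dominant $\eps_j$-DP component and an exceptional component of mass at most $\delta_j$; the induction then proceeds cleanly on the dominant components while the exceptional masses aggregate additively, exactly matching the slack that $\calF$'s NIM-validity allows. Part~2 follows from the same argument with $\View$ replaced by $\texttt{Trunc}_{(\eps,\delta)}(\View)$: the truncation threshold plays the role of the filter's budget and the accumulator $\calG$ plays the role of $\calF$'s continuation rule, so the induction and bifurcation carry through verbatim.
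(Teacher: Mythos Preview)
Your ``if''/``only if'' labels are swapped (in ``NIM-filter iff IM-filter'', the ``only if'' clause is NIM$\Rightarrow$IM, the hard direction), but you clearly identify which implication is trivial, so this is cosmetic.

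The substantive gap is in the hard direction. Your plan is to split each $(\eps_j,\delta_j)$-DP interactive mechanism into a pure-$\eps_j$-DP dominant part plus a $\delta_j$-mass exceptional part, run a per-message induction on the dominant parts, and let the exceptional masses add. Even granting that the pure-DP concurrent induction goes through, this yields at best that $\CF$ is $(\sum_j \eps_j,\sum_j \delta_j)$-DP, i.e.\ basic composition. The theorem requires something much stronger: \emph{whatever} $(\eps,\delta)$ the NIM-validity of $\calF$ certifies---advanced composition, optimal composition, or any other rule---that \emph{same} $(\eps,\delta)$ must hold for $\CF$. Your bifurcation never touches $\calF$ except to read off the halting time; nothing in the outline transfers an arbitrary noninteractive composition bound to the concurrent interactive setting. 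The sentence ``exactly matching the slack that $\calF$'s NIM-validity allows'' is precisely the step that does not go through. (Incidentally, the split-into-pure-plus-$\delta$ idea is the Vadhan--Wang technique, not Vadhan--Zhang's, and its failure to handle $\delta>0$ tightly is exactly why Vadhan--Wang obtained the full result only for pure DP.)

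The paper's route is a genuine reduction rather than a direct privacy-loss analysis. For each fixed neighboring pair $(x,x')$, Theorem~\ref{thm.fdp_post} (the Vadhan--Zhang simulation theorem, applied in the $f$-DP framework of which $(\eps,\delta)$-DP is a special case) writes every interactive $(\eps_j,\delta_j)$-DP mechanism $\calM_j$ as an interactive postprocessing $\cP_j$ of a \emph{noninteractive} $(\eps_j,\delta_j)$-DP mechanism $\mathcal{N}_j$. One then assembles a single PIM postprocessing $\cP$ so that $\CF \equiv \cP\circ\NIF$ on $\{x,x'\}$, and invokes closure of DP under interactive postprocessing (Theorem~\ref{thm.pim_privacy} and Corollary~\ref{corollary.pim_privacy}). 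This inherits the NIM-filter's $(\eps,\delta)$ verbatim, with no message-by-message induction and no bifurcation. Part~2 is then derived from Part~1 via the filter--accumulator bijection of Lemma~\ref{lemma.f_odom}, not by rerunning the argument on truncated views.
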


\begin{theorem}[$f$-DP filters and privacy-loss accumulators]\label{thm.main.fDP}
\hfill
\begin{enumerate}
    \item $\calF$ is a valid $f$-DP NIM-filter if and only if $\calF$ is a valid $f$-DP IM-filter. 
    \item $\calG$ is a valid $f$-DP NIM-privacy-loss accumulator if and only if $\calG$ is a valid $f$-DP IM-privacy-loss accumulator. 
\end{enumerate}
\end{theorem}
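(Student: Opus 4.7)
The direction IM-valid $\Rightarrow$ NIM-valid in each part is immediate: any noninteractive mechanism is a one-round interactive mechanism that outputs $\calM(x)$ and halts, so IM-validity specializes to NIM-validity. We focus on the reverse direction.

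For part (1), the plan is to mirror the proof of Theorem~\ref{thm.main} with $(\eps,\delta)$-DP replaced by $f$-DP throughout, using three ingredients: (a) the concurrent composition theorem for $f$-DP interactive mechanisms with fixed trade-off functions \cite{vadhan2022concurrent}, (b) adaptive composition and post-processing of $f$-DP \cite{dong2019gaussian}, and (c) finite communication to justify induction. Fix an adversary $\calA$, a neighboring pair $x,x'$, and suppose $\calF$ is a valid $f$-DP NIM-filter. We induct on the number $k$ of interactive mechanisms the filter has admitted. Decompose the view through the launch and interaction with $\calM_k$ into two phases: the pre-launch interleaved interaction with $\calM_1,\ldots,\calM_{k-1}$, which by the fixed-parameter concurrent composition theorem (applied conditionally on admission) yields trade-off function $f_1 \otimes \cdots \otimes f_{k-1}$; and the post-launch incremental contribution of admitting and running $\calM_k$ with declared trade-off function $f_k$, which by NIM-validity of $\calF$ (applied to the noninteractive event of generating $\calM_k$'s initial state and first response) is $f_k$-DP. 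Adaptive $f$-DP composition then stitches these into the bound certified by $\calF$ on $(f_1,\ldots,f_k)$.

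For part (2), the reduction to part (1) is via a family of induced filters. For each target trade-off function $f_0$, define $\calF_{f_0}$ to be the continuation rule that admits the next declared $(\calM,f)$ iff $\calG(f_1,\ldots,f_k,f)$ is at least as strong as $f_0$. Then $\texttt{Trunc}_{f_0}(\View(\calA \leftrightarrow \calG\text{-OdomCon(IM)}(x)))$ is a post-processing of $\View(\calA \leftrightarrow \calF_{f_0}\text{-FiltCon(IM)}(x))$, and vice versa, so $\calG$ is a valid $f$-DP IM-accumulator iff every $\calF_{f_0}$ is a valid $f_0$-DP IM-filter, iff (by part (1)) every $\calF_{f_0}$ is a valid $f_0$-DP NIM-filter, iff $\calG$ is a valid $f$-DP NIM-accumulator.

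The main obstacle is the first phase of the inductive step for part (1): the fixed-parameter concurrent composition theorem applies to $f$-DP interactive mechanisms with \emph{pre-committed} trade-off functions, whereas here $(\calM_i,f_i)$ is chosen adaptively by $\calA$ from the answers already received. The resolution is to condition on the prefix of the view that produces a particular admission sequence: conditioned on this event, each $\calM_i$ is genuinely $f_i$-DP, because the filter verifies this declaration at admission, so the fixed-parameter theorem applies pointwise. Averaging over prefixes via adaptive $f$-DP composition then yields the global bound. This conditioning (together with the verifiability requirement on declared trade-off functions) is the essential technical ingredient that lifts the $(\eps,\delta)$-DP argument of Theorem~\ref{thm.main} to the $f$-DP setting.
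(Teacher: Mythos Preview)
Your part~(2) reduction via induced filters and truncated-view post-processing is essentially the paper's Lemma~\ref{lemma.f_odom} and is correct. Part~(1), however, has a genuine gap and diverges from the paper's route.

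First, the dependency is backwards: in the paper, Theorem~\ref{thm.main} is derived as a \emph{corollary} of Theorem~\ref{thm.main.fDP} (since $(\eps,\delta)$-DP is the special case $f=f_{\eps,\delta}$), not a template to mirror. The paper's proof of part~(1) uses neither induction on $k$ nor the fixed-parameter concurrent composition theorem. It rests on Theorem~\ref{thm.reduction} (restated as Theorem~\ref{thm.fdp_post}): for a fixed neighboring pair $x,x'$, every interactive $f_j$-DP mechanism $\calM_j$ can be realized as an interactive post-processing $\cP_j$ of a \emph{noninteractive} $f_j$-DP mechanism $\mathcal N_j$. One then builds a single PIM $\cP$ that, whenever the adversary asks $\CF$ to launch $(\calM_j,f_j)$, forwards $(\mathcal N_j,f_j)$ to $\NIF$ and thereafter answers all interactive queries to $\calM_j$ locally via $\cP_j$. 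This gives $\CF \equiv \cP \circ \NIF$ on $\{x,x'\}$; since $\NIF$ is $d$-DP by the assumed NIM-validity of $\calF$, Corollary~\ref{corollary.pim_privacy} concludes.

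Your phase decomposition does not go through: after $\calM_k$ is launched, the adversary continues to interleave queries among \emph{all} of $\calM_1,\ldots,\calM_k$, so the ``post-launch incremental contribution'' is not $f_k$ but again a $k$-fold concurrent composition --- nothing has been reduced. Your argument also never actually invokes the hypothesis: the line ``by NIM-validity of $\calF$ \ldots is $f_k$-DP'' conflates the filter's global budget guarantee on $\NIF$ with the declared parameter of a single mechanism. The inductive strategy you sketch is the one the paper uses for RDP (Lemma~\ref{lemma.renyi-general-case}), and it works there only because the remaining budget after spending $\eps_1$ is simply $\eps-\eps_1$, allowing one to wrap $\calM_2,\ldots,\calM_K$ inside a single $(\eps-\eps_1)$-RDP placeholder. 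The paper explicitly notes this additivity fails for $f$-DP and $(\eps,\delta)$-DP, which is precisely why the reduction-to-noninteractive route via Theorem~\ref{thm.fdp_post} is used instead.
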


\begin{theorem}[RDP filters and privacy-loss accumulators]\label{thm.main.RDP} 
\hfill
\begin{enumerate}
\item   $\calF(\eps_1, \eps_2, \ldots;\eps)= \mathbb{I}(\sum_i \eps_i \le \eps)$ is a valid $(\alpha, \eps)$-RDP IM-filter for every fixed order of $\alpha>1$.
\item $\calG(\eps_1, \dots, \eps_i) = \sum_{i = 1}^i \eps_i$ is a valid $(\alpha, \eps)$-RDP IM-privacy loss accumulator for every fixed order of $\alpha>1$. 
\end{enumerate}
\end{theorem}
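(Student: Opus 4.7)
The plan is to first reduce Part~1 to Part~2. For the accumulator $\calG(\eps_1,\ldots,\eps_k)=\sum_i \eps_i$ and the filter $\calF(\eps_1,\eps_2,\ldots;\eps) = \mathbb{I}(\sum_i \eps_i \le \eps)$, the view of $\calA$ interacting with $\CF$ on input $x$ is distributionally identical (up to boundary conventions) to $\texttt{Trunc}_{(\alpha,\eps)}(\View(\calA \leftrightarrow \CO(x)))$, since both interactions halt at precisely the first round where the running sum of per-mechanism parameters would cross $\eps$. Thus an $(\alpha,\eps)$-RDP bound on the truncated $\calG$-odometer view immediately yields the filter bound.

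For Part~2, I would adapt the concurrent-composition proof of Lyu's Theorem~\ref{thm.rdp2} (fixed privacy parameters) to handle adaptive creation of mechanisms with adaptively chosen parameters, by induction on the round index of the truncated view. Fix $\calA$ and neighbors $x,x'$, and let $V,V'$ denote the corresponding truncated views. The main tool is the chain rule for R\'enyi divergence of fixed order $\alpha > 1$:
\[
D_\alpha(P_{1:k+1} \,\|\, Q_{1:k+1}) \le D_\alpha(P_{1:k} \,\|\, Q_{1:k}) + \sup_{v_{\le k}} D_\alpha\bigl(P_{k+1}(\cdot\mid v_{\le k}) \,\|\, Q_{k+1}(\cdot\mid v_{\le k})\bigr).
\]
Every round of $\CO$ falls into one of three cases: (i) a \texttt{privacy\_loss} response, which is a deterministic function of $v_{<k}$ and contributes $0$; (ii) the adversary spawning a new interactive mechanism $\calM_{k+1}$ with parameter $\eps_{k+1}$, where only bookkeeping is updated and no sensitive message is produced, again contributing $0$; (iii) a response from an existing mechanism $\calM_j$, whose contribution is charged against $\calM_j$'s RDP budget $\eps_j$. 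For case~(iii), I would invoke Lyu's per-mechanism absorption argument: conditioning on all messages that do not involve $\calM_j$, together with the adaptive creation record, turns the interleaved interaction into a plain two-party interaction between $\calM_j$ and a folded adversary, so that $\calM_j$'s own $(\alpha,\eps_j)$-RDP guarantee directly bounds the sum of $\calM_j$'s round-contributions by $\eps_j$.

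The hard part will be making the per-mechanism absorption rigorous when the identities, privacy parameters, and very existence of the other mechanisms are all adaptively chosen by $\calA$. The core facts one needs are that the random coins of distinct mechanisms are independent, and that all creation and routing decisions are deterministic functions of $\calA$'s public view, so conditioning on them and on the transcripts with other mechanisms yields a valid (randomized) interactive adversary from $\calM_j$'s point of view. Once this is established, the chain rule telescopes to $D_\alpha(V\|V') \le \sum_i \eps_i$ where the sum is over all mechanisms created within the truncated view, and by construction of $\texttt{Trunc}_{(\alpha,\eps)}$ this quantity is at most $\eps$, finishing Part~2 and, via the reduction above, Part~1.
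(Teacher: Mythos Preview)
Your approach differs from the paper's in two ways. First, you reverse the order: you prove the odometer bound directly and derive the filter, whereas the paper proves the filter first (Lemma~\ref{lemma.renyi-general-case} and Theorem~\ref{thm.renyi-general-case}) and then obtains the odometer via Lemma~\ref{lemma.f_odom}. This is harmless; the equivalence goes both ways.

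Second, and more substantively, the paper never reopens Lyu's analysis. It uses Theorem~\ref{thm.rdp2} as a black box only for $K=2$ mechanisms and then inducts on $K$ via interactive postprocessing (Theorem~\ref{thm.pim_privacy}). For a deterministic adversary the first mechanism $\calM_1$ and its parameter $\eps_1$ are fixed before any randomness appears, so $\calF_2\text{-}FiltCon(IM)$ is a postprocessing of $\concomp(\calM_1,\mathcal{U}_{\alpha,\eps-\eps_1})$, which is $\eps$-RDP$_\alpha$ by Theorem~\ref{thm.rdp2}. For $K>2$, one exhibits $\calF_K\text{-}FiltCon(IM)$ as a postprocessing of $\calF_2\text{-}FiltCon(IM)$ whose second slot holds $\calF_{K-1}\text{-}FiltCon(IM)$ with budget $\eps-\eps_1$; a separate limit argument (Lemma~\ref{lemma.infinite-divergence}) then extends to unbounded $K$. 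All of the adaptive-parameter difficulty is thus absorbed into a single invocation of Theorem~\ref{thm.rdp2} plus closure under interactive postprocessing, which is what the paper's PIM framework provides.

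Your direct route has a gap as stated. The chain rule you write bounds $D_\alpha(V\|V')$ by a telescoping sum of per-round suprema $\sup_{v_{\le k}} D_\alpha(P_{k+1}(\cdot\mid v_{\le k})\|Q_{k+1}(\cdot\mid v_{\le k}))$. You then assert that the terms coming from rounds where $\calM_j$ responds sum to at most $\eps_j$ via the folded-adversary reduction. But that reduction only bounds the \emph{joint} R\'enyi divergence of all of $\calM_j$'s messages given the rest, not the sum of per-round worst-case conditional divergences; for an interactive mechanism sending several correlated messages, the sum of round-wise sups can strictly exceed the joint divergence. Lyu's actual argument does not iterate the sup-form chain rule; it works with the $\alpha$-moment directly and uses independence of the mechanisms' coins to obtain a product decomposition. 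Carrying that product decomposition through when both the factors $\calM_j$ and their budgets $\eps_j$ are themselves random functions of earlier outputs is precisely the new content, and your sketch does not explain how the product structure survives this adaptivity. The paper's inductive reduction sidesteps the issue entirely by never looking inside Lyu's proof.
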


These theorems offer strong theoretical foundations for enabling {\em full adaptivity} in composing differentially private interactive mechanisms. Firstly, they allow data analysts to adaptively choose privacy loss parameters as well as the subsequent interactive analyses. By adjusting the strength of the privacy guarantee to reflect the actual needs of the analysis as they go along, this important feature allows us to optimize the trade-off between privacy and utility. Second, our results are particularly useful for DP libraries. For example,  OpenDP\footnote{\href{https://opendp.org}{https://opendp.org}} and Tumult Labs\footnote{\href{https://www.tmlt.io}{https://www.tmlt.io}} use interactive mechanisms as a core abstraction for building differentially-private algorithms. Prior to our work, supporting adaptive selection of privacy-loss parameters has necessitated enforcing sequentiality on the use of interactive mechanisms. This restriction makes the libraries less user-friendly, as we cannot allow analysts or even DP programs interact with multiple interactive mechanisms simultaneously. Therefore, there is a gap between the existing and desired functionality. It also increases complexity in the libraries, necessitating a control system to prevent interleaving queries. 
Given the results in our paper, we can now remove this interleaving restriction, while maintaining the {\em full adaptivity} feature. We provide further discussion and our implementation for privacy filters and odometers for interactive mechanisms in Section \ref{sec.implement}.

\subsection{Proof Strategy}

In this section, we explain our proof strategy, and we defer the detailed proof to Section \ref{sec.fDP} and Section \ref{sec.RDP}. 

To prove the above theorems, our strategy is to leverage interactive postprocessing from interactive mechanisms to interactive mechanisms. We note that previous work \cite{vadhan2022concurrent} only considers interactive postprocessing from non-interactive mechanisms to interactive mechanisms. Here, we provide a general formulation called ``person-in-the-middle'', which is formally defined in Definition \ref{def.pim}.

A \emph{person-in-the-middle (PIM) mechanism} is a randomized mechanism that acts as an interlocutor between two interacting mechanisms $\cA$ and $\cM$. When $\cA$ attempts to send a message to $\cM$, the PIM mechanism can undergo an interaction with $\cA$ until finally passing on the message at the end of this interaction to $\cM$. Similarly, when $\cM$ attempts to send a message to $\cA$, the PIM mechanism can interact with $\cM$ to modify the message before passing it to $\cA$. 

This construction of the postprocessing by PIM mechanism allows us the following theorem. 

\begin{theorem}[Privacy preserved under postprocessing PIM mechanism, informally stated]\label{thm.postprocessing}
If $\cP$ is an interactive postprocessing mechanism and $\cM$ is a differentially-private interactive mechanism with respect to any privacy measure (such as $(\eps, \delta)$-DP, $f$-DP, and R\'enyi-DP), then $\cP \circ \cM$ is also differentially-private with the same privacy-loss parameters. 
\end{theorem}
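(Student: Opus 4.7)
The plan is to reduce interactive postprocessing to ordinary (noninteractive) postprocessing via a ``combined adversary'' construction, and then invoke the standard postprocessing property for each DP measure of interest. The core observation is that the PIM mechanism $\cP$ is completely indifferent to the dataset $x$: all of its randomness and state depend only on messages, so it can be absorbed into the adversary.

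First I would formally define a combined randomized state machine $\cA^* := \cA \leftrightarrow \cP$ that simulates the entire ``analyst-plus-middleman'' side of the interaction $\cA \leftrightarrow (\cP \circ \cM)(x)$. Concretely, $\cA^*$ carries a state consisting of the pair $(s^\cA, s^\cP)$, and whenever it receives a message $m$ from $\cM$, it feeds $m$ into the $\cP$-side, runs the internal $\cA \leftrightarrow \cP$ subprotocol (with fresh random coins for both $\cA$ and $\cP$) until $\cP$ produces an outbound message for $\cM$, and forwards that to $\cM$. Because interactive algorithms are closed under this kind of sequential composition of state machines, $\cA^*$ is itself a valid adversary algorithm interacting with $\cM$.

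Second, I would argue that $\View(\cA \leftrightarrow (\cP \circ \cM)(x))$ is a (randomized) postprocessing of $\View(\cA^* \leftrightarrow \cM(x))$ that does not depend on $x$. Indeed, the random coins of $\cA$ and $\cP$ recorded inside $\View(\cA^* \leftrightarrow \cM(x))$ determine deterministically the entire internal $\cA \leftrightarrow \cP$ transcript, so from $\View(\cA^* \leftrightarrow \cM(x))$ one can compute $\View(\cA \leftrightarrow (\cP \circ \cM)(x))$ by a function that is oblivious to $x$. Since $\cM$ satisfies the given DP guarantee, the distributions of $\View(\cA^* \leftrightarrow \cM(x))$ and $\View(\cA^* \leftrightarrow \cM(x'))$ are close under that guarantee; applying the (already-established) postprocessing theorem for $(\eps,\delta)$-DP, $f$-DP, and R\'enyi DP respectively transfers this closeness to the adversary's view in the postprocessed system.

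The main technical obstacle will be the bookkeeping in step one: the definition of the PIM in Definition \ref{def.pim} allows $\cP$ to exchange many messages with $\cA$ between successive communications with $\cM$, so I must be careful that $\cA^*$'s odd/even turn structure aligns with Definition \ref{def.protocol} and that all random coins are sampled exactly once and in a well-defined order. I also need to verify that a \texttt{halt} message on either internal channel (the $\cA \leftrightarrow \cP$ channel or the $\cP \leftrightarrow \cM$ channel) is correctly propagated so that the combined interaction terminates in the same way as the original. Once this formal alignment is pinned down, the DP conclusion is immediate from the dataset-independence of the simulation and the standard postprocessing properties of each privacy measure.
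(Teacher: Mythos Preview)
Your proposal is correct and follows essentially the same approach as the paper: construct a combined adversary $\cA'$ (your $\cA^*$) that runs $\cA$ and $\cP$ as internal submachines against $\cM$, then exhibit a dataset-independent function $g$ mapping $\View(\cA' \leftrightarrow \cM(x))$ to $\View(\cA \leftrightarrow (\cP\circ\cM)(x))$ and invoke the noninteractive postprocessing axiom of the generalized distance $D$. The only cosmetic difference is that the paper first restricts to deterministic $\cA$ via Lemma~\ref{lemma.deterministic_adversaries} and phrases the postprocessing property abstractly for any $d$-$\calD$ DP rather than instantiating it separately for $(\eps,\delta)$-DP, $f$-DP, and R\'enyi DP.
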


A useful corollary follows from Theorem \ref{thm.postprocessing}.

\begin{corollary}
    Suppose $\mathcal N$ is an interactive mechanism over database space $\mathcal X$ such that for every pair of neighboring datasets $x, x'$ and every deterministic adversary $\cA$, there exists an $(\eps, \delta)$-DP I.M. $\mathcal M$ and an interactive postprocessing $\mathcal P$ such that 
    \begin{align*}
		\View(\calA \leftrightarrow  \mathcal{N}(x)) & \equiv \View(\calA \leftrightarrow  \mathcal{P}(\mathcal{M}(x))) \\
		\View(\calA \leftrightarrow  \mathcal{N}(x'))& \equiv \View(\calA \leftrightarrow  \mathcal{P}(\mathcal{M}(x')))
	\end{align*}

 Then $\mathcal N$ is $(\eps, \delta)$-DP. 
\end{corollary}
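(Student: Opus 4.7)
The plan is to reduce to deterministic adversaries, invoke the hypothesis to factor $\mathcal{N}$ as an interactive postprocessing of an $(\eps,\delta)$-DP interactive mechanism, and then apply Theorem~\ref{thm.postprocessing} to conclude.

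First I would note that it suffices to prove the $(\eps,\delta)$-DP inequality against deterministic adversaries $\mathcal{A}$. This is standard: the view distribution against a randomized adversary is a convex combination (over the adversary's random coins) of view distributions against deterministic adversaries, and the $(\eps,\delta)$-closeness relation is preserved under such convex combinations. So I fix arbitrary neighbors $x, x'$ and a deterministic $\mathcal{A}$ throughout.

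Next I would invoke the hypothesis to obtain, for this particular triple $(\mathcal{A}, x, x')$, an $(\eps,\delta)$-DP interactive mechanism $\mathcal{M}$ and an interactive postprocessing $\mathcal{P}$ satisfying
\begin{align*}
\View(\mathcal{A} \leftrightarrow \mathcal{N}(x)) &\equiv \View(\mathcal{A} \leftrightarrow \mathcal{P}(\mathcal{M}(x))), \\
\View(\mathcal{A} \leftrightarrow \mathcal{N}(x')) &\equiv \View(\mathcal{A} \leftrightarrow \mathcal{P}(\mathcal{M}(x'))).
\end{align*}
Then I would apply Theorem~\ref{thm.postprocessing}, specialized to $(\eps,\delta)$-DP, which says that $\mathcal{P} \circ \mathcal{M}$ inherits the $(\eps,\delta)$-DP guarantee of $\mathcal{M}$. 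Instantiating that theorem at the fixed $\mathcal{A}$ yields $(\eps,\delta)$-closeness of the two postprocessed view distributions, and the displayed equalities transport this closeness verbatim to the views of $\mathcal{A}$ against $\mathcal{N}(x)$ and $\mathcal{N}(x')$, which is what we needed.

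I expect essentially no real obstacle beyond lining up the definitions, since the corollary is a repackaging of Theorem~\ref{thm.postprocessing} in a ``simulation'' form. The one subtlety worth flagging is that the hypothesis permits the factorization $(\mathcal{M}, \mathcal{P})$ to depend on the triple $(\mathcal{A}, x, x')$; this is harmless because the definition of DP for $\mathcal{N}$ is itself a universal statement quantifying over such triples, so one valid factorization per triple suffices to conclude the corresponding pointwise closeness bound. If one prefers to skip the reduction to deterministic adversaries, an equivalent route is to directly invoke a randomized-adversary version of Theorem~\ref{thm.postprocessing}.
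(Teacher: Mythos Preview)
Your proposal is correct and follows essentially the same approach as the paper: apply Theorem~\ref{thm.postprocessing} (stated more generally as Theorem~\ref{thm.pim_privacy}) to conclude that $\mathcal{P}\circ\mathcal{M}$ is $(\eps,\delta)$-DP, then transport the closeness via the displayed view equalities. The paper's proof is terser and invokes Lemma~\ref{lemma.deterministic_adversaries} for the reduction to deterministic adversaries rather than spelling out the convex-combination argument, but the substance is identical.
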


We will use this corollary to prove Theorem \ref{thm.main.fDP}. Our proof also relies on the following reduction theorem \cite{vadhan2022concurrent} showing that every interactive $f$-DP mechanism can be simulated by an interactive postprocessing of a noninteractive mechanism, for a fixed pair of neighboring datasets $x, x'$.

\begin{theorem}[\cite{vadhan2022concurrent}]\label{thm.reduction}
	For every trade-off function $f$, every interactive $f$-DP mechanism $\mathcal{M}$ with finite communication, and every pair of neighboring datasets $x, x'$, there exists a noninteractive $f$-DP mechanism $\mathcal{N}$ and a randomized interactive postprocessing mechanism $\mathcal{P}$ such that for every adversary $\calA$, we have 
	\begin{align*}
		\View(\calA \leftrightarrow  \mathcal{M}(x)) & \equiv \View(\calA \leftrightarrow  \mathcal{P}(\mathcal{N}(x))) \\
		\View(\calA \leftrightarrow  \mathcal{M}(x'))& \equiv \View(\calA \leftrightarrow  \mathcal{P}(\mathcal{N}(x')))
	\end{align*}
\end{theorem}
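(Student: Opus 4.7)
My plan is to construct $\mathcal{N}$ and $\mathcal{P}$ by having $\mathcal{N}$ sample enough randomness up front that $\mathcal{P}$ can locally simulate $\mathcal{M}(x)$ on any adaptive query stream from $\mathcal{A}$. Concretely, since $\mathcal{M}$ has finite communication, for any fixed random tape $r$ the state machine $\mathcal{M}(x; \cdot)$ is deterministic and its full query-response tree lives in a finite space. Let $\mathcal{N}(x)$ sample $\mathcal{M}$'s random tape $r$ and output the strategy function $\phi_r^{(x)} : q_1 q_2 \cdots \mapsto a_1 a_2 \cdots$ encoding $\mathcal{M}(x;r)$'s response along every possible query branch; then let $\mathcal{P}(\phi)$ be the interactive postprocessing that receives $\phi$ and, for each query $q_i$ from $\mathcal{A}$, returns the corresponding $a_i = \phi(q_1 \cdots q_i)_i$. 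View-equivalence $\View(\mathcal{A} \leftrightarrow \mathcal{P}(\mathcal{N}(x))) \equiv \View(\mathcal{A} \leftrightarrow \mathcal{M}(x))$ is then immediate from construction, and likewise for $x'$, so the two distributional identities stated in the theorem come for free once the construction is set up.

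The main obstacle, and what I expect to be the crux of the proof, is showing that $\mathcal{N}$ itself is $f$-DP on the pair $(x, x')$. The difficulty is that $\phi_r^{(x)}$ is a strictly richer object than any single adversary's view, which only traces one root-to-leaf path through the tree; consequently the $f$-DP guarantee of $\mathcal{M}$ against each fixed adversary does not pass through the data-processing inequality in the needed direction, since postprocessing lower-bounds the trade-off function after the adversary's projection rather than before it. To overcome this, I would proceed by induction on the depth $d$ of the protocol, which is finite by assumption. The base case $d = 0$ is immediate. For the inductive step I would decompose $\phi_r^{(x)}$ into the first-round response $a_1$ to the canonical opening query, together with the depth-$(d-1)$ residual strategy function $\phi'$ indexed by $a_1$; by the inductive hypothesis the residual is $f$-DP for each realization of $a_1$, while $a_1$ itself is $f$-DP by a one-round application of $\mathcal{M}$'s guarantee. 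Adaptive composition closure of $f$-DP then glues these pieces into $f$-DP of the joint object $\mathcal{N}(x)$.

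A cleaner alternative route is the Blackwell-sufficiency perspective: since we only need $f$-DP on the specific pair $(x, x')$ mentioned in the theorem statement, it suffices to exhibit $\mathcal{N}$ as a Blackwell-sufficient statistic for the pair $(\mathcal{M}(x), \mathcal{M}(x'))$ whose trade-off function dominates $f$, and to take $\mathcal{P}$ as the corresponding Blackwell channel that reconstructs any adversary's view. The finite-communication hypothesis ensures this sufficient statistic lives in a finite space and hence qualifies as a bona fide noninteractive mechanism. Between the two routes, I would pursue the inductive one first because it gives a more constructive handle on the residual state at each round, keeping the Blackwell viewpoint in reserve for the delicate case where the adversary's first query is randomized and one must integrate over $\mathcal{A}$'s internal coins before invoking the inductive hypothesis.
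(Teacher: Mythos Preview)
This theorem is not proved in the present paper; it is quoted from \cite{vadhan2022concurrent} and only restated (Theorem~\ref{thm.fdp_post} makes explicit that $\mathcal{N}$ can be taken to output a single value in $[0,1]$). So there is no in-paper proof to compare against, but your proposal has a genuine gap worth naming.

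Your construction of $\mathcal{N}(x)$ as the full strategy tree $\phi_r^{(x)}$ does give the view-equivalences for free, but it is \emph{not} $f$-DP in general, and your inductive repair cannot rescue it. Concrete counterexample: let $\mathcal{M}$ output a dummy first message, accept a one-bit query $q\in\{0,1\}$, and return $x+z_q$ where $z_0,z_1$ are independent Laplace noises with scales $1/\eps_0,1/\eps_1$. As an interactive mechanism this is $\max(\eps_0,\eps_1)$-DP, since any adversary sees only one of the two answers. But your $\mathcal{N}(x)$ releases the pair $(x+z_0,\,x+z_1)$, which is only $(\eps_0+\eps_1)$-DP. The inductive step you sketch (``$a_1$ is $f$-DP, the residual is $f$-DP, now compose'') would at best produce $f\otimes f$, not $f$; and the residual strategy itself already fails to be $f$-DP by the same counterexample one level down, so the induction never closes.

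Your Blackwell-sufficiency alternative is the right instinct and is essentially what the cited proof does, but the substance is exactly the part you leave unsketched: for the fixed pair $(x,x')$ one must exhibit a \emph{one-dimensional} statistic $Y$ (coupled with $Y'$) whose trade-off function is at least $f$ and from which $\mathcal{P}$ can simulate every round of $\mathcal{M}$. This relies on a coupling property specific to trade-off functions---the paper notes the analogous reduction fails for R\'enyi divergence precisely because this property is absent---and the round-by-round argument \emph{compresses} each new message back into the one-dimensional statistic via the Blackwell order rather than accumulating a product as your induction would.
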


Theorem \ref{thm.reduction} allows us to reduce the composition of interactive mechanisms $\calM$ to the composition of noninteractive mechanisms $\mathcal N$, as needed to prove Theorem \ref{thm.main.fDP}. Fixing adjacent datasets $x, x'$, we can then define an interactive postprocessing $\cP$ such that for every deterministic adversary $\cA$, $\View(\cA \leftrightarrow \cP \circ \NIF) \equiv \View(\cA \leftrightarrow \CF)$ on any database $x$. Therefore we can use composition theorems for noninteractive $f$-DP mechanisms to construct concurrent $f$-DP filters. Theorem \ref{thm.main} follows directly from Theorem \ref{thm.main.fDP}, as $f$-DP captures $(\eps,\delta)$-DP as a special case. In fact, the special case of Theorem \ref{thm.reduction} for $(\eps, \delta)$-DP was shown independently by Lyu \cite{lyu2022composition}.

Our proof of Theorem \ref{thm.main.RDP} for RDP follows a different approach. Our proof relies on the concurrent composition theorem in Theorem \ref{thm.rdp2} for two interactive RDP mechanisms, and our strategy is to apply induction on the number of mechanisms being composed. 

We will make use of a \emph{universal mechanism} $\mathcal U_{\alpha, \eps}$ with parameters $(\alpha, \eps)$ be a general mechanism that can simulate any valid $(\alpha, \epsilon)$-RDP mechanism (which is specified as the first query to the universal mechanism).

When composing two mechanisms (i.e. $K = 2$ in Theorem \ref{thm.main.RDP}), by the concurrent composition theorem in Theorem \ref{thm.rdp2}, an adversary can only start a second mechanism $\calM_2$ after $\calM_1$ if and only if its privacy-loss parameter $\eps_2$ is at most $\eps-\eps_1$, where $\eps$ is the target privacy-loss budget and $\eps_1$ is the privacy-loss parameter for $\calM_1$. We define the following filter
$$
\calF_2(\eps_1, \ldots, \eps_k;\eps)=\begin{cases}
    \mathbb{I}( \sum_{i=1}^k \eps_i \le \eps ) & \text{ if } k\le 2 \\
    0 & \text{ otherwise}.
\end{cases}
$$
$\calF_2$ is a valid $(\alpha, \eps)$-RDP IM-filter for composing two mechanisms. This is because if we fix a deterministic adversary $\cA$ as Lemma \ref{lemma.deterministic_adversaries} allows us to do, then $\cM_1, \eps_1$ are predetermined adaptively, which means that the adversary interacting with $\calF_2$-$\mathit{FiltCon(IM)}$ 
is equivalent to the adversary interacting with $\concomp(\calM_1, \mathcal{U}_{\alpha, \eps - \eps_1})$, where $\mathcal{U}_{\alpha, \eps - \eps_1}$ is the universal $(\alpha, \eps-\eps_1)$-$\textit{RDP}$ mechanism, and the adversary's first query to $\mathcal{U}_{\alpha, \eps - \eps_1}$ is an $(\alpha, \eps_2)$-$\textit{RDP}$ mechanism $\mathcal{M}_2$. This allows $\mathcal{U}_{\alpha, \eps - \eps_1}$ to serve as a ``placeholder" that can simulate an adaptively chosen $\mathcal{M}_2$. Upon verifying $\eps_2\le \eps-\eps_1$, $\mathcal{U}_2$ interacts with the adversary just as $\mathcal{M}_2(x)$ does. We then apply an induction argument on the number of mechanisms being composed. For $K > 2$, we consider $\calF_K$-$FiltCon(IM)$, which is a filter for composing up to $K$ interactive mechanisms rather than two mechanisms. Assuming the privacy of $\calF_{K - 1}\text{-}FiltCon(IM)$, we argue that we can construct a postprocessing $\cP$ interacting with $\calF_2\text{-}FiltCon(IM)$ such that an adversary's interaction with $\cP \circ \calF_2\text{-}FiltCon(IM)$ is equivalent to interacting with $\calF_K$-$FiltCon(IM)$, allowing us to proceed by induction on $K$. 

Crucially, the proof for R\'enyi DP uses the fact that R\'enyi DP of a fixed order $\alpha$ is measured by a single real number $\eps$, and the optimal composition theorem is additive, i.e. $\eps = \eps_1 + \dots + \eps_K$. This is not satisfied in $(\eps, \delta)$-DP or $f$-DP, hence we use a different strategy in Theorem \ref{thm.main.fDP}, as discussed above.

Feldman and Zrnic \cite{feldman2022individual} prove that $\calF(\eps_1, \eps_2, \ldots;\eps)= \mathbb{I}(\sum_i \eps_i \le \eps)$ is a valid RDP NIM-filter for every fixed order of $\alpha>1$. They construct a supermartingale for the privacy loss, and then apply the optional stopping theorem to bound the overall privacy loss when the algorithm halts. The main idea of our proof strategy is to leverage interactivity, which allows us to proceed by induction, describing $\calF_k$ in terms of $\calF_{k-1}$. This strategy can significantly simplify the proof in \cite{feldman2022individual} of the privacy filter for even noninteractive RDP mechanisms, by similarly doing induction that uses the sequential composition of two interactive RDP mechanisms.

We then can convert this privacy filter to a valid RDP IM-privacy loss accumulator based on the following lemma.

\begin{lemma}\label{lemma.filter_to_odo}
A function $\calG$ is a valid RDP IM-privacy loss accumulator if and only if $\calF(\cdot;\eps) = I(\calG(\cdot)\leq \eps)$ is a valid RDP IM-filter for all $\eps > 0$. 
\end{lemma}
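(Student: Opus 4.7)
The plan is to establish both directions by a direct simulation argument: I would show that the view of any adversary against the filter is identically distributed to the truncated view of a related adversary against the odometer, and vice versa. The key observation is that $\calG$ is a deterministic function of the privacy-loss parameters $\eps_1, \eps_2, \ldots$, all of which are chosen by the adversary, so any simulator can recompute the running value of $\calG$ in the clear without needing to interact with the underlying mechanism.

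For the ``only if'' direction, fix a budget $\eps > 0$, an adversary $\calA$ interacting with $\calF(\cdot;\eps)\text{-}FiltCon(IM)$, and neighboring datasets $x, x'$. I would construct an adversary $\calA'$ against $\calG\text{-}OdomCon(IM)$ that internally simulates $\calA$ and forwards all mechanism launches and interactive queries verbatim to the odometer, while maintaining a local log of the parameters $\eps_1, \ldots, \eps_k$. Whenever $\calA$ proposes to launch a new mechanism $\calM_{k+1}$ with parameter $\eps_{k+1}$, the simulator $\calA'$ first checks the predicate $\calG(\eps_1, \ldots, \eps_{k+1}) \le \eps$, halts if it fails, and otherwise passes the launch on. By construction, the truncated view at threshold $\eps$ of $\calA'$ interacting with the odometer is identically distributed to the view of $\calA$ interacting with the filter on both $x$ and $x'$, and applying the odometer hypothesis to $\calA'$ then yields the desired $(\alpha, \eps)$-RDP guarantee for the filter.

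For the ``if'' direction, fix $\eps > 0$ and an adversary $\calA$ interacting with $\calG\text{-}OdomCon(IM)$. I would construct $\calA''$ interacting with $\calF(\cdot;\eps)\text{-}FiltCon(IM)$ that simulates $\calA$, forwards mechanism launches and interactive queries to the filter, and intercepts any \texttt{privacy\_loss} queries from $\calA$, answering them from its local log of $\eps_1, \ldots, \eps_k$. The filter halts precisely when launching the next mechanism would push $\calG$ above $\eps$, which matches the truncation rule of the odometer. Hence the view of $\calA''$ interacting with the filter is identically distributed to $\texttt{Trunc}_\eps$ of $\calA$'s view interacting with the odometer on either dataset, and the filter hypothesis at this $\eps$ delivers the $(\alpha, \eps)$-RDP guarantee required of the odometer.

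The main obstacle I anticipate is bookkeeping this equivalence faithfully in the concurrent, interactive setting: the adversary can freely interleave queries to existing mechanisms (which leave $\calG$ untouched) with launches of new ones (which update $\calG$), and the simulations must preserve this interleaving exactly while respecting the message-scheduling protocol of $\concomp$. A secondary subtlety is aligning the filter's check-before-launch semantics with the odometer truncation's check-after-answer semantics; this should resolve by noting that the truncated view excludes the first response that would push $\calG$ over $\eps$, so the two halting rules coincide at the level of the observable transcript.
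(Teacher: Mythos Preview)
Your proposal is correct and shares the same core idea as the paper's proof: since $\calG$ depends only on the adversary-chosen parameters $\eps_1,\eps_2,\ldots$, one side can simulate the other by recomputing $\calG$ locally and handling the deterministic \texttt{privacy\_loss} and halting messages without touching the data. The framing differs. The paper (which actually proves the more general $\calD$-DP version, Lemma~\ref{lemma.f_odom}) works on the \emph{mechanism} side: it introduces an auxiliary interactive mechanism $\cM$ whose view coincides with the truncated odometer view, exhibits $\cM$ as a person-in-the-middle postprocessing of the filter (and conversely), and then invokes Theorem~\ref{thm.pim_privacy}. You work on the \emph{adversary} side, building simulating adversaries $\calA',\calA''$ directly. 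These perspectives are dual --- the paper's own proof of Theorem~\ref{thm.pim_privacy} proceeds precisely by absorbing the PIM into an adversary --- so your route is more elementary for this lemma taken in isolation, while the paper's route is more modular, reusing the PIM abstraction it develops anyway for the main filter and odometer theorems. The subtlety you flag about check-before-launch versus truncate-after-answer is the only real wrinkle, and the paper handles it just as you anticipate: the discrepancy consists of data-independent messages, so it is absorbed by a deterministic postprocessing that does not affect the divergence.
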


We give a generalized version of Lemma \ref{lemma.filter_to_odo} in Section \ref{sec.formaldef}, meaning that an analogue of Lemma \ref{lemma.filter_to_odo} also holds for all other variants of DP. 



\section{Concurrent Filter and Odometer}\label{sec.formaldef}

\subsection{Preliminaries}\label{sec.prelim}

Vadhan and Zhang \cite{vadhan2022concurrent} define the generalized notion of privacy, termed $\calD$ DP, based on the following generalized probability distance. The distance is defined on a partially ordered set, which allows us to compare privacy guarantees at different protection levels. It should enjoy the postprocessing and joint convexity properties.

\begin{definition}[Generalized probability distance \cite{vadhan2022concurrent}]\label{def.genprobdistance}
	A {\em generalized probability distance measure} is a tuple $(\mathcal{D}, \preceq, D)$ such that
	\begin{enumerate}
		\item $(\mathcal{D}, \preceq)$ is a partially ordered set (poset).
		\item $D$ is a mapping that takes any two random variables $X,X'$ over the same measurable space to an element $D(X,X')$ of $\mathcal{D}$.
		\item (Postprocessing.) The generalized distance mapping $D$ is closed under postprocessing, meaning that for every measurable function $g$, $D(g(X), g(X'))\preceq D(X,X')$.
		\item (Joint Convexity.)  Suppose we have a collection of random variables $(X_i, X'_i)_{i\in \mathcal{I}}$ and a random variable $I$ distributed on $\mathcal{I}$. If $D(X_i, X'_i) \preceq d$ for all $i\in \mathcal{I}$, then $D(X_I, X'_I) \preceq d$.
	\end{enumerate}
\end{definition}

With this definition, the difficulty of distinguishing two neighboring datasets is measured by the generalized distance between the distributions of an adversary's views. 

\begin{definition}[$\calD$ DP \cite{vadhan2022concurrent}]\label{def.ddp}
	Let $(\mathcal{D}, \preceq, D)$ be a generalized probability distance.
	For $d \in \mathcal{D}$, we call an interactive mechanism $\mathcal{M}$ {\em $d$-$\calD$ DP} if for every interactive algorithm $\mathcal A$ and every pair of neighboring datasets $x,x'$, we have 
	$$D(\View ( \mathcal A \leftrightarrow \mathcal{M}(x))   ,\View (\mathcal A \leftrightarrow \mathcal{M}(x') )) \preceq d.$$ 
\end{definition} 


A convenient property of $\calD$ DP is that it suffices to consider deterministic adversaries, because $\calD$ DP satisfies joint convexity. Since the filters and odometers we define will be $\calD$ DP, we will only consider deterministic adversaries in this paper. 

\begin{lemma}[Deterministic adversaries for $d$-$\calD$ DP \cite{vadhan2021concurrent}\cite{vadhan2022concurrent}]\label{lemma.deterministic_adversaries}
    An interactive mechanism M is $d$-$\calD$ DP, if and only if for every pair of neighboring datasets $x, x'$, for every deterministic adversary $\mathcal A$, $$D(\texttt{View}(\mathcal A \leftrightarrow \calM(x))||\texttt{View}(\mathcal A \leftrightarrow \calM(x'))) \preceq d.$$
\end{lemma}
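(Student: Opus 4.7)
The forward direction is immediate, since deterministic adversaries are a special case of randomized ones. The substance lies in the reverse direction, which I would establish by expressing any randomized adversary as a mixture of deterministic adversaries indexed by its random coins, and then applying the postprocessing and joint convexity axioms from Definition~\ref{def.genprobdistance}.

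Concretely, fix neighboring datasets $x, x'$ and a randomized adversary $\cA$. Write $R$ for the joint random tape used by $\cA$ across all of its invocations, and for each value $r$ in the support of $R$ let $\cA_r$ denote the deterministic adversary that behaves like $\cA$ with its coins hard-coded to $r$. By hypothesis applied to each $\cA_r$, we have
$$D\bigl(\View(\cA_r \leftrightarrow \cM(x)),\ \View(\cA_r \leftrightarrow \cM(x'))\bigr) \preceq d.$$
Since the view defined in Definition~\ref{def.view} records the coins $r_i$ that $\cA$ tosses, I would next apply the postprocessing property to the map that prepends $r$ to a view, obtaining
$$D\bigl((r,\View(\cA_r \leftrightarrow \cM(x))),\ (r,\View(\cA_r \leftrightarrow \cM(x')))\bigr) \preceq d.$$
Drawing $R$ from its distribution under $\cA$ and invoking joint convexity with index set equal to the support of $R$ then yields
$$D\bigl((R,\View(\cA_R \leftrightarrow \cM(x))),\ (R,\View(\cA_R \leftrightarrow \cM(x')))\bigr) \preceq d.$$
The final step is to observe that $(R, \View(\cA_R \leftrightarrow \cM(x)))$ is identically distributed to $\View(\cA \leftrightarrow \cM(x))$ (and likewise for $x'$), since running $\cA$ with its own randomness and recording those coins inside the view is the same stochastic process as first sampling $R$ and then running the deterministic $\cA_R$ with $R$ attached.

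The only real obstacle is bookkeeping: one must carefully align the coins $r_i$ that are embedded \emph{inside} the view (per Definition~\ref{def.view}) with the external mixture variable $R$, so that the distributional identity in the last step is legitimate. Once this identification is in hand, the argument reduces to a clean, two-line application of the postprocessing and joint convexity axioms that every generalized probability distance satisfies by definition, so the same proof template will transfer uniformly across all the DP variants (approximate-DP, $f$-DP, RDP) that we instantiate $(\calD,\preceq,D)$ with later in the paper.
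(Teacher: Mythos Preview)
The paper does not actually prove this lemma; it is cited from \cite{vadhan2021concurrent,vadhan2022concurrent}, with only the one-line justification that ``$\calD$ DP satisfies joint convexity.'' Your proposal is correct and fleshes out exactly that intuition: averaging over the adversary's coins via joint convexity, together with postprocessing to align the recorded coins, is the standard argument. The bookkeeping issue you flag---whether the coins embedded in $\View(\cA_r\leftrightarrow\cM(x))$ per Definition~\ref{def.view} already carry $r$ or whether you must attach it externally---is real but easily dispatched: either define $\cA_r$ to still write the fixed values $r$ into its coin slots (so the views match directly and the prepending step is unnecessary), or proceed as you do and observe that the redundant prepended $r$ can be stripped by a final postprocessing. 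Either way the argument goes through.
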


For the rest of this paper, we will notate $\calD^* = \bigcup_{k=0}^\infty \calD^k$.

Finally, we define concurrent composition formally. In this setting, the adversary can arbitrarily and adaptively interleave queries between several differentially-private mechanisms, meaning that the queries can be dependent.

\begin{definition}[Concurrent composition of interactive mechanisms \cite{vadhan2021concurrent}] \label{def.concomp}
    Let $\calM_1, \dots, \calM_k$ be interactive mechanisms taking private datasets $x_1, \dots, x_k$ respectively. The concurrent composition of $\calM_1, \dots, \calM_k$, denoted $\calM = \concomp(\calM_1, \dots, \calM_k)$, is defined as in Algorithm \ref{alg.concomp}.
\end{definition}

\SetKwProg{Fn}{def}{:}{}
\SetKwFunction{AlgM}{$\calM$}%

\begin{algorithm}
    \DontPrintSemicolon
    \caption{Concurrent composition of interactive mechanisms}\label{alg.concomp}
        \Fn{\AlgM{s, m}}{
        \If{$m = \lambda$}{
            $s \gets (x, [(\calM_1, s_1), \dots, (\calM_k, s_k)])$ \tcp*[f]{initialize $k$ mechanisms}
            \Return{$(s, \lambda)$}\;
        }
        Parse $s = (x, [(\calM_1, s_1),\dots, (\calM_k, s_k)])$\;
        Parse $m$\;
        \uIf{$m$ is of the form $m = (j, q)$ where $j = 1, \dots, k$ and $q$ is a query to $\calM_j$}{
            $(s_j', m') \gets \calM_j(s_j, q)$\;
            $s' \gets (x, [(\calM_1, s_1), \dots, (\calM_j, s_j'), \dots, (\calM_k, s_k)])$\;
        }\ElseIf{$m$ cannot be parsed correctly}{
            $s' \gets s$, $m' \gets \texttt{invalid query}$\;
        }
        \Return $(s', m')$\;
        }
\end{algorithm}

\subsection{Filters}

We can now define a generalization of the privacy filter for the composition of noninteractive mechanisms as introduced in Section \ref{section.odoms_and_filters}. Recall that filters are based on continuation rules. In line with the fully-adaptive setting, the number of interactive mechanisms need not be specified beforehand.

\begin{definition}[$\calF$-filtered composition of $\calD$ DP noninteractive mechanisms ($\NIF$)]\label{def.f_noninteractive}
     For a continuation rule $\calF: \calD^* \times \calD \to \{ 0, 1\}$, the $\calF$-filtered composition of $\calD$ DP interactive mechanisms with privacy-loss budget $d$ is the interactive mechanism given in Algorithm \ref{alg.k_noninteractive_comp}.
\end{definition}

If there is a finite maximum $K$ of noninteractive privacy-loss parameters $d_1, d_2, \dots, d_K$ to take in, we denote the continuation rule $\calF_K(\cdot; d): \calD^* \times \calD \rightarrow \{ 1, 0 \}$, where $d$ is the target privacy budget.

\SetKwFunction{AlgNIF}{$\calF(\cdot; d)\text{-}Filt(NIM)$}%
 
\begin{algorithm}
    \DontPrintSemicolon
    \caption{$\calF$-filtered composition of $\calD$ DP noninteractive mechanisms ($\NIF$), for $\calF_K: \calD^* \to \calD$, where $K$ could be $\infty$.} \label{alg.k_noninteractive_comp}

    \Fn{\AlgNIF{$s, m$}}{ 
        \If{$m = \lambda$}{
            $x \gets s$, $s' \gets (x, [])$, where $[]$ is an empty list \tcp*[f]{initialize filter}\;
            \Return $(s', \lambda)$\;
        }
        Parse $s$ as $s = (x, [(\calM_1, d_1), (\calM_2, d_2), \dots, (\calM_k, d_k)])$\;
        Parse $m$\;
        \uIf{$m$ is of the form $m = (\calM', d')$}{
            \uIf{$k = K$}{
                $s' \gets \texttt{Halt}$, $m' \gets \texttt{at mechanism count limit}$\;
            }\ElseIf{$\calM'$ is $d'$-$\calD$ DP}{
                $\calM_{k + 1} \gets \calM'$\;
                $d_{k + 1} \gets d'$\;
            }
            \If{$\calF(d_1, \dots, d_{k + 1}) \neq 1$}{
                $m' \gets \texttt{insufficient budget}$, $s' \gets \texttt{Halt}$\;
            }\Else{
                Random sample $r_{k + 1} :=$ coin tosses for $\calM_{k + 1}$\;
                $s' \gets (x, [d_1, \dots, d_{k + 1}])$\;
                $m' \gets \calM_{k + 1}(x; r_{k + 1})$\;
            }
        }\ElseIf{$m$ cannot be parsed correctly}{
            $s' \gets s$, $m' \gets \texttt{invalid query}$\;
        }
        \Return $(s', m')$\;
        }
\end{algorithm}

\begin{definition}[Valid $\calD$ DP filter for noninteractive mechanisms]\label{def.ddp_f}
   Let $\calF$ be a continuation rule $\calF: \calD^*\times \calD \rightarrow \{0, 1\}$. We say $\calF$ is a valid
$\calD$ DP \emph{NIM-filter} for noninteractive mechanisms if for every
$d\in \calD$,
$\calF(\cdot; d)$-$Filt(NIM)$ is a $d$-$\calD$ DP
interactive mechanism.
\end{definition}

We can now similarly define the privacy filter for concurrent composition of interactive mechanisms. 

\begin{definition}[$\calF$-filtered concurrent composition of $\calD$ DP interactive mechanisms]\label{def.cf_interactive}
The $\calF$-filtered concurrent composition of $\calD$ DP interactive mechanisms, denoted as $\CF$, is an interactive mechanism as executed in Algorithm \ref{alg.cf_interactive}.
\end{definition}

\SetKwFunction{AlgCF}{$\calF(\cdot; d)\text{-}FiltCon(IM)$}%

\begin{algorithm}
    \DontPrintSemicolon
    \caption{$\calF$-filtered concurrent composition of $\calD$ DP interactive mechanisms ($\CF$), for $\calF_K: \calD^* \to \calD$, where $K$ could be $\infty$.} \label{alg.cf_interactive}
    \Fn{\AlgCF{s, m}}{
        \If(\tcp*[f]{initialize filter}){$m = \lambda$}{
            $x \gets s$, $s' \gets (x, [])$, where $[]$ is an empty list\; 
            \Return $(s', \lambda)$\;
        }
        Parse $s$ as $s = (x, [(\calM_1, d_1, s_{1}), \dots, (\calM_k, d_k, s_{k})])$\;
        Parse $m$\;
        \If{$m$ is of the form $m = (\calM', d')$}{
            \If(\tcp*[f]{already at mechanism limit}){$k = K$}{
                $s' \gets \texttt{Halt}$, $m' \gets \texttt{at mechanism count limit}$\;
            }\ElseIf{$\calM'$ is $d'$-$\calD$ DP}{
                $\calM_{k + 1} \gets \calM'$\;
                $d_{k + 1} \gets d'$\;

                \If{$\calF(d_1, \dots, d_{k + 1}) \neq 1$}{
                    $s' \gets \texttt{Halt}$, $m' \gets \texttt{insufficient budget}$\;
                }\Else{
                    $(s_{k + 1}, m') \gets \cM_{k + 1}(x, \lambda)$\;
                    $s' \gets (x, [(\calM_1, d_1, s_{1}), \dots, (\calM_{k + 1}, d_{k + 1}, s_{k + 1})])$\;
                }
            }
        }\ElseIf{$m$ is of the form $m = (j, q)$ where $j = 1, \dots, k$ and $q$ is a query to $\calM_j$}{
            $(s_{j}', m') \gets \calM_j(s_{j}, q)$\;
            $s' \gets (x, d, [(\calM_{1}, d_{1}, s_{1}), \dots, (\calM_{j}, d_{j}, s_{j}'), \dots, (\calM_{k}, d_{k}, s_{k})])$\;
        }\ElseIf{$m$ cannot be parsed correctly}{
            $s' \gets s$, $m' \gets \texttt{invalid query}$\;
        }
        \Return{$(s', m')$}
    }
\end{algorithm}

A privacy filter is one in which Definition \ref{def.cf_interactive} holds for all distances $d \in \calD$.
\begin{definition}[Valid $\calD$ DP filter for interactive mechanisms]\label{def.ddp_cf}
     Let $\calF: \calD^*\times \calD \rightarrow \{0, 1\}$ be a continuation rule. $\calF$ is a valid \emph{$\calD$ DP concurrent IM-filter} if for every $d\in \calD$, $\calF(\cdot; d)$-$FiltCon(IM)$ is a $d$-$\calD$ DP interactive mechanism.
\end{definition}

\subsection{Odometer}\label{sec.odometer}

We can similarly generalize our definition for privacy odometers to $\calD$ DP. 

\begin{definition}[$\calG$-odometer for composition of noninteractive mechanisms ($\NIO$)]
    A $\calG$-odometer for the composition of noninteractive $\calD$ DP mechanisms, denoted as \\ $\NIO$, is executed as in Algorithm \ref{alg.o_construction}.
\end{definition} 

\SetKwFunction{AlgNIO}{$\NIO$}

\begin{algorithm}
    \DontPrintSemicolon
    \caption{$\calG$-odometer for composition of noninteractive mechanisms ($\NIO$) .} \label{alg.o_construction}
    \Fn{\AlgNIO{s, m}}{
        \If{$m = \lambda$}{
            $x \gets s$, $s' \gets (x, [])$, where $[]$ is an empty list\;
            \Return{$(s', \lambda)$}\;
        }
        Parse $s$ as $s = (x, [d_1, d_2, \dots, d_k])$\;
        Parse $m$\;
        \If{$m$ is of the form $m = (\calM', d')$}{
            \If{$\calM'$ is $d'$-$\calD$ DP}{
                $\calM_{k + 1} \gets \calM'$\;
                $d_{k + 1} \gets d'$\;
                $s' \gets (x, [d_1, \dots, d_{k + 1}])$\;
                Random sample $r_{k + 1} :=$ coin tosses for $\cM_{k + 1}$\;
                $m' \gets \calM_{k + 1}(x; r_{k + 1})$\;
            }\Else{
                $s' \gets s$, $m' \gets \texttt{Divergence cannot be parsed}$\;
            }
        }\ElseIf{$m = \texttt{privacy\_loss}$}{
            $s' \gets s, m' \gets \calG(d_1, \dots, d_k)$\;
        }\ElseIf{$m$ cannot be parsed correctly}{
            $s' \gets s$, $m' \gets \texttt{invalid query}$\;
        }
        \Return{($s', m'$)}\;
     }
\end{algorithm}

Odometers come with a per-mechanism privacy guarantee, meaning that the divergence between the views after each new mechanism is started should be at most some set distance $d \in \calD$ apart. Because the odometer can receive a \texttt{privacy\_loss} query at any point in computation, a notion that captures the during-computation guarantee is necessary. The generalized \emph{truncated view} of the adversary up to the $n$-th round of interaction to an odometer equipped with continuation rule $\calG$ is defined as in Algorithm \ref{alg:truncview}.

\begin{definition}[Truncated view for odometers \cite{lecuyer2021practical}]
    For a deterministic adversary $\calA$ and odometer $\calO$ on dataset $x$, the \emph{truncated view} $$\texttt{Trunc}_d (\View(\calA \leftrightarrow \calO(x)))$$ is the return value of Algorithm \ref{alg:truncview}, including the randomness and query answers.
\end{definition}

\SetKwFunction{AlgTruncView}{TruncView}

\begin{algorithm}
    \DontPrintSemicolon
    \caption{Truncated view of adversary interacting with odometer $\calO$ given input $d$}\label{alg:truncview}
    
    \Fn{\AlgTruncView{$\View(\cA \leftrightarrow \calO(x))$}}{
        Initialize $s_1^\cM \gets x$, $q_0 \gets \lambda$, $s_1^\cA \gets \lambda$\;
        \For{$i = 1, 2, \dots$} {
            \If{$i$ is odd} {
                $(s^\cM_{i + 1}, a_i) \gets \calO(s^\cM_i, q_{i - 1})$\;
                $(s^\cM_i, d_i) = \calO(s^\cM_i, \texttt{privacy\_loss})$\;
                $s^\cA_{i + 1} \gets s^\calA_i$\;
                \If{$d_i \not\preceq d$}{
                    Exit loop
                }
            }\ElseIf{$i$ is even}{
                $(s^\cA_{i + 1}, q_i) \gets \cA(s_i^\cA, a_{i - 1}; r_i^\cA)$\;
                $s^\cM_{i + 1} \gets s^\cM_i$\;
            }

            \If{$s_{i + 1}^\cM = \texttt{Halt}$ or $s_{i + 1}^\cA = \texttt{Halt}$} {
                Exit loop\;
            }
        }
        \Return{$\texttt{Trunc}_{d}(\texttt{View}(\calA \leftrightarrow \mathcal O(x))) = (r_0^\calA, a_1, r_2^\calA, a_3, \dots, r_{i - 2}^\cA, a_{i - 1})$}
    }
\end{algorithm}

Having generalized truncated view to $\calD$ DP, we can now define a \emph{valid} $\calD$ DP odometer, and a valid privacy-loss accumulator for an odometer of noninteractive mechanisms.

\begin{definition}[Valid $\mathcal D$-DP odometer]
    $\calO$ is a \emph{valid $\calD$ DP odometer} if for every $d \in \calD$, adversary $\mathcal A$, and every pair of adjacent datasets $x, x'$, 
    \begin{align*}
        D(\texttt{Trunc}_d(\View(\mathcal A\leftrightarrow \calO(x))) || \texttt{Trunc}_d(\View(\mathcal A \leftrightarrow \calO(x')))) \preceq d
    \end{align*}
\end{definition}

\begin{definition}[Valid $\mathcal D$-DP privacy-loss accumulator for noninteractive mechanisms]
    $\calG: \calD^* \to \calD$ is a valid $\calD$ DP noninteractive-privacy-loss accumulator (NIM-privacy-loss-accumulator) if for every $d \in \calD$, $\calG(\cdot)$-$Filt(NIM)$ is a valid $d$-$\calD$ DP odometer. 
\end{definition}

We now define the privacy odometer that allows us to keep track of the total privacy loss across multiple interactive mechanisms, where the adversary can interleave queries to existing interactive mechanisms, as well as create new interactive mechanisms with privacy-loss budgets chosen adaptively. Crucially, our construction of odometers have a specially-designated privacy-loss query, which outputs a conservative upper bound on the privacy loss for the interaction up to that point. It should be noted that this odometer model for I.M.s does not define a per-query privacy loss as in the odometer for noninteractive mechanisms. Instead, the ``privacy-loss budget" is paid at the launch of each additional interactive mechanism. Any \texttt{privacy\_loss} queries asked between mechanism launches will return the same value.

\begin{definition}[$\calG$-odometer for concurrent composition of interactive mechanisms ($\CO$)]\label{def.co_construction}
A $\calG$-odometer for the \emph{concurrent composition of interactive mechanisms}, denoted as $\CO$, is defined as in Algorithm \ref{alg.co_construction}.
\end{definition}

\SetKwFunction{AlgCO}{$\CO$}

\begin{algorithm}
    \DontPrintSemicolon
    \caption{$\calG$-odometer for concurrent composition of interactive mechanisms ($\CO$).} \label{alg.co_construction}
    \Fn{\AlgCO{$s, m$}}{
        \If{$m = \lambda$}{
            $x \gets s$, $s' \gets (x, [])$, where $[]$ is an empty list\;
            \Return{$(s', \lambda)$}\;
        }
        Parse $s = (x, [(\calM_1, d_1, s_{1}), (\calM_2, d_2, s_{2}), \dots, (\calM_k, d_k, s_{k})])$\;
        Parse $m$\;
        \If{$m$ is of the form $m = (\calM', d')$}{
            \If{$\calM'$ is $d'$-$\calD$ DP}{
                $\calM_{k + 1} \gets \calM'$\;
                $d_{k + 1} \gets d'$\;
                $(s_{k+1}, m') \gets \cM_{k + 1}(x, \lambda)$\;
                $s' \gets (x,  [(\calM_1, d_1, s_{1}), \dots, (\calM_{k + 1}, d_{k + 1}, s_{k + 1})])$\;
            }
            \Else{
                $s' \gets s$, $m' \gets \texttt{``Divergence cannot be parsed"}$\;
            }
        }\ElseIf{$m$ is of the form $m = (j, q)$ where $j = 1, \dots, k$ and $q$ is a query to $\calM_j$}{
            $(s_{j}', m') \gets \calM_j(s_{j}, q)$\;
            $s' \gets (x, [(\calM_1, d_1, s_1), \dots, (\calM_{j}, d_{j}, s_{j}'), \dots, (\calM_k, d_k, s_k)])$\;
        }\ElseIf{$m = \texttt{privacy\_loss}$}{
            $s' \gets s, m' \gets \calG(d_1, \dots, d_k)$\;
        }\ElseIf{$m$ cannot be parsed correctly}{
            $s' \gets s$, $m' \gets \texttt{invalid query}$\;
        }
        \Return{$(s', m')$}\;
    }
\end{algorithm}

We are now ready to define the notion of a \emph{valid} $\calD$ DP privacy-loss accumulator. 

\begin{definition}[Valid $\calD$ DP privacy-loss accumulator for interactive mechanisms]\label{def.ddp_odom}
We say $\calG$ is a valid $\calD$ DP \emph{concurrent IM-privacy-loss accumulator} if for every $d\in \calD$, $\calG(\cdot; d)$-$OdomCon(IM)$ is a valid $d$-$\calD$ DP odometer. 
\end{definition}

\subsection{I.M.-to-I.M. Interactive Postprocessing}

Before we move on to defining concurrent filters and odometers, we will formulate interactive postprocessing in terms of state machines, which will become a helpful tool for us to prove important theorems related to concurrent filters and odometers for $f$-DP and R\'enyi-DP mechanisms. We will first give a formal definition of person-in-the-middle mechanisms. 

\begin{definition}[Person-in-the-middle mechanism]\label{def.pim}  An {\em interactive postprocessing mechanism} (or `PIM mechanism' for `person-in-the-middle') is a randomized mechanism $\cP : \zo^* \times \{Q,A\} \times \zo^* \rightarrow \zo^* \times \{Q,A\} \times \zo^*$.  $\cP$ takes its current state $s \in \zo^*$, a value $v\in \{Q,A\}$ to indicate whether it is receiving a query (from an analyst or adversary) or an answer (from the mechanism it is postprocessing), and a message $m$ (which is either a query or an answer) and returns a new state $s'\in \zo^*$, a value $v'\in \{Q,A\}$ to indicate whether it is asking a query (of the mechanism it is postprocessing) or providing an answer (if it is answering a query) and message $m'$ (which is either a query or an answer), denoted $(s',v',m')=\cP(s,v,m)$.
\end{definition}

With the PIM primitive, we can now more precisely define a postprocessed interactive mechanism. 

\begin{definition}[Postprocessed interactive mechanism]
Let $\cM$ be an interactive mechanism and $\cP$ be an interactive postprocessing. Then
the postprocessing of $\cM$ by a PIM algorithm  $\cP$, 
denoted $\cP\circ \cM$, is the interactive mechanism defined in Algorithm \ref{alg:p_circ_m}.
\end{definition}

\SetKwFunction{AlgPost}{$(\cP \circ \cM)$}

\begin{algorithm}
    \DontPrintSemicolon
    \caption{Postprocessing of $\cM$ by PIM algorithm $\cP$, denoted $\cP \circ \cM$.}\label{alg:p_circ_m}
        \Fn{\AlgPost{$s, m$}}{
            \If{$m = \lambda$}{
                $s^\cM \gets s$, $s^\cP \gets \lambda$, $m \gets \lambda$, $v \gets Q$ \tcp*[f]{initialize $\cM$ and $\cP$}
            }\Else{
                 Parse $s$ as $s = (s^\cM, s^\cP)$\;
                 Let $(s^\cP, v, m) \gets \cP(s^\cP, Q, m)$\;
            }
            \While(\tcc{while $\cP$ is equipped to interact with $\cM$, conduct interaction}){$v = Q$}{
                $(s^\cM, m) \gets \cM(s^\cM, m)$\;
                $(s^\cP, v, m) = \cP(s^\cP, A, m)$\;
            }
             \Return $((s^\cM, s^\cP), m)$\;
        }
\end{algorithm}

\begin{theorem}[Privacy holds under I.M.-to-I.M. postprocessing]\label{thm.pim_privacy}
If $\cP: \zo^* \times \{Q,A\} \times \zo^* \rightarrow \zo^* \times \{Q,A\} \times \zo^*$ is an interactive postprocessing algorithm and $\cM$ is a $d$-$\calD$ DP interactive mechanism, then $\cP\circ \cM$ is $d$-$\calD$ DP.
\end{theorem}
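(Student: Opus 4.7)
The plan is to reduce the claim about $\cP \circ \cM$ to the DP guarantee of $\cM$ itself by folding the postprocessing mechanism $\cP$ into the adversary. Concretely, I would argue that every adversary that interacts with $\cP \circ \cM$ can be simulated by an adversary that interacts directly with $\cM$, and that the former's view is a (possibly randomized) postprocessing of the latter's. Combined with Lemma \ref{lemma.deterministic_adversaries} and the postprocessing closure property in item~3 of Definition \ref{def.genprobdistance}, this yields the result.

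First, by Lemma \ref{lemma.deterministic_adversaries}, it suffices to show that for every pair of neighboring datasets $x, x'$ and every \emph{deterministic} adversary $\cA$ interacting with $\cP \circ \cM$, we have
\[
D\bigl(\View(\cA \leftrightarrow \cP \circ \cM(x)),\ \View(\cA \leftrightarrow \cP \circ \cM(x'))\bigr) \preceq d.
\]
Fix such $\cA$. Next I would construct a new interactive algorithm $\cA'$ (a randomized state machine) that interacts directly with $\cM$ and internally emulates both $\cA$ and $\cP$. Concretely, $\cA'$'s state records the pair $(s^\cA, s^\cP)$ of internal states; upon receiving an answer $m$ from $\cM$, $\cA'$ feeds $(s^\cP, A, m)$ to $\cP$, and so long as $\cP$ returns a $Q$-tagged message (i.e., a query for $\cM$), $\cA'$ forwards that message to $\cM$. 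Once $\cP$ returns an $A$-tagged message, $\cA'$ hands it to $\cA$ (which is deterministic), obtains $\cA$'s next query, passes it back into $\cP$ tagged with $Q$, and continues the internal loop until $\cP$ again emits a message for $\cM$. The randomness of $\cA'$ is exactly the randomness used by $\cP$.

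The key technical step is to verify by induction on the round index $i$ in Definition \ref{def.protocol} that the joint distribution over all messages exchanged between $\cM$ and $\cA'$, together with $\cP$'s coin tosses absorbed inside $\cA'$, has the same distribution as the full transcript of the interaction between $\cA$ and $\cP \circ \cM(x)$ expanded according to Algorithm \ref{alg:p_circ_m}. This is a syntactic correspondence between two scheduling policies for the same sequence of local computations; the only thing to track carefully is that every $Q$ phase of $\cP$ in Algorithm \ref{alg:p_circ_m} is what triggers a round with $\cM$ in the direct interaction, and every $A$ phase of $\cP$ is what triggers a round with $\cA$ internally. Given this coupling, $\View(\cA \leftrightarrow \cP \circ \cM(x))$ is a fixed deterministic function $g$ (independent of $x$, since $\cA$ is deterministic and $g$ just projects onto the messages $\cA$ would see) of $\View(\cA' \leftrightarrow \cM(x))$, and likewise for $x'$.

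Finally, I would apply the hypothesis that $\cM$ is $d$-$\calD$ DP to obtain
\[
D\bigl(\View(\cA' \leftrightarrow \cM(x)),\ \View(\cA' \leftrightarrow \cM(x'))\bigr) \preceq d,
\]
and then invoke postprocessing closure with the function $g$ above to conclude the desired bound, giving that $\cP \circ \cM$ is $d$-$\calD$ DP. The main obstacle I anticipate is the bookkeeping of the equivalence in the inductive step, since Algorithm \ref{alg:p_circ_m} batches a variable number of $\cP$--$\cM$ exchanges into a single ``round'' of $(\cA \leftrightarrow \cP \circ \cM)$, so the two transcripts are not literally identical round-by-round; one must specify the coupling that aligns the two schedules and check that $\cA$'s view is recoverable from $\cA'$'s view.
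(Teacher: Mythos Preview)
Your proposal is correct and follows essentially the same approach as the paper: fold $\cP$ into the adversary to obtain an adversary $\cA'$ against $\cM$, invoke the $d$-$\calD$ DP guarantee of $\cM$ for $\cA'$, and then recover $\View(\cA \leftrightarrow \cP\circ\cM(x))$ as a postprocessing $g$ of $\View(\cA' \leftrightarrow \cM(x))$. The paper makes the construction of $\cA'$ and of $g$ explicit via pseudocode (Algorithms~\ref{alg.im_to_im_adversary} and~\ref{alg.ddp_postprocessing}), which is exactly the bookkeeping you flag as the main obstacle; your observation that $g$ can be taken deterministic because $\cP$'s coins already appear in $\View(\cA' \leftrightarrow \cM(x))$ is implicit in the paper's construction as well.
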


\begin{proof}
Let $\cA$ be a deterministic adversary against $\cP \circ \cM$. By Definition \ref{def.view}, $\View( \cA\leftrightarrow \cP \circ \cM(x))  = (r_0, m_1, r_2, m_3, r_4, \dots)$, where $r_i$ are the random coins that $\cA$ tosses and $m_i$ are the messages received by $\cA$. Define an adversary $\cA'$ that maintains the states of $\cA$ and $\cP$ as submachines, constructed as in Algorithm \ref{alg.im_to_im_adversary}. $\cA'$ is a normal adversary, so it will always get started by receiving an answer and end by making a query. 

Informally, Algorithm \ref{alg.im_to_im_adversary} says that when $\cA'$ is to generate a query (determined by $v = Q$), it will run $\cA$ with $\cP$ to generate a query. When $\cA'$ is to receive an answer (determined by $v = A$), it will run $\cP$ to first check if $\cP$ wants to interactively process it with $\cM$ first. If so, the interaction between $\cP$ and $\cM$ would proceed until $\cP$ determines that the answer is in a state to be passed to $\cA$, done by checking the output value $v_{out}$ of $\cP$ given the state, the value $A$, and the message as inputs at that round.

\SetKwFunction{AlgA}{$\calA'$}

\begin{algorithm}
    \DontPrintSemicolon
    \caption{Adversary $\calA'$ that maintains the states of $\cA$ and $\cP$ as submachines.} \label{alg.im_to_im_adversary}
        \Fn{\AlgA{$s, m$}}{
            \If(\tcp*[f]{need to initialize $\cA'$}){$s = \lambda$} {
                $(s^\cA, m_0) \gets \cA(\lambda, \lambda)$ \tcp*[f]{initialize adversary submachine}
                $(s^\cP, v, m') \gets \cP(\lambda, A, m)$ \tcp*[f]{initialize PIM submachine}
            }\Else{
                Parse $s = (s^\cA, s^\cP)$\;
            }
            $v_{out} \gets A$\tcp*[f]{initialize $v_{out}$}
            \If(\tcp*[f]{$\cP$ will provide a query to $\cM$}){$v = Q$}{
                \While{$v_{out} = A$}{
                    Let $((s^\cA)', \_) \gets \cA(s^\cA, m)$\;
                    Let $((s^\cP)', v_{out}, m') \gets \cP(s^\cP, v, \_)$\;
                    $s^\cA \gets (s^\cA)'$, $s^\cP \gets (s^\cP)'$, $m \gets m'$\;
                }
                $s' \gets ((s^\cA)', (s^\cP)')$\;
                $v \gets A$ \tcp*[f]{prime $\cA$ to receive an answer}
            }\ElseIf{$v = A$}{
                $(s^\cP, v_{out}, m') \gets \cP(s^\cP, v, m)$\;
                $s' \gets (s^\cA, s^\cP)$\;
                \If(\tcp*[f]{$\cP$ will provide an answer to $\cA$}){$v_{out} = A$}{
                    $v \gets Q$ \tcp*[f]{prime $\cA$ to ask a query}
                }
            }
            \Return{$(s', m')$}
        }
\end{algorithm}

Because $\cM$ is $d$-$\calD$ DP, the interaction between $\cA'$ and $\cM$ is $d$-$\calD$ DP. Define a noninteractive postprocessing function $g$ that transforms $\View(\cA' \leftrightarrow \cM(x)) = (r_0', m_1', r_2', m_3', \dots)$ on any dataset $x$ as in Algorithm \ref{alg.ddp_postprocessing}. Informally, the new view object generated by Algorithm \ref{alg.ddp_postprocessing} makes visible the submachine interaction between $\cA$ and $\cP$ in the operation of $\cA'$, and snips the interaction between $\calA'$ and $\cM$ such that if there is an extended sequence between $\cM$ and submachine $\cP$ of $\cA$, only the final answer from $\cM$ before $\cP$ and $\cA$ begin interacting will be included. Since noninteractive postprocessing preserves privacy properties by Definition \ref{def.genprobdistance} (3), we know that 
$$D(g(\View(\cA' \leftrightarrow \cM)(x)) || g(\View(\cA' \leftrightarrow \cM)(x'))) \preceq D(\View(\cA' \leftrightarrow \cM)(x) || \View(\cA' \leftrightarrow \cM)(x')) \preceq d.$$ 

\SetKwFunction{AlgDDPpost}{$g$}

\begin{algorithm}
    \DontPrintSemicolon
    \caption{Noninteractive postprocessing $g$ that transforms $\View(\cA' \leftrightarrow \cM(x))$ to $\View(\cA \leftrightarrow (\cP \circ \cM)(x))$ 
    }\label{alg.ddp_postprocessing}
        \Fn{\AlgDDPpost{$\View(\calA' \leftrightarrow \cM(x))$}}{
            $i \gets 1$ \tcp*[f]{$i$ indexes input}
            $j \gets 0$ \tcp*[f]{$j$ indexes output}
            Parse $\View(\calA' \leftrightarrow \cM(x))$ as $(r_0', m_1', r_2', m_3', \dots)$\;
            $v \gets Q, v_{out} \gets A$\;
            $(s^\cA, m^\cA) \gets \cA(\lambda, \lambda)$ \;
            $(s^\cP, v, m^\cP) \gets \cP(\lambda, v, m'_1)$\;
            \While{$m'_i$} {
                \If(\tcp*[f]{adding interaction $\cA \leftrightarrow \cP$}){$v = Q$}{
                \While(\tcp*[f]{simulate interaction of $\cA$ and $\cP$}){$v_{out} \neq Q$} {
                    $(s^\cA, m^\cA) \gets \cA(s^\cA, m^\cA; r^\cA_j)$\;
                    $(s^\cP, v_{out}, m^\cP) \gets \cP(s^{\cP}, v, m^\cA; r^\cP_{j})$ \tcp*[f]{using the randomness of $\cP$ to get the new state}
                    $m_{j + 1} \gets m^\cP$ \tcp*[f]{record message $\cA$ receives from $\cP$}
                    $j \gets j + 2$\;
                }
                $v \gets A$\;
                }\ElseIf(\tcp*[f]{abbreviating interaction $\cP \leftrightarrow \cM$}){$v = A$}{
                    $(s^\cP, v_{out}, m) \gets \cP(s^{\cP}_i, v, m_i'; r^\cP_{j})$\;
                    \If{$v_{out} = Q$}{
                        $v \gets A$\;
                    }\ElseIf(\tcp*[f]{record the history}){$v_{out} = A$}{
                        $m_{j + 1} \gets m$\;
                        $v \gets Q$\;
                        $j \gets j + 2$\;
                    }
                }
                $i \gets i + 2$\;
            }
            \Return{$(r_0^\cA, m_1, r_2^\cA, m_3, \dots, r_j^\cA, m_{j + 1}, \dots)$}\;
        }
\end{algorithm}

The view of $\cA$ against $\cP \circ \cM$ on any dataset $x$ can be computed by postprocessing the view of $\cA'$ against $\cM$ on $x$ using $g$; in other words, $\View(\cA \leftrightarrow \cP \circ \cM) \equiv g(\View(\cA' \leftrightarrow \cM))$ on any dataset $x$. Therefore, $D(\View(\cA \leftrightarrow \cP \circ \cM)(x) || \View(\cA \leftrightarrow \cP \circ \cM)(x')) \preceq d$. This means $\cP \circ \cM$ is also $d$-$\calD$ DP.
\end{proof}


A useful corollary follows directly from Theorem \ref{thm.pim_privacy}.

\begin{corollary}\label{corollary.pim_privacy}
    Suppose $\mathcal N: \zo^* \times \zo^* \to \zo^* \times \zo^*$ is an interactive mechanism such that for every pair of neighboring datasets $x, x'$, every deterministic adversary $\calA$, there exists an interactive mechanism $\calM$ that is $d$-$\calD$ DP on dataset universe $\{ x, x' \}$ and a PIM mechanism $\cP$ such that 
    \begin{align*}
        \View(\mathcal A \leftrightarrow \mathcal N(x)) & \equiv \View(\mathcal A \leftrightarrow \mathcal P \circ \calM(x)) \\
        \View(\mathcal A \leftrightarrow \mathcal N(x')) & \equiv \View(\mathcal A \leftrightarrow \mathcal P \circ \calM(x'))
    \end{align*}
\end{corollary}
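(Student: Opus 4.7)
The plan is to reduce the claim to Theorem \ref{thm.pim_privacy} via Lemma \ref{lemma.deterministic_adversaries}. Note first that the hypothesis lets the choice of $\calM$ and $\cP$ depend on the fixed triple $(x, x', \cA)$, and it only requires $\calM$ to be $d$-$\calD$ DP on the two-point dataset universe $\{x, x'\}$; this is exactly the right setup to apply postprocessing on a per-triple basis, rather than needing one global $\calM$.

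The argument proceeds as follows. First, by Lemma \ref{lemma.deterministic_adversaries}, it suffices to show that for every pair of neighboring datasets $x, x'$ and every deterministic adversary $\cA$,
\[
D\bigl(\View(\cA \leftrightarrow \cN(x)) \,\|\, \View(\cA \leftrightarrow \cN(x'))\bigr) \preceq d.
\]
Fix such $x, x', \cA$ and invoke the hypothesis to obtain a $d$-$\calD$ DP interactive mechanism $\cM$ (on the universe $\{x, x'\}$) together with a PIM mechanism $\cP$ whose compositions satisfy the two view-equivalences stated in the corollary.

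Next, applying Theorem \ref{thm.pim_privacy} to $\cP$ and $\cM$ gives that $\cP \circ \cM$ is $d$-$\calD$ DP on $\{x, x'\}$, and in particular
\[
D\bigl(\View(\cA \leftrightarrow (\cP \circ \cM)(x)) \,\|\, \View(\cA \leftrightarrow (\cP \circ \cM)(x'))\bigr) \preceq d.
\]
Finally, the two distributional equivalences from the hypothesis imply that replacing $\cP \circ \cM$ with $\cN$ on each side preserves the distribution of the view, so the same $\preceq d$ bound transfers to $\cN$. Since $x, x', \cA$ were arbitrary, Lemma \ref{lemma.deterministic_adversaries} yields that $\cN$ is $d$-$\calD$ DP.

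I do not anticipate a serious obstacle here: the two nontrivial ingredients (reduction to deterministic adversaries, and postprocessing closure for PIM) are already established. The only subtlety worth flagging is to make explicit that although $\cM$ and $\cP$ depend on the triple $(x, x', \cA)$, this is harmless because the definition of $d$-$\calD$ DP quantifies over each such triple independently; the postprocessing theorem is applied once per triple and the resulting inequality is then pushed across the view equivalence.
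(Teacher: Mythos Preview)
Your proposal is correct and follows essentially the same approach as the paper: apply Theorem \ref{thm.pim_privacy} to conclude $\cP \circ \cM$ is $d$-$\calD$ DP on $\{x,x'\}$, then transfer the divergence bound to $\cN$ via the view equivalences. You are slightly more explicit than the paper in invoking Lemma \ref{lemma.deterministic_adversaries} to justify restricting to deterministic adversaries, which is indeed needed since the hypothesis only quantifies over those.
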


\begin{proof}

By Theorem \ref{thm.pim_privacy}, $\cP \circ \cM$ is $d$-$\calD$ DP on dataset universe $\{ x, x' \}$. Therefore, $D(\View(\mathcal A \leftrightarrow \mathcal N(x))||\View(\mathcal A \leftrightarrow \mathcal N(x'))) = D(\View(\mathcal A \leftrightarrow \mathcal P \circ \calM(x))||\View(\mathcal A \leftrightarrow \mathcal P \circ \calM(x'))) \preceq d$. Therefore, $\mathcal N$ is also $d$-$\calD$ DP.  
\end{proof}

A particularly convenient property of privacy-loss accumulators and privacy filters is that a valid privacy-loss accumulator can be converted into a valid privacy filter, and vice versa. This bijective property allows us to use a filter to build an odometer, and an odometer to build a filter of the form $\mathbb I(\calG(\cdot) \preceq d)$. Any nice properties of one would hold for the other. 

\begin{lemma}\label{lemma.f_odom}
\begin{enumerate}
    \item A function $\calG:\calD^*\rightarrow \calD$ is a valid NIM-privacy-loss accumulator if and only if $\calF: \calD^* \times \calD \to \{0, 1\}$ constructed from $\calG(\cdot)$ such that $\calF(\cdot;d) = \mathbb I(\calG(\cdot) \preceq d)$ is a valid $\calD$-DP NIM-filter.
    \item A function $\calG:\calD^*\rightarrow \calD$ is a valid concurrent IM-privacy-loss accumulator if and only if $\calF: \calD^* \times \calD \to \{0, 1\}$ constructed from $\calG(\cdot)$ such that $\calF(\cdot;d) = \mathbb I(\calG(\cdot) \preceq d)$ is a valid $\calD$-DP concurrent IM-filter.
\end{enumerate}
\end{lemma}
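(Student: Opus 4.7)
The plan is to observe that, for every fixed budget $d \in \calD$ and every deterministic adversary $\cA$ (the relevant case by Lemma \ref{lemma.deterministic_adversaries}), the filter view and the truncated odometer view coincide up to two deterministic, data-independent post-processing maps. The post-processing closure of $\calD$ DP (Definition \ref{def.genprobdistance}(3)) then yields both implications of both parts of the lemma in one stroke.

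Fixing $d$ and $\cA$, I would compare Algorithm \ref{alg.cf_interactive} against Algorithm \ref{alg.co_construction}, and in parallel Algorithm \ref{alg.k_noninteractive_comp} against Algorithm \ref{alg.o_construction}. The filter $\calF(\cdot; d)\text{-}FiltCon(IM)$ and the odometer $\calG\text{-}OdomCon(IM)$ respond identically on ordinary queries $(j,q)$ to already-launched mechanisms $\calM_j$, and they respond identically on every mechanism-launch query $(\calM', d')$ as long as the accumulated value $\calG(d_1, \ldots, d_k, d')$ remains $\preceq d$. The two processes diverge only at the first launch whose resulting total exceeds $d$: the filter issues $\texttt{insufficient budget}$ and enters its terminal Halt loop, whereas the odometer continues normally; but Algorithm \ref{alg:truncview} evaluates the accumulated loss immediately after that launch and exits, reporting the prefix of the transcript that stops strictly before the overshoot launch's initialization message is delivered.

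It follows that there exist deterministic maps $\phi_d$ (which appends the canonical $\texttt{insufficient budget}$ halt suffix, or its NIM analogue, whenever the truncation fires) and $\psi_d$ (which strips that suffix) satisfying
\begin{align*}
\phi_d\bigl(\texttt{Trunc}_d(\View(\cA \leftrightarrow \calG\text{-}OdomCon(IM)(x)))\bigr) &\equiv \View(\cA \leftrightarrow \calF(\cdot; d)\text{-}FiltCon(IM)(x)), \\
\psi_d\bigl(\View(\cA \leftrightarrow \calF(\cdot; d)\text{-}FiltCon(IM)(x))\bigr) &\equiv \texttt{Trunc}_d(\View(\cA \leftrightarrow \calG\text{-}OdomCon(IM)(x))),
\end{align*}
together with the analogous identities in the NIM case. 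For the $(\Rightarrow)$ direction of either part, suppose the odometer is valid; then the truncated view is $d$-$\calD$ DP for every $d$, so applying $\phi_d$ and Definition \ref{def.genprobdistance}(3) yields that the filter view is $d$-$\calD$ DP, which is the filter's validity. The $(\Leftarrow)$ direction is symmetric via $\psi_d$.

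The only bookkeeping subtlety, and what I expect to be the main obstacle, is pinning down that the truncation in Algorithm \ref{alg:truncview} really does exclude the overshoot launch's initialization response (which is data-dependent), rather than including it; without this, $\psi_d$ could not be defined as a deterministic post-processing of the filter view alone, since recovering the excluded initialization message would require fresh access to the sensitive dataset $x$. Once that indexing point is verified, the NIM and concurrent IM cases are handled by exactly the same template, because in neither setting do queries to already-running mechanisms affect the accumulated value of $\calG$.
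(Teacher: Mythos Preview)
Your plan is close to correct and more direct than the paper's, but it overlooks one query type. The odometer (Algorithm~\ref{alg.co_construction}, and likewise Algorithm~\ref{alg.o_construction}) supports \texttt{privacy\_loss} queries, answering with $\calG(d_1,\ldots,d_k)$; the filter (Algorithm~\ref{alg.cf_interactive}, likewise Algorithm~\ref{alg.k_noninteractive_comp}) does not recognize them and returns \texttt{invalid query}. Both responses are data-independent, but a deterministic $\cA$ may branch on which response it receives and subsequently launch \emph{different} mechanisms depending on whether it is talking to the odometer or the filter. Once that happens the two transcripts carry outputs of different mechanisms on $x$, and no data-independent map can convert one into the other. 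So the pair $\phi_d,\psi_d$ you claim need not exist for every $\cA$, and the argument as written does not cover all adversaries.

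The paper avoids this by routing through interactive PIM post-processing (Theorem~\ref{thm.pim_privacy}) rather than non-interactive maps on views: it introduces an auxiliary mechanism $\cM$ with $\View(\cA\leftrightarrow\cM(x))=\texttt{Trunc}_d(\View(\cA\leftrightarrow\CO(x)))$ and exhibits PIMs in both directions between $\cM$ and $\CF$; the PIM from filter to $\cM$ intercepts \texttt{privacy\_loss} queries and answers them locally from the history of $d_i$'s it has already seen, so the filter never sees them and the two interactions never diverge. Your approach is easily repaired in the same spirit: first argue that without loss of generality $\cA$ never issues \texttt{privacy\_loss} (replace $\cA$ by an $\cA'$ that simulates those queries internally, using that the answer is a deterministic function of the $d_i$'s already in $\cA$'s own transcript); your $\phi_d,\psi_d$ construction then goes through verbatim for $\cA'$; and finally recover the bound for the original $\cA$ by one more data-independent post-processing that reinserts the deterministic \texttt{privacy\_loss} rounds into the truncated view. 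Your observation about the truncation excluding the overshoot launch's initialization message is correct and is the only remaining bookkeeping point once \texttt{privacy\_loss} is handled.
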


The full proof of this theorem is enclosed in  Appendix \ref{appendix.f_odom_proof}. 
The approach is to first define an I.M. $\calM$ that such that for any adversary $\calA$ and dataset $x$, $\View(\calA \leftrightarrow \calM(x))$ is exactly the same as $\texttt{Trunc}_d(\View(\calA \leftrightarrow \CO(x, \cdot)))$. 
Then, using Theorem \ref{thm.pim_privacy}, we can show that $\cM$ is an \emph{interactive post-processing} of $\calF(\cdot; d)$-$FiltCon(IM)$.

\section{Concurrent Filter \& Odometer for $f$-DP}\label{sec.fDP}

$f$-DP is based on a hypothesis testing interpretation of differential privacy. Consider a hypothesis testing problem that attempts to quantify the difficulty of measuring the output of a mechanism on two adjacent datasets: 
\begin{center}
    $H_0$: the dataset is $x$ versus $H_1$: the dataset is $x'$
\end{center}
Let $Y := \calM(x)$ and $Y' := \calM(x')$ be the output distributions of mechanism $\calM$ on neighboring datasets $x, x'$. For a given rejection rule $\phi$, the type I error $\alpha_\phi = \mathbb E[\phi(Y)]$ is the probability of rejecting $H_0$ when $H_0$
is true, while the type II error $\beta_\phi = 1 - \mathbb E[\phi(Y')]$ is the probability of failing to reject $H_0$ when $H_1$ is true. 

The trade-off function characterizes the optimal boundary between achievable type I and type II errors, which in our case is calculated by finding the minimal achievable type II error after fixing the type I error at any level.

\begin{definition}[Trade-off function \cite{dong2019gaussian}]
    For any two probability distributions $Y$ and $Y'$ on the same space, the \emph{trade-off function} $T(Y, Y'): [0, 1] \to [0, 1]$ is defined as $$T(Y, Y')(\alpha) = \inf\{\beta_\phi: \alpha_\phi < \alpha\},$$ where the infimum is taken over all measurable rejection rules $\phi$.
\end{definition}

\begin{definition}[$f$-DP~\cite{dong2019gaussian}]\label{def.fdp}
    Let $f$ be a tradeoff function. A mechanism $\mathcal M: \mathcal X \to \mathbb R$ is $f$-differentially private if for every pair of neighboring datasets $x, x' \in \mathcal X$, we have $$T(\mathcal M(x), \mathcal M(x')) \geq f.$$
\end{definition}

In our generalized DP framework in Definition \ref{def.genprobdistance}, the probability distance measure for $f$-DP is $(S, \preceq, T)$. The distance mapping is the trade-off function $T$ in Definition \ref{def.fdp}. The partially ordered set $(S, \preceq)$ consists of the set $S$ of all trade-off functions $g: [0, 1] \to [0, 1]$ such that $g$ is convex, continuous, non-increasing, and $g(x) \leq 1 - x$ for $x \in [0, 1]$. The partial ordering is defined as $f_1\preceq f_2$ if  $f_1(\alpha)\ge f_2(\alpha)$ holds  for all $\alpha \in [0,1]$. A larger trade-off function means less privacy loss. 

The recent result of Vadhan and Zhang \cite{vadhan2022concurrent} shows that every interactive $f$-DP mechanism can be simulated by an interactive postprocessing of a noninteractive $f$-DP mechanism.

\begin{theorem}[Reduction of $f$-DP to noninteractive mechanism \cite{vadhan2022concurrent}]\label{thm.fdp_post}
	For every $f$ in the set $\mathcal S$ of trade-off functions, every pair of adjacent datasets $x, x'$, and every interactive $f$-DP mechanism $\calM$ with finite communication complexity, there exists a pair of random variables $Y, Y\,'$ with support $[0, 1]$ and a randomized interactive mechanism $\cP$ such that $D(Y, Y\,') \preceq f$.

    For every adversary $\cA$, we have 
    \begin{align*}
        \View(\cA \leftrightarrow \calM(x)) & \equiv \View(\cA \leftrightarrow \cP(Y)) \\
        \View(\cA \leftrightarrow \calM(x')) & \equiv \View(\cA \leftrightarrow \cP(Y\,')).
    \end{align*}
\end{theorem}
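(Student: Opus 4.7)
The plan is to bootstrap off Theorem~\ref{thm.reduction}. That theorem already produces a noninteractive $f$-DP mechanism $\mathcal{N}$ and an interactive postprocessing $\cP_0$ such that $\View(\calA \leftrightarrow \calM(x)) \equiv \View(\calA \leftrightarrow \cP_0(\mathcal{N}(x)))$ (and analogously for $x'$) for every adversary $\calA$. So it suffices to further postprocess $\mathcal{N}$ down to a mechanism whose output is supported on $[0,1]$ without losing the trade-off-function guarantee, and then compose with $\cP_0$ to obtain the desired $\cP$.

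The technical core is the following auxiliary claim: for any pair of probability distributions $P, Q$ on a standard Borel space, there exist random variables $Y, Y'$ valued in $[0,1]$ together with a measurable map $g$ such that $g(Y) \sim P$, $g(Y') \sim Q$, and $T(Y, Y') = T(P, Q)$. I'd apply this with $P = \mathcal{N}(x)$ and $Q = \mathcal{N}(x')$. Because $\calM$ has finite communication, the transcripts, and hence the support of $\mathcal{N}$, are countable, so one can avoid invoking the general Borel isomorphism theorem: enumerate the support as $\{\omega_1, \omega_2, \ldots\}$, partition $[0,1]$ into intervals $I_i$ of length $\mu_i := (P(\omega_i) + Q(\omega_i))/2$, put Lebesgue density $P(\omega_i)/\mu_i$ on $I_i$ to define $Y$ and density $Q(\omega_i)/\mu_i$ to define $Y'$, and set $g(y) = \omega_i$ for $y \in I_i$. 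The likelihood ratio $dY'/dY$ is constant and equal to $Q(\omega_i)/P(\omega_i)$ on $I_i$, so the law of the likelihood ratio under $Y$ (resp.\ $Y'$) matches that under $P$ (resp.\ $Q$); the Neyman--Pearson characterization of the trade-off function then yields $T(Y, Y') = T(P, Q) \geq f$, i.e.\ $D(Y, Y') \preceq f$.

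To close the argument I define $\cP$ as the randomized interactive mechanism that, on input $y \in [0,1]$, first computes $g(y)$ and then behaves exactly like $\cP_0(g(y))$. Since $g(Y) \sim \mathcal{N}(x)$ exactly, Theorem~\ref{thm.reduction} gives $\View(\calA \leftrightarrow \cP(Y)) \equiv \View(\calA \leftrightarrow \cP_0(\mathcal{N}(x))) \equiv \View(\calA \leftrightarrow \calM(x))$ for every adversary $\calA$, and symmetrically for $x'$, establishing the theorem.

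The main obstacle is the auxiliary quantile/coupling claim. In the finite-communication setting the explicit construction above is elementary because the dominating measure is purely atomic, but phrasing it cleanly for general output spaces would require the isomorphism theorem for standard Borel spaces to realize any probability measure as Lebesgue (or a discrete) measure on $[0,1]$. A secondary subtlety is verifying that the $T(Y, Y')$ computed in the Lebesgue-density representation genuinely equals $T(P, Q)$; this rests on the fact that the trade-off function depends only on the law of the likelihood ratio, a law which the construction reproduces exactly.
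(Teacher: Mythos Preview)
The paper does not prove Theorem~\ref{thm.fdp_post}; it is imported wholesale from \cite{vadhan2022concurrent}, as is Theorem~\ref{thm.reduction}. These are two phrasings of the same external result (the second makes the $[0,1]$-valued output explicit), and the paper treats both as black boxes. There is therefore no in-paper proof to compare your attempt against.

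Your reduction from Theorem~\ref{thm.reduction} to Theorem~\ref{thm.fdp_post} is sound in spirit, and the likelihood-ratio argument for $T(Y,Y') = T(P,Q)$ is correct. One gap worth flagging: you claim the support of $\mathcal{N}$ is countable ``because $\calM$ has finite communication,'' but Theorem~\ref{thm.reduction} as stated says nothing about the output space of $\mathcal{N}$ --- it is merely \emph{some} noninteractive $f$-DP mechanism, not one whose outputs are transcripts of $\calM$. You acknowledge the Borel-isomorphism fallback, which does patch this, but the countability shortcut as written is not justified by the statement you are invoking. In the actual construction of \cite{vadhan2022concurrent}, the noninteractive mechanism is built with $[0,1]$-valued output from the outset, so no such compression step is needed.
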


Essentially, the interactive postprocessing function in Theorem \ref{thm.fdp_post} takes the output of the non-interactive mechanism as the input, and interacts with the adversary just as the interactive mechanism does. It will use the output of the non-interactive mechanism to simulate all the answers to the adversary. 
This result is useful because it implies that for any fixed pair of datasets $x, x'$, to analyze the concurrent composition of interactive mechanisms $\mathcal M_i$, it suffices to consider the composition of their noninteractive versions $\mathcal N_i$. As a result, composition theorems for
noninteractive mechanisms extend to the concurrent composition of interactive $f$-DP mechanisms. 

Since $(\epsilon, \delta)$-DP is a specific case of $f$-DP \cite{dong2019gaussian}, where $f_{\epsilon, \delta} = \max \{ 0, 1 - \delta - \exp(\epsilon)\alpha, \exp(-\epsilon)(1 - \delta - \alpha)\}$, a corollary of this result is that an interactive $(\epsilon, \delta)$-DP mechanism can be postprocessed to a noninteractive $(\epsilon, \delta)$-DP mechanism, a result shown independently by Lyu \cite{lyu2022composition}. 

\subsection{Concurrent Filter for $f$-DP}

Using Theorem \ref{thm.fdp_post}, we derive a concurrent $f$-DP filter from an $f$-DP filter for noninteractive mechanisms. 

\begin{theorem}[Concurrent $f$-DP Filter]\label{thm.fDP.main}
Suppose that $\calF: \calD^* \times \calD \to \{1, 0\}$ is a valid 
$f$-DP continuation rule for noninteractive mechanisms. Then $\calF$ is a valid $f$-DP continuation rule for interactive mechanisms with finite communication.
\end{theorem}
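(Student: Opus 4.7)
The plan is to reduce the concurrent interactive filter to the noninteractive filter by constructing a person-in-the-middle (PIM) simulator, and then invoke Corollary \ref{corollary.pim_privacy}. Fix any $f \in \mathcal{S}$. By Lemma \ref{lemma.deterministic_adversaries} it suffices to consider a deterministic adversary $\cA$ and a fixed pair of adjacent datasets $x, x'$. By hypothesis, $\calF(\cdot; f)\text{-}Filt(NIM)$ is $f$-DP, so I will use it as the ``base'' mechanism $\cM^{*}$ in the corollary, and construct a PIM $\cP^{*}$ such that $\View(\cA \leftrightarrow \cP^{*} \circ \cM^{*}(x))$ and $\View(\cA \leftrightarrow \cP^{*} \circ \cM^{*}(x'))$ are distributed identically to $\View(\cA \leftrightarrow \CF(x))$ and $\View(\cA \leftrightarrow \CF(x'))$ respectively. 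Applying Corollary \ref{corollary.pim_privacy} then finishes the proof.

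The PIM $\cP^{*}$ works as follows. It intercepts two kinds of messages from $\cA$. When $\cA$ sends a launch message $(\calM_{k+1}, f_{k+1})$, $\cP^{*}$ invokes Theorem \ref{thm.fdp_post} on the interactive $f_{k+1}$-DP mechanism $\calM_{k+1}$ and the fixed pair $(x, x')$ to obtain random variables $Y_{k+1}, Y'_{k+1}$ on $[0,1]$ with $T(Y_{k+1}, Y'_{k+1}) \succeq f_{k+1}$, together with a PIM $\cP_{k+1}$ that simulates $\calM_{k+1}$ from a single sample. Interpret $(Y_{k+1}, Y'_{k+1})$ as the outputs of a noninteractive $f_{k+1}$-DP mechanism $\cN_{k+1}$ on $x$ and $x'$ respectively. $\cP^{*}$ then submits $(\cN_{k+1}, f_{k+1})$ to $\cM^{*}$; if $\cM^{*}$ responds with \texttt{insufficient budget} or \texttt{Halt}, $\cP^{*}$ forwards that verbatim to $\cA$, and otherwise it receives the sample $Y_{k+1}$ (resp.\ $Y'_{k+1}$), stores it as the seed for $\cP_{k+1}$, feeds $\lambda$ into $\cP_{k+1}$ to recover the initial message of $\calM_{k+1}$, and returns that message to $\cA$. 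When $\cA$ sends an ordinary query $(j, q)$ directed at a previously launched mechanism, $\cP^{*}$ routes $q$ to its locally maintained copy of $\cP_{j}$ on seed $Y_{j}$ (resp.\ $Y'_{j}$) and forwards $\cP_{j}$'s answer back. No further communication with $\cM^{*}$ is needed once the seeds have been collected.

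The equivalence of views then follows componentwise: for each launched mechanism $\calM_{i}$, Theorem \ref{thm.fdp_post} gives $\View(\cA \leftrightarrow \calM_{i}(x)) \equiv \View(\cA \leftrightarrow \cP_{i}(Y_{i}))$, and the continuation rule $\calF$ in $\cM^{*}$ processes the same sequence $(f_{1}, \ldots, f_{k})$ of privacy parameters as in $\CF$, so its halt/continue decisions and the observable launch responses to $\cA$ coincide in both experiments. Because the interleaving structure of $\cA$'s queries is imposed by $\cA$ in exactly the same way in both experiments, one can verify inductively on the number of exchanged messages that the joint distributions of $\cA$'s views agree.

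The main obstacle is the dependence of the reduction on the pair $(x, x')$: the simulators $\cP_{i}$ and the ``pseudo-mechanism'' $\cN_{i}$ produced by Theorem \ref{thm.fdp_post} are tailored to $(x,x')$ and only satisfy $f_{i}$-DP on the universe $\{x, x'\}$. This is precisely why I must use Corollary \ref{corollary.pim_privacy} rather than Theorem \ref{thm.pim_privacy} directly; the corollary is stated to allow exactly this dataset-pair-dependent simulation. A secondary subtlety is that each launched $\calM_{k+1}$ must have finite communication to apply Theorem \ref{thm.fdp_post}, which matches the hypothesis in the theorem statement. One should also verify that the NIM filter's syntactic check ``$\cN_{k+1}$ is $f_{k+1}$-$\calD$ DP'' succeeds, which is ensured because $\cN_{k+1}$ is constructed to satisfy the required trade-off bound by Theorem \ref{thm.fdp_post}.
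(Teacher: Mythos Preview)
Your proposal is correct and follows essentially the same approach as the paper: fix $(x,x')$ and a deterministic adversary, use Theorem~\ref{thm.fdp_post} to replace each launched interactive mechanism $\calM_{k+1}$ by a noninteractive seed mechanism $\cN_{k+1}$ together with a simulator $\cP_{k+1}$, build a PIM that forwards launch requests to $\calF(\cdot;f)\text{-}Filt(NIM)$ and answers $(j,q)$ queries locally via $\cP_j$, and then invoke Corollary~\ref{corollary.pim_privacy}. You also correctly flag the two subtleties the paper leaves implicit---the $(x,x')$-dependence of the simulators (which is why Corollary~\ref{corollary.pim_privacy} rather than Theorem~\ref{thm.pim_privacy} is needed) and the finite-communication hypothesis required to apply Theorem~\ref{thm.fdp_post}.
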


\begin{proof}
We need to prove that for every $f \in \mathcal S$, $\calF(\cdot; f)$-$FiltCon(IM)$ is an $f$-DP I.M., which is equivalent to showing that for every pair of adjacent datasets $x, x'$, $\calF(\cdot; f)$-$FiltCon(IM)$ is $f$-DP on the dataset universe $\{x, x'\}$. Fixing $x, x'$, we will define an I.M.-to-I.M. postprocessing $\cP$ that takes $\NIF$ to $\CF$. 
For simplicity, we will assume that $\cA$ is deterministic, as Lemma \ref{lemma.deterministic_adversaries}  allows us to do. By the postprocessing property of $f$-DP IMs given in Theorem \ref{thm.fdp_post}, we know that an $f_j$-DP mechanism $\cM_i$ can be simulated by an interactive postprocessing $\cP_j$ of a noninteractive $f_j$-DP mechanism $\mathcal N_j$.

$\cP$ is constructed as follows. $\cP$ depends on whether it is primed to receive a query $v = Q$ from the adversary or an answer $v = A$ from the mechanism, to pass to the other party. If the adversary asks to start a new interactive mechanism, $\cP$ will make note of the corresponding noninteractive mechanism and the noninteractive-to-interactive postprocessing and pass the request on $\calF\text{-}FiltSeq(NIM)$. Otherwise, if the adversary queries an existing interactive mechanism $\calM_j$, $\cP$ will use the corresponding interactive postprocessing $\cP_j$ to answer the query instead of passing the message forward to $\cM$. The only time $\cP$ will interact with $\cM$ is to start a new mechanism and to pass back the confirmation that a new mechanism has begun to $\cA$. The algorithmic pseudocode of $\cP$ is in the appendix in Algorithm \ref{alg.fdp_nif_cf_post}.

Note that this algorithm only interacts with $\NIF$ in starting a new mechanism. For the rest of the interactive queries, the algorithm directly uses the interactive postprocessing of the corresponding noninteractive mechanism in question to answer the queries. By assumption, $\NIF$ is $f$-DP and $\CF \equiv \cP \circ \NIF$ on the dataset universe $\{x, x'\}$. Thus, by Corollary \ref{corollary.pim_privacy}, $\CF$ is $f$-DP.
\end{proof}

An analogue of Theorem \ref{thm.fDP.main} holds for $(\eps, \delta)$-DP. 
\begin{corollary}[Concurrent $(\eps, \delta)$ Filter] \label{corollary.eps_delta_f}
    Every filter for noninteractive $(\eps, \delta)$-DP mechanisms is also a concurrent filter of interactive $(\eps, \delta)$-DP mechanisms.
\end{corollary}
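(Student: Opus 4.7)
The plan is to reduce directly to Theorem \ref{thm.fDP.main} by exploiting the correspondence between $(\eps,\delta)$-DP and $f_{\eps,\delta}$-DP. Recall from \cite{dong2019gaussian} that a mechanism is $(\eps,\delta)$-DP if and only if it is $f_{\eps,\delta}$-DP for the trade-off function $f_{\eps,\delta}(\alpha)=\max\{0,\,1-\delta-e^{\eps}\alpha,\,e^{-\eps}(1-\delta-\alpha)\}$. Thus the two notions of DP coincide on the level of mechanisms; what needs to be checked is that validity of the \emph{filter} translates between the two frameworks in both directions.

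First, given a valid $(\eps,\delta)$-DP NIM-filter $\calF$, I would define a companion continuation rule $\tilde\calF$ that operates on tuples of trade-off functions by $\tilde\calF(f_{\eps_1,\delta_1},\ldots,f_{\eps_k,\delta_k};\,f_{\eps,\delta}) = \calF((\eps_1,\delta_1),\ldots,(\eps_k,\delta_k);\,(\eps,\delta))$, and extend it in any measurable way (e.g., always \texttt{halt}) to arguments that are not of this $(\eps,\delta)$-form. Since every $f_{\eps_i,\delta_i}$-DP mechanism submitted to $\tilde\calF$-$\mathit{Filt(NIM)}$ is an $(\eps_i,\delta_i)$-DP mechanism, and vice versa, the interactive mechanism $\tilde\calF(\cdot;f_{\eps,\delta})$-$\mathit{Filt(NIM)}$ is \emph{distributionally identical} to $\calF(\cdot;(\eps,\delta))$-$\mathit{Filt(NIM)}$. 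Validity of $\calF$ as an $(\eps,\delta)$-DP NIM-filter therefore implies validity of $\tilde\calF$ as an $f$-DP NIM-filter. By Theorem \ref{thm.fDP.main}, $\tilde\calF$ is also a valid $f$-DP IM-filter for interactive mechanisms with finite communication.

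It remains to translate back: fix neighboring $x,x'$ and a deterministic adversary $\cA$. The interactive mechanism $\calF(\cdot;(\eps,\delta))$-$\mathit{FiltCon(IM)}$ behaves identically to $\tilde\calF(\cdot;f_{\eps,\delta})$-$\mathit{FiltCon(IM)}$ when the adversary submits $(\eps_i,\delta_i)$-DP interactive mechanisms, because each such mechanism is $f_{\eps_i,\delta_i}$-DP. Hence $T(\View(\cA\leftrightarrow \calF(\cdot;(\eps,\delta))\text{-}\mathit{FiltCon(IM)}(x)),\View(\cA\leftrightarrow \calF(\cdot;(\eps,\delta))\text{-}\mathit{FiltCon(IM)}(x')))\geq f_{\eps,\delta}$, which is exactly the statement that $\calF(\cdot;(\eps,\delta))$-$\mathit{FiltCon(IM)}$ is $(\eps,\delta)$-DP.

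The main subtlety is cosmetic rather than conceptual: ensuring that the ``lifted'' $f$-DP filter $\tilde\calF$ is well-defined on \emph{all} trade-off functions (not just those of the form $f_{\eps,\delta}$), since Definition \ref{def.ddp_f} quantifies over every $d\in \calD$. This is handled by the trivial extension above, because in $\calF(\cdot;(\eps,\delta))$-$\mathit{FiltCon(IM)}$ only $(\eps_i,\delta_i)$-DP sub-mechanisms are ever admitted, so the behavior of $\tilde\calF$ on exotic trade-off functions never enters the analysis for the privacy budget $f_{\eps,\delta}$. A second point to be cautious about is the finite-communication hypothesis inherited from Theorem \ref{thm.fDP.main}; this should already be part of the standing assumptions on interactive mechanisms here, and no additional work is required.
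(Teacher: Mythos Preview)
Your proposal is correct and follows the same route the paper takes: the paper simply notes that $(\eps,\delta)$-DP is the special case $f_{\eps,\delta}$-DP and invokes Theorem~\ref{thm.fDP.main}, and you spell out this reduction in detail by constructing the lifted continuation rule $\tilde\calF$ and verifying that validity transfers in both directions. Your careful handling of the domain extension and the finite-communication caveat are appropriate, though in the paper these points are left implicit.
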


Concurrent $f$-DP odometers can be defined similarly from $f$-DP odometers for noninteractive mechanisms.

\begin{theorem}[Concurrent $f$-DP odometer]\label{thm.fDP.odometer}

Suppose that $\calG: \calD^* \to \calD$ is a valid 
$f$-DP privacy-loss accumulator for noninteractive mechanisms. Then $\calG$ is a valid $f$-DP privacy-loss accumulator for interactive mechanisms with finite communication.

\end{theorem}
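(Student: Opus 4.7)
The plan is to derive this as a direct corollary of the filter version, Theorem \ref{thm.fDP.main}, by passing through the accumulator/filter bijection given in Lemma \ref{lemma.f_odom}. Since $f$-DP fits the generalized $\calD$ DP framework of Section \ref{sec.formaldef}, both parts of that lemma apply verbatim to the $f$-DP setting.

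Concretely, the argument is a three-step reduction. First, assume $\calG$ is a valid $f$-DP NIM-privacy-loss accumulator. By Lemma \ref{lemma.f_odom}(1), the indicator continuation rule $\calF(\cdot; f) := \mathbb{I}(\calG(\cdot) \preceq f)$ is a valid $f$-DP NIM-filter. Second, invoke Theorem \ref{thm.fDP.main} on this $\calF$ to conclude that $\calF$ is also a valid $f$-DP concurrent IM-filter (the finite-communication hypothesis is exactly the one appearing in Theorem \ref{thm.fDP.main}). Third, apply Lemma \ref{lemma.f_odom}(2) in the reverse direction to that same $\calF$ to recover that $\calG$ itself is a valid $f$-DP concurrent IM-privacy-loss accumulator.

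If one preferred to avoid the filter abstraction altogether, the same PIM-based argument used to prove Theorem \ref{thm.fDP.main} adapts directly. Fix adjacent datasets $x, x'$ and a deterministic adversary $\cA$ (justified by Lemma \ref{lemma.deterministic_adversaries}), and construct an interactive postprocessing $\cP$ carrying $\NIO$ to $\CO$ that (a) forwards \texttt{privacy\_loss} queries straight through to $\NIO$; (b) on a ``launch new mechanism'' query $(\cM', f')$, uses Theorem \ref{thm.fdp_post} to extract a noninteractive $f'$-DP witness $\mathcal N'$ together with its interactive postprocessing $\cP'$, submits $(\mathcal N', f')$ to $\NIO$, and stores $\cP'$ locally; and (c) on a query $(j, q)$ to an existing mechanism, uses the locally stored $\cP_j$ to respond, without touching $\NIO$ at all. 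Then $\View(\cA \leftrightarrow \CO(x)) \equiv \View(\cA \leftrightarrow \cP \circ \NIO(x))$ for the fixed pair, and Corollary \ref{corollary.pim_privacy} applied inside the truncated-view construction of Algorithm \ref{alg:truncview} yields the desired bound.

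The main technical subtlety along either route is aligning halting semantics: one must ensure that $\texttt{Trunc}_f$ applied to the concurrent odometer stops at precisely the transcripts where the associated filter would halt, and that the per-mechanism granularity of the \texttt{privacy\_loss} output (which is only updated on mechanism launches, not on intra-mechanism queries) is respected by the construction. Lemma \ref{lemma.f_odom} already encapsulates this bookkeeping for noninteractive mechanisms, so the three-step reduction inherits correctness for free; the direct PIM proof would have to reverify it, which is essentially the only nontrivial piece. For the final write-up I would present the three-line reduction as the proof and relegate the PIM construction to a remark, since all of the real content lives in Theorem \ref{thm.fDP.main} and Lemma \ref{lemma.f_odom}.
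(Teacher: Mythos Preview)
Your three-step reduction via Lemma \ref{lemma.f_odom}(1), Theorem \ref{thm.fDP.main}, and Lemma \ref{lemma.f_odom}(2) is exactly the paper's proof, and your suggestion to present it as the main argument with the direct PIM construction as a remark matches the paper's choice. The alternative PIM sketch you give is sound but, as you note, redundant once Lemma \ref{lemma.f_odom} is available.
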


\begin{proof}
    By Lemma \ref{lemma.f_odom}, if $\calG$ is a valid $f$-DP privacy-loss accumulator, then $\calF(\cdot; f) = I(\calG(\cdot) \preceq f)$ is a valid $f$-DP filter for noninteractive mechanisms. By Theorem \ref{thm.fDP.main}, if $\calF$ is an $f$-DP filter for noninteractive mechanisms, then $\calF$ will be an $f$-DP concurrent filter for interactive mechanisms. Applying Lemma \ref{lemma.f_odom} again, $\calG$ will be a valid $f$-DP privacy-loss accumulator for interactive mechanisms, and $\CO$ is an $f$-DP interactive mechanism. 
\end{proof}

Now we give an example of how to construct a valid $(\eps, \delta)$-DP privacy filter and odometer.

\begin{theorem}[$(\eps, \delta)$-DP privacy filter \cite{whitehouse2023fully}]\label{example.eps-delta-filter}
    For every $\delta' > 0$,  
    \begin{align*}
        & \calF((\eps_1, \delta_1), \dots, (\eps_k, \delta_k); (\eps, \delta)) \\ & = \mathbb I\left(\epsilon \leq \sqrt{2\log\left(\frac{1}{\delta'}\right)\sum_{m \leq k}\eps_m^2} + \frac{1}{2}\sum_{m \leq k} \eps_m^2\right) \cdot \mathbb I\left(\delta' + \sum_{m \leq k} \delta_k \leq \delta\right)
    \end{align*}
    is a valid approx-DP continuation rule for noninteractive mechanisms and interactive mechanisms.
\end{theorem}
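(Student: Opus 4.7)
The plan is to establish validity in two stages, one for each half of the statement. First I would prove that $\calF$ is a valid approximate-DP NIM-filter; this is essentially the advanced-composition privacy filter of Whitehouse, Ramdas, Rogers, and Wu, so at the paper's level of exposition one can either cite it directly or sketch the supermartingale argument below. Second, since $(\eps,\delta)$-DP is the special case $f = f_{\eps,\delta}$ of $f$-DP, I would invoke Corollary \ref{corollary.eps_delta_f} (itself a consequence of Theorem \ref{thm.fDP.main}) to conclude that the same $\calF$ is automatically a valid concurrent IM-filter. No further interactive argument is needed.

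For the noninteractive direction, fix neighboring datasets $x,x'$ and a deterministic adversary $\cA$ interacting with $\calF(\cdot;(\eps,\delta))\text{-}\textit{Filt(NIM)}$. Let $Z_k$ denote the privacy-loss random variable after the $k$-th mechanism's answer, i.e.\ the log-likelihood ratio between the transcripts under $x$ and $x'$ up to round $k$. Because $\calM_k$ is promised to be $(\eps_k,\delta_k)$-DP, a standard argument (absorbing the $\delta_k$ slack into a ``bad set'' of total probability at most $\sum_{m\le k}\delta_m$) gives, conditional on the history, a bound on the moment generating function of the increment $Z_k - Z_{k-1}$ of the form $\E{e^{\lambda(Z_k-Z_{k-1})}} \le \exp\!\bigl(\lambda(e^{\lambda}-1)\eps_k^2/2\bigr)$, which for the relevant regime reduces to $\exp(\lambda^2 \eps_k^2/2 + \lambda \eps_k^2/2)$. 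Consequently
\begin{equation*}
M_k \;=\; \exp\!\left(\lambda Z_k \;-\; \tfrac{\lambda^2}{2}\sum_{m \le k}\eps_m^2 \;-\; \tfrac{\lambda}{2}\sum_{m\le k}\eps_m^2\right)
\end{equation*}
is a supermartingale. The continuation rule $\calF$ defines a stopping time $\tau$ with respect to this filtration, and Markov's inequality applied to $M_\tau$ (optional stopping), combined with optimizing over $\lambda$, yields that except with probability $\delta'$, the stopped privacy loss satisfies
\begin{equation*}
|Z_\tau| \;\le\; \sqrt{2\log(1/\delta')\sum_{m\le\tau}\eps_m^2} \;+\; \tfrac{1}{2}\sum_{m\le\tau}\eps_m^2.
\end{equation*}
Under the continuation rule $\calF$, this bound is at most $\eps$ at every round in which $\calF$ outputs $\texttt{continue}$. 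Combining the $\delta'$ Chernoff failure with the per-round slack $\sum_m \delta_m$ gives total failure $\delta' + \sum_m \delta_m \le \delta$, which is exactly enforced by the second indicator in the definition of $\calF$.

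The main obstacle is handling the adaptive choice of $(\eps_k,\delta_k)$: because the adversary picks these parameters after seeing prior answers, one cannot pre-commit to a fixed tilt $\lambda$ chosen as a function of $\sum_m \eps_m^2$. The standard resolution, and the reason the filter's threshold is written in terms of the running sum $\sum_{m\le k}\eps_m^2$ rather than an a priori constant, is that the $\eps_k$'s are predictable with respect to the transcript filtration, so the supermartingale property of $M_k$ survives, and optional stopping then applies with $\lambda$ chosen as a deterministic function of the observed running sum at the stopping time. Once NIM-validity is in hand, the interactive conclusion is immediate from Corollary \ref{corollary.eps_delta_f}, so the entire theorem reduces to the noninteractive supermartingale calculation above.
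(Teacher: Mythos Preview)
Your approach is essentially the same as the paper's: establish NIM-validity by appealing to the Whitehouse--Ramdas--Rogers--Wu filter, then upgrade to the interactive case via Corollary~\ref{corollary.eps_delta_f}. The paper does not reproduce the supermartingale argument; it simply restates the cited theorem and carefully translates the stopping-time formulation $N(\cdot)$ there into the continuation-rule language of $\calF$, checking that the two halting conditions coincide.

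One caution about your optional sketch: the sentence ``optional stopping then applies with $\lambda$ chosen as a deterministic function of the observed running sum at the stopping time'' is not quite right as stated. For a single fixed $\lambda$, $(M_k)$ is indeed a supermartingale because the $\eps_k$ are predictable, and optional stopping gives $\E{M_\tau}\le 1$; but you cannot then choose $\lambda$ as a function of $\sum_{m\le\tau}\eps_m^2$ after the fact, since that quantity is random. Whitehouse et al.\ handle exactly this issue with time-uniform concentration techniques (mixture or stitching arguments), and in the \emph{filter} setting one can alternatively exploit that the continuation rule deterministically caps $\sum_m \eps_m^2$ in terms of the fixed budget $(\eps,\delta')$, allowing $\lambda$ to be fixed in advance. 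Since you explicitly allow citing the result directly, this is not a gap in your plan, but the sketch as written would not stand on its own.
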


\begin{theorem}[$(\eps, \delta)$-DP privacy odometer]\label{example.eps-delta-odom}
    Let $\delta = \delta' + \delta''$ be a target approximation parameter such that $\delta' > 0$, $\delta'' \geq 0$. Then 
    \begin{align*}
        & \calG((\eps_1, \delta_1), \dots, (\eps_k, \delta_k)) \\ 
        & = \begin{cases}
            \left(\sqrt{2\log\left(\frac{1}{\delta'}\right)\sum_{m \leq k}\eps_m^2} + \frac{1}{2}\sum_{m \leq k} \eps_m^2, \delta\right) & \text{if } \delta' +  \sum_{m \leq k} \delta_m \leq \delta \\
            (\infty, \infty) & \text{otherwise}
        \end{cases}
    \end{align*}
    is a valid approx-DP privacy-loss accumulator for noninteractive and interactive mechanisms. 
\end{theorem}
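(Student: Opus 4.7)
The plan is to reduce this odometer statement to the filter statement of Theorem \ref{example.eps-delta-filter} by invoking the odometer/filter equivalence from Lemma \ref{lemma.f_odom}. More precisely, by Lemma \ref{lemma.f_odom}, it suffices to prove that the continuation rule
\[
\calF(\cdot; (\eps_t, \delta_t)) \;:=\; \mathbb{I}\bigl(\calG(\cdot) \preceq (\eps_t, \delta_t)\bigr)
\]
is a valid approximate-DP NIM-filter (and, by the interactive half of Lemma \ref{lemma.f_odom}, IM-filter) for every target pair $(\eps_t, \delta_t)$. Recall that under the partial order $\preceq$ on $\mathbb{R}_{\ge 0}^2$, we have $(a, b) \preceq (\eps_t, \delta_t)$ iff $a \leq \eps_t$ and $b \leq \delta_t$.

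Next, I would do a two-case analysis on the relationship between $\delta_t$ and the fixed parameter $\delta$ baked into $\calG$. If $\delta_t < \delta$, then regardless of the input sequence the second coordinate of $\calG$'s output is either $\delta$ or $\infty$, both strictly larger than $\delta_t$, so the derived filter rejects every mechanism. In that case $\calF(\cdot; (\eps_t, \delta_t))\text{-}Filt(NIM)$ returns the empty view on every dataset, which is trivially $(\eps_t, \delta_t)$-DP. If instead $\delta_t \geq \delta$, then $\calG \preceq (\eps_t, \delta_t)$ reduces to the conjunction of $\delta' + \sum_{m \leq k} \delta_m \leq \delta$ (so that the first branch of $\calG$ fires and the output is finite) and $\sqrt{2\log(1/\delta')\sum_{m \leq k} \eps_m^2} + \tfrac12 \sum_{m \leq k} \eps_m^2 \leq \eps_t$. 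But these two conditions together are exactly the indicator defining the filter from Theorem \ref{example.eps-delta-filter} with target $(\eps_t, \delta)$, so the derived $\calF(\cdot; (\eps_t, \delta_t))$ coincides on every input with that filter.

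By Theorem \ref{example.eps-delta-filter}, that filter is $(\eps_t, \delta)$-DP as a NIM-filter, and since $\delta \leq \delta_t$, every $(\eps_t, \delta)$-DP mechanism is a fortiori $(\eps_t, \delta_t)$-DP. Combining the two cases shows that $\calF(\cdot; (\eps_t, \delta_t))\text{-}Filt(NIM)$ is $(\eps_t, \delta_t)$-DP for every target $(\eps_t, \delta_t)$, so $\calF$ is a valid approx-DP NIM-filter; invoking Lemma \ref{lemma.f_odom} then gives that $\calG$ is a valid approx-DP NIM-privacy-loss accumulator. To upgrade from noninteractive to interactive mechanisms, I would chain Corollary \ref{corollary.eps_delta_f} (equivalently, Theorem \ref{thm.main}): applying the NIM $\Leftrightarrow$ IM equivalence for filters to $\calF$, then applying Lemma \ref{lemma.f_odom} again (this time in the concurrent-IM direction) gives that $\calG$ is also a valid approx-DP IM-privacy-loss accumulator.

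The only mild obstacle is bookkeeping around the fact that $\calG$ has the target $\delta$ hard-wired into its output's second coordinate, so the truncation threshold $\delta_t$ used to define the truncated view is generally different from $\delta$. The case split above handles this cleanly: the $\delta_t < \delta$ case is vacuous, and in the $\delta_t \geq \delta$ case the looseness is absorbed by monotonicity of approximate DP in $\delta$. No new probabilistic calculation is needed beyond what is already contained in Theorem \ref{example.eps-delta-filter}.
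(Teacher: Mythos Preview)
Your proposal is correct and follows essentially the same approach as the paper: reduce the odometer statement to the filter statement of Theorem~\ref{example.eps-delta-filter} via the filter/accumulator equivalence of Lemma~\ref{lemma.f_odom}, and then lift from NIM to IM via the filter equivalence (Corollary~\ref{corollary.eps_delta_f} / Theorem~\ref{thm.main}) and a second application of Lemma~\ref{lemma.f_odom}. In fact, the paper itself gives only a one-line pointer (``make use of Lemma~\ref{lemma.filter_to_odo}''), whereas your case split on $\delta_t < \delta$ versus $\delta_t \ge \delta$ is exactly the bookkeeping needed to make that pointer into an actual proof, correctly handling the fact that $\calG$ hard-wires $\delta$ in its second output coordinate.
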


\section{Concurrent Filter \& Odometer for R\'enyi Divergence-Based DP}\label{sec.RDP}

R\'enyi DP is a relaxation of DP based on R\'enyi divergences.
In this section, we show that privacy loss adds up for R\'enyi DP, in the setting of concurrent composition with adaptive privacy-loss parameters. This bound matches naturally with the previous concurrent composition result where all privacy-loss parameters are fixed upfront. 

As a note, we inherit the limitation for when approximate-R\'enyi DP and zCDP are converted to approximate DP, in which there is some loss in the privacy parameters. This is an open question even for standard concurrent composition without adaptive privacy-loss parameters \cite{lyu2022composition, vadhan2022concurrent}, and we just inherit this limitation. The reason is that the current proof techniques for approximate DP and for R\'enyi DP are very different from each other, and it's not clear how to combine them. Hence, concurrent composition bounds for approximate RDP and zCDP is still an open question.

\begin{definition}[R\'enyi divergence  \cite{renyi1961measures}]
    For two probability distributions $P$ and $Q$, the \emph{R\'enyi divergence} of order $\alpha>1$ is 
	\begin{equation*}
		D_\alpha(P||Q)=\frac{1}{\alpha-1}\log \mathrm{E}_{x\sim Q} \left[\frac{P(x)}{Q(x)}\right]^\alpha.
	\end{equation*}
\end{definition}

\begin{definition}[R\'enyi DP \cite{mironov2017renyi}]
A randomized mechanism $\mathcal M$ is \emph{$(\alpha,\eps)$-R\'enyi differentially private} ($\eps$-RDP$_\alpha$)
if for all every two neighboring datasets $x$ and $x'$, $$D_{\alpha}(\mathcal M(x)||\mathcal M(x')) \leq \eps$$
\end{definition}

In our generalized DP framework in Definition \ref{def.genprobdistance}, for R\'enyi DP of order $\alpha$, the partially ordered set $\mathcal{D}$ is $((\mathbb{R}^{\ge0})\cup \{\infty\}, \le)$. The distance mapping is $\alpha$-R\'enyi divergence for $\alpha\in(1, \infty)$. R\'enyi divergence is also closed under postprocessing due to the data-processing inequality, and it satisfies the joint convexity. We will notate $\text{RDP}_\alpha$ as the family of distance measures corresponding to RDP, where $\calD = (\alpha, \cdot)\text{-RDP} = \text{RDP}_\alpha$.

Our proof leverages the $K=2$ case of the concurrent composition theorem proved by Lyu \cite{lyu2022composition}, where the privacy-loss parameters are specified unfront.

\begin{theorem}[Concurrent composition of R\'enyi-DP mechanisms \cite{lyu2022composition}]\label{thm.concomp_renyi}
    For all $\alpha > 1$, $k \in \mathbb N$, $\eps_1, \dots, \eps_k > 0$, and all interactive mechanisms $\calM_1, \dots, \calM_k$ such that $\calM_i$ is $\eps_i$-RDP$_\alpha$, the concurrent composition $\concomp(\calM_1, \dots, \calM_k)$ is $\sum_{i = 1}^k\eps_i$-RDP$_\alpha$. 
\end{theorem}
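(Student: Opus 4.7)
The plan is to prove this by induction on $k$. The base case $k=1$ is immediate from the definition of RDP$_\alpha$. For the inductive step, I would invoke the associativity of concurrent composition, writing $\concomp(\calM_1,\ldots,\calM_k) \equiv \concomp(\concomp(\calM_1,\ldots,\calM_{k-1}), \calM_k)$. By the inductive hypothesis the inner composition is $(\sum_{i<k}\eps_i)$-RDP$_\alpha$, so it suffices to establish the $k=2$ case: concurrent composition of an $\eps_1$-RDP$_\alpha$ and an $\eps_2$-RDP$_\alpha$ interactive mechanism is $(\eps_1+\eps_2)$-RDP$_\alpha$.

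For the $k=2$ case, by Lemma \ref{lemma.deterministic_adversaries} I can restrict to a deterministic adversary $\calA$. Fix neighboring datasets $x,x'$ and let $P,Q$ be the corresponding view distributions. I would work with the moment-generating characterization $\exp((\alpha-1)D_\alpha(P\|Q)) = \mathbb{E}_Q[(dP/dQ)^\alpha]$. Because $\calA$ is deterministic, the view $V=(a_1,a_2,\ldots)$ is determined round by round, and in each round the query (including the identity $j_t\in\{1,2\}$ of the target mechanism) is a deterministic function of the preceding answers; the likelihood ratio therefore telescopes as $\frac{dP}{dQ}(v) = \prod_t P(a_t\mid a_{<t})/Q(a_t\mid a_{<t})$. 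Grouping rounds by mechanism and defining an auxiliary adversary $\calA_i$ that interacts only with $\calM_i$ while simulating $\calM_{3-i}$ and $\calA$ internally on fresh coins, the subtranscript of interactions with $\calM_i$ is distributed exactly as $\texttt{View}(\calA_i \leftrightarrow \calM_i(\cdot))$; by the $\eps_i$-RDP$_\alpha$ property of $\calM_i$, its R\'enyi divergence between $x$ and $x'$ is at most $\eps_i$.

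The main obstacle, and the delicate technical step, is passing from these two marginal bounds to a bound on the joint view, since the two subtranscripts are correlated through $\calA$'s adaptive interleaving. The lever I would use is that the internal random coins $R_1,R_2$ of the two mechanisms are drawn independently and from distributions that do not depend on the dataset, so we can couple them identically across $x$ and $x'$. Conditioning $\mathbb{E}_Q[(dP/dQ)^\alpha]$ on $R_{3-i}$ and applying the tower property, one expects the per-round likelihood-ratio factors corresponding to mechanism $\calM_i$ to reassemble (after conditioning) into the MGF of $D_\alpha$ for a stand-alone interaction with $\calM_i$, which is bounded by $\exp((\alpha-1)\eps_i)$. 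Unrolling then yields $\mathbb{E}_Q[(dP/dQ)^\alpha] \le \exp((\alpha-1)(\eps_1+\eps_2))$, giving $D_\alpha(P\|Q)\le \eps_1+\eps_2$ and closing the induction. The subtlety in making this rigorous is that the adversary may route each new query to either mechanism based on both mechanisms' past answers, so one must carefully verify that conditioning on the independent randomness sources actually decouples the two mechanisms' contributions to the moment-generating integral.
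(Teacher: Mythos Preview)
The paper does not prove this theorem; it is imported as a cited result from \cite{lyu2022composition} and used as a black box---specifically, the $k=2$ case is the foundation for the proof of Lemma~\ref{lemma.renyi-general-case}. So there is no in-paper proof to compare against.

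On your proposal itself: the reduction to $k=2$ by induction and associativity is sound, as is the telescoping of the likelihood ratio. The genuine gap is the decoupling step. Your auxiliary adversary $\calA_i$ must simulate $\calM_{3-i}$ on some \emph{fixed} dataset, and whichever one you choose, the simulation is faithful on only one of $x,x'$. If $\calA_1$ internally runs $\calM_2(x)$, then $\View(\calA_1 \leftrightarrow \calM_1(x'))$ matches the hybrid $\concomp(\calM_1(x'),\calM_2(x))$, not the true $x'$-experiment; so the claim that ``the subtranscript is distributed exactly as $\View(\calA_i \leftrightarrow \calM_i(\cdot))$'' cannot hold for both datasets with a single $\calA_i$. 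Conditioning on $R_2$ does not rescue this: given $R_2$, the answers of $\calM_2$ are deterministic functions of its state \emph{and the dataset}, so the effective strategy against $\calM_1$ still differs between the $x$- and $x'$-worlds, and the $\eps_1$-RDP$_\alpha$ bound cannot be invoked. What your conditioning argument naturally controls is $D_\alpha$ between $\concomp(\calM_1(x),\calM_2(x))$ and the hybrid $\concomp(\calM_1(x'),\calM_2(x))$, but R\'enyi divergence has no triangle inequality to chain two such hybrid steps.

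An approach that does work peels off one round at a time while tracking the \emph{residual} RDP of the mechanism just queried. If round~$1$ goes to $\calM_1$ and returns $a_1$, let $\eps_1'(a_1)$ denote the tight RDP$_\alpha$ parameter of $\calM_1$ conditioned on having first output $a_1$. Induction on the number of rounds gives
\[
\mathbb{E}_Q\Bigl[\bigl(\tfrac{dP}{dQ}\bigr)^\alpha\Bigr] \;\le\; \mathbb{E}_{a_1\sim Q_1}\Bigl[\bigl(\tfrac{P_1(a_1)}{Q_1(a_1)}\bigr)^\alpha\, e^{(\alpha-1)(\eps_1'(a_1)+\eps_2)}\Bigr],
\]
and applying the $\eps_1$-RDP$_\alpha$ bound for $\calM_1$ to the adversary that continues optimally after each $a_1$ yields $\mathbb{E}_{a_1}\bigl[(P_1/Q_1)^\alpha e^{(\alpha-1)\eps_1'(a_1)}\bigr]\le e^{(\alpha-1)\eps_1}$, which closes the induction. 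Your outline does not arrive at this residual-budget structure, and the coin-conditioning idea as written would need a substantially different argument to be made rigorous.
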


Another definition of differential privacy derived from R\'enyi divergence is zero-concentrated differential privacy.

\begin{definition}[Zero-concentrated differential privacy (zCDP) \cite{bun2016concentrated}]\label{def.zcdp}
    An interactive mechanism $\calM$ is \emph{$\rho$-zero-concentrated differentially private ($\rho$-zCDP)} if for every two neighboring datasets $x$ and $x'$, every adversary $\calA$, and all orders $\alpha \in (1, \infty)$, $$D_\alpha(\View(\calA \leftrightarrow \calM(x))||\View(\calA \leftrightarrow \calM(x'))) \leq \rho\alpha,$$
    where $D_\alpha$ is the R\'enyi divergence defined in Definition \ref{def.renyi_divergence}.

    In other words, $\calM$ is $\rho$-zCDP iff $\calM$ is $(\alpha, \rho\alpha)$-RDP for every $\alpha \in (1, \infty)$.
\end{definition}

Since zCDP bounds R\'enyi divergence for all orders $\alpha$, it is stronger than R\'enyi-DP of a fixed order. 

In our generalized probability framework in Definition \ref{def.genprobdistance}, the partially ordered set $\calD$ for zCDP is $((\mathbb R^{\geq 0} \cup \{ \infty \}, \leq)$. The distance mapping is the worst-case R\'enyi divergence normalized by $\alpha$, over all $\alpha \in (1, \infty)$. In other words, $D(\View(\calA \leftrightarrow \calM(x))||\View(\calA \leftrightarrow \calM(x'))) = \sup_{\alpha \in (1, \infty)} \left(\frac{D_\alpha(\View(\calA \leftrightarrow \calM(x))||\View(\calA \leftrightarrow \calM(x')))}{\alpha}\right)$. zCDP is closed under postprocessing \cite{bun2016concentrated, lyu2022composition} and satisfies the joint convexity property as well \cite{bun2016concentrated}. 

Similar to R\'enyi-DP, the concurrent composition of zCDP mechanisms is additive. 

\begin{theorem}[Concurrent composition of zCDP mechanisms \cite{lyu2022composition}]
    For $k \in \mathbb N$, $\rho_1, \dots, \rho_k > 0$, and all interactive mechanisms $\calM_1, \dots, \calM_k$ such that $\calM_i$ is $\rho_i$-zCDP, their concurrent composition $\concomp(\calM_1, \dots, \calM_k)$ is $\sum_{i = 1}^k \rho_i$-zCDP. 
\end{theorem}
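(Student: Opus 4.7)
The plan is to derive this theorem as a direct corollary of the R\'enyi DP concurrent composition result (Theorem \ref{thm.concomp_renyi}), using the characterization of zCDP as a family of RDP bounds indexed by $\alpha$.

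First, I would unpack the zCDP hypothesis via Definition \ref{def.zcdp}: for each $i \in \{1,\dots,k\}$, saying $\calM_i$ is $\rho_i$-zCDP is equivalent to saying that $\calM_i$ is $(\alpha, \rho_i \alpha)$-RDP for every order $\alpha \in (1, \infty)$. The key observation is that this holds simultaneously for all $\alpha$, but the mechanism $\calM_i$ itself is a single fixed interactive protocol; only the privacy analysis depends on $\alpha$.

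Next, fix an arbitrary order $\alpha \in (1, \infty)$. Applying Theorem \ref{thm.concomp_renyi} to the mechanisms $\calM_1, \dots, \calM_k$ with parameters $\eps_i = \rho_i \alpha$, I conclude that $\concomp(\calM_1, \dots, \calM_k)$ is $(\alpha, \sum_{i=1}^k \rho_i \alpha)$-RDP, i.e., for every adversary $\calA$ and every pair of neighboring datasets $x, x'$,
\[
D_\alpha\bigl(\View(\calA \leftrightarrow \concomp(\calM_1,\dots,\calM_k)(x)) \,\big\|\, \View(\calA \leftrightarrow \concomp(\calM_1,\dots,\calM_k)(x'))\bigr) \;\le\; \Bigl(\sum_{i=1}^k \rho_i\Bigr)\alpha.
\]
Crucially, the concurrent composition $\concomp(\calM_1,\dots,\calM_k)$ is the same interactive mechanism regardless of $\alpha$ (it is defined purely by the protocol in Algorithm \ref{alg.concomp}), so the above bound can be asserted for the single composed mechanism across all $\alpha$ simultaneously.

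Finally, since the RDP bound of the form $\bigl(\sum_i \rho_i\bigr)\alpha$ holds for every $\alpha \in (1, \infty)$, applying Definition \ref{def.zcdp} in the reverse direction yields that $\concomp(\calM_1,\dots,\calM_k)$ is $\bigl(\sum_{i=1}^k \rho_i\bigr)$-zCDP, which is the desired conclusion. I do not expect any substantive obstacle: the proof is essentially a two-line reduction, and the only thing to check carefully is that the quantifier over $\alpha$ commutes properly with the application of Theorem \ref{thm.concomp_renyi}, which it does because the composed protocol is independent of $\alpha$.
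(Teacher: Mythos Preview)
Your proof is correct. Note, however, that the paper does not actually prove this theorem: it is stated as a cited result from \cite{lyu2022composition}, so there is no ``paper's own proof'' to compare against. That said, your reduction---unpacking $\rho_i$-zCDP as $(\alpha,\rho_i\alpha)$-RDP for every $\alpha>1$, applying Theorem~\ref{thm.concomp_renyi} at each fixed $\alpha$, and then reassembling the family of RDP bounds into a zCDP guarantee---is exactly the standard argument and is the same reduction the paper itself employs when deriving the zCDP filter (Theorem~\ref{thm.zcdp-filter}) from the RDP filter. Your observation that $\concomp(\calM_1,\dots,\calM_k)$ is a single protocol independent of $\alpha$, so that the quantifier over $\alpha$ commutes with the application of Theorem~\ref{thm.concomp_renyi}, is the only point requiring care, and you have handled it correctly.
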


We will see that once we have a concurrent R\'enyi-DP filter, the concurrent odometer case follows directly. Concurrent filters and odometers for zCDP will follow directly from the R\'enyi filter result as well. 

\subsection{Concurrent Filter \& Odometer for R\'enyi-DP}

\subsubsection{Filter}

\begin{theorem}[Theorem \ref{thm.main.RDP} restated]\label{thm.renyi-general-case}

An RDP filter with the continuation rule $\calF(\eps_1, \eps_2, \ldots;\eps)= \mathbb{I}(\sum_i \eps_i \le \eps)$ is a valid RDP IM-filter for every fixed order of $\alpha>1$.
\end{theorem}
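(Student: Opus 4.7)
The plan is to reduce via Lemma \ref{lemma.deterministic_adversaries} to deterministic adversaries and then argue by induction on the maximum number $K$ of mechanisms the adversary is allowed to launch. For each $K$, define the truncated continuation rule
$$\calF_K(\eps_1,\ldots,\eps_k;\eps) \;=\; \mathbb{I}(k \le K)\cdot \mathbb{I}\!\left(\textstyle\sum_{i=1}^k \eps_i \le \eps\right).$$
A deterministic adversary with finite communication launches only finitely many mechanisms, so a uniform $(\alpha,\eps)$-RDP bound for $\calF_K(\cdot;\eps)\text{-}FiltCon(IM)$ across all $K$ implies the stated guarantee for $\calF$.

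For the base case $K=2$, fix neighboring datasets $x,x'$ and a deterministic adversary $\cA$. Following the sketch in the proof-strategy section, introduce a universal $(\alpha,\eps-\eps_1)$-RDP interactive mechanism $\mathcal{U}_{\alpha,\eps-\eps_1}$ whose first query is interpreted as the description of a claimed $(\alpha,\eps_2)$-RDP mechanism $\cM_2$ together with its parameter $\eps_2$; $\mathcal{U}$ returns \texttt{halt} if $\eps_2 > \eps-\eps_1$ and otherwise simulates $\cM_2$ on all subsequent queries. Because $\cA$ is deterministic, the identity and parameter $(\cM_1,\eps_1)$ of the first mechanism that $\cA$ launches against $\calF_2(\cdot;\eps)\text{-}FiltCon(IM)(x)$ are predetermined, and inspecting Algorithm \ref{alg.cf_interactive} yields
$$\View\!\left(\cA \leftrightarrow \calF_2(\cdot;\eps)\text{-}FiltCon(IM)(x)\right) \;\equiv\; \View\!\left(\cA \leftrightarrow \concomp(\cM_1,\mathcal{U}_{\alpha,\eps-\eps_1})(x)\right),$$
up to a trivial re-encoding of launch messages, and likewise for $x'$. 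One must separately verify that $\mathcal{U}_{\alpha,\eps-\eps_1}$ is genuinely $(\alpha,\eps-\eps_1)$-RDP, which follows by viewing $\mathcal{U}$ as a random selection over the allowed second mechanisms and invoking joint convexity and postprocessing of RDP. The fixed-parameter concurrent RDP composition theorem (Theorem \ref{thm.concomp_renyi}) then delivers the $(\alpha,\eps)$-RDP bound.

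For the inductive step, assume $\calF_{K-1}(\cdot;\eps')\text{-}FiltCon(IM)$ is $(\alpha,\eps')$-RDP for every budget $\eps'$. I construct a PIM $\cP$ (Definition \ref{def.pim}) such that, for every deterministic $\cA$, the composition $\cP \circ \calF_2(\cdot;\eps)\text{-}FiltCon(IM)$ is view-equivalent to $\calF_K(\cdot;\eps)\text{-}FiltCon(IM)$. The PIM forwards the adversary's first launch request $(\cM_1,\eps_1)$ unchanged into slot $1$ of the outer $\calF_2$-filter; immediately afterwards, $\cP$ launches into slot $2$ the single nested mechanism $\mathcal{N} := \calF_{K-1}(\cdot;\eps-\eps_1)\text{-}FiltCon(IM)$, which by the inductive hypothesis is $(\alpha,\eps-\eps_1)$-RDP and hence accepted by $\calF_2$. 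All subsequent launch requests $(\cM_j,\eps_j)$ with $j \ge 2$ are relayed by $\cP$ as launch messages to $\mathcal{N}$, and queries of the form $(j,q)$ with $j \ge 2$ are relayed to $\mathcal{N}$ as queries to its $(j-1)$-st internal slot; queries of the form $(1,q)$ are forwarded straight to slot $1$ of the outer filter. Since this routing uses only the I/O interface of the outer filter, $\cP$ is a legitimate PIM, and Corollary \ref{corollary.pim_privacy} together with the base case gives $(\alpha,\eps)$-RDP for $\cP \circ \calF_2(\cdot;\eps)\text{-}FiltCon(IM)$, which transfers to $\calF_K(\cdot;\eps)\text{-}FiltCon(IM)$ by view-equivalence.

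The main obstacle I anticipate is verifying the correctness of the PIM construction: one must check that the budget-and-state bookkeeping inside the nested $\calF_{K-1}$-filter in slot $2$ matches that of $\calF_K\text{-}FiltCon(IM)$ uniformly across arbitrary interleavings of launch and standard queries, and that the ``$k \le K$'' cap is enforced consistently (in particular, when $\calF_{K-1}$ rejects an additional launch, $\cP$ must translate this back into the appropriate halt/error reply to $\cA$). A secondary delicate point is pinning down $\mathcal{U}_{\alpha,\eps-\eps_1}$ precisely enough in the base case for the claimed view-equivalence to hold on the nose, and establishing its RDP guarantee rigorously via joint convexity.
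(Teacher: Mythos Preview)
Your inductive argument on the maximum number $K$ of launched mechanisms follows the paper's Lemma \ref{lemma.renyi-general-case} closely: the base case via the universal mechanism together with Theorem \ref{thm.concomp_renyi}, and the inductive step by nesting $\calF_{K-1}(\cdot;\eps-\eps_1)\text{-}FiltCon(IM)$ into slot $2$ of an outer $\calF_2$-filter through a PIM, are exactly what the paper does. (One small quibble: the $(\alpha,\eps-\eps_1)$-RDP guarantee for $\mathcal{U}_{\alpha,\eps-\eps_1}$ does not really come from joint convexity---for a deterministic adversary the first query, and hence $(\cM_2,\eps_2)$, is already fixed, so one simply invokes the RDP of $\cM_2$ and then Lemma \ref{lemma.deterministic_adversaries}.)

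The genuine gap is in your passage from the bounds on $\calF_K$ for each finite $K$ to the unrestricted rule $\calF$. You justify this by assuming the adversary has finite communication, but the theorem statement carries no such hypothesis, and the paper explicitly treats interactions that are not a priori bounded in length. The paper closes this step differently: it truncates the (possibly infinite) view to its first $k$ coordinates $V_k$, uses Lemma \ref{lemma.renyi-general-case} to obtain $D_\alpha(V_k\,\|\,V'_k)\le\eps$ for every $k$, and then invokes Lemma \ref{lemma.infinite-divergence} (continuity of R\'enyi divergence along consistent sequences of distributions) to conclude $D_\alpha(V_\infty\,\|\,V'_\infty)\le\eps$. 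Without this limiting argument---or an explicit restriction to finite-communication adversaries---your reduction does not yet establish the full statement.
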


Though generalized probability distance also covers RDP, the reduction in Theorem \ref{thm.fdp_post} holds only if the probability distance also satisfies the property of coupling\cite{vadhan2022concurrent}, which RDP does not. Therefore, we use a different strategy to prove \ref{thm.renyi-general-case}. We prove this theorem by inducting on a finite number of mechanisms being maintained by the $\calF$-filtered concurrent composition of interactive mechanisms $\CF$, and then taking the limit of those mechanisms. To do so, we first consider $\CF$ maintains up to $K$ mechanisms.

\begin{lemma}\label{lemma.renyi-general-case}
Let $\alpha > 1$ and define 
\begin{align*}
    \calF_K(\eps_1, \eps_2, \dots, \eps_k; \eps) = 
    \begin{cases}
        \mathbb I(\sum_{i = 1}^k \eps_i \leq \eps) & \text{ if $k \leq K$} \\
        0 & \text{ otherwise}
\end{cases}
\end{align*}
for every $k \in \mathbb N$. Then $\calF_K$ is a valid RDP$_\alpha$ IM-filter. 
\end{lemma}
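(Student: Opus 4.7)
The plan is to proceed by induction on $K$, leveraging the two-mechanism concurrent composition theorem for RDP (Theorem~\ref{thm.concomp_renyi}) together with the I.M.-to-I.M. postprocessing framework from Corollary~\ref{corollary.pim_privacy}. By Lemma~\ref{lemma.deterministic_adversaries} I will fix a deterministic adversary $\calA$ and neighboring datasets $x, x'$ throughout.

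For the base case $K=1$, the filter $\calF_1(\cdot;\eps)\text{-}FiltCon(IM)$ admits at most one launch of some $\calM_1$ with $\eps_1 \leq \eps$; since $\calM_1$ is $\eps_1$-RDP$_\alpha$ and RDP of a fixed order is monotone in $\eps$, the overall interaction is $\eps$-RDP$_\alpha$ immediately.

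For the inductive step $K \geq 2$, assuming $\calF_{K-1}$ is a valid RDP$_\alpha$ IM-filter, I will express $\calF_K(\cdot;\eps)\text{-}FiltCon(IM)$ as an I.M.-to-I.M.\ postprocessing of $\concomp(\calM_1, \mathcal{U}_{\alpha, \eps-\eps_1})$, where $\mathcal{U}_{\alpha, \eps'}$ denotes a ``universal'' $\eps'$-RDP$_\alpha$ mechanism whose first query specifies some $(\alpha, \eps'')$-RDP mechanism (for any $\eps'' \leq \eps'$) that it then simulates, relaying all further traffic verbatim. Concretely I set $\mathcal{U}_{\alpha, \eps-\eps_1} := \calF_{K-1}(\cdot; \eps-\eps_1)\text{-}FiltCon(IM)$, which is $(\eps-\eps_1)$-RDP$_\alpha$ by the inductive hypothesis. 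The PIM $\cP$ works as follows: on $\calA$'s first launch $(\calM_1, \eps_1)$ it reads off $\eps_1$, fixes the second slot's budget to $\eps - \eps_1$, and forwards the request to the first slot; on subsequent messages it routes queries for $\calM_1$ into the first slot and all further launch requests together with their queries (for $\calM_2, \ldots, \calM_K$) into the second slot. A message-by-message comparison gives $\View(\calA \leftrightarrow (\cP \circ \concomp(\calM_1, \mathcal{U}_{\alpha, \eps-\eps_1}))(y)) \equiv \View(\calA \leftrightarrow \calF_K(\cdot;\eps)\text{-}FiltCon(IM)(y))$ for $y \in \{x, x'\}$. Theorem~\ref{thm.concomp_renyi} makes $\concomp(\calM_1, \mathcal{U}_{\alpha, \eps-\eps_1})$ itself $\eps$-RDP$_\alpha$, and Corollary~\ref{corollary.pim_privacy} transfers this bound to $\calF_K$.

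The main obstacle will be handling the adaptivity of $\eps_1$: because $\calA$ chooses $\eps_1$ on the fly, we cannot commit to a specific tail mechanism with a specific budget before the interaction begins. The universal-mechanism abstraction is precisely what resolves this---$\cP$ observes $\eps_1$ before ever invoking the second slot, and $\mathcal{U}_{\alpha, \eps-\eps_1}$ is designed to defer its identity until it sees its first query, so any sub-budget below $\eps - \eps_1$ is accommodated without any preselection.
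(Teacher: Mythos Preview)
Your proposal is correct and follows essentially the same approach as the paper: induction on $K$, invoking the two-mechanism RDP concurrent composition theorem (Theorem~\ref{thm.concomp_renyi}) at each step, with the I.M.-to-I.M.\ postprocessing framework (Corollary~\ref{corollary.pim_privacy}) to establish view equivalence, and relying on Lemma~\ref{lemma.deterministic_adversaries} so that $(\calM_1,\eps_1)$ is fixed once the deterministic adversary is fixed.

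The only structural difference is how the induction is packaged. The paper takes $K=2$ as the base case, introducing the universal mechanism $\mathcal U_{\alpha,\eps-\eps_1}$ and arguing directly from Theorem~\ref{thm.concomp_renyi}; then in the inductive step it realizes $\calF_K\text{-}FiltCon(IM)$ as a postprocessing of $\calF_2\text{-}FiltCon(IM)$ with slot one holding $\calM_1$ and slot two holding $\calF_{K-1}(\cdot;\eps-\eps_1)\text{-}FiltCon(IM)$. You instead take $K=1$ as a trivial base case and in the inductive step go straight to $\concomp(\calM_1,\calF_{K-1}(\cdot;\eps-\eps_1)\text{-}FiltCon(IM))$, identifying $\calF_{K-1}$ itself with the universal second slot. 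Your version is slightly more economical (it avoids the intermediate layer of $\calF_2$), while the paper's version makes the role of the universal mechanism more explicit as a standalone primitive; the underlying content is the same.
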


\begin{proof}

We prove this by inducting on the maximum number $K$ of mechanisms that $\calF_K$-$FiltCon(IM)$ maintains. Fix a pair of datasets $x, x'$. By Lemma \ref{lemma.deterministic_adversaries}, it suffices to consider a deterministic adversary to prove this theorem.

\emph{Base case: $K = 2$.} Intuitively, with two mechanisms, the budget partition is determined before any mechanisms are queried, so we can reduce the problem to concurrent RDP composition. Suppose the adversary starts a $\eps_1$-RDP$_\alpha$ mechanism $\mathcal M_1$ through $\calF_2$-$FiltCon(IM)$. By the constraints of $\calF_2$-$FiltCon(IM)$, the second mechanism $\calM_2$ it is able to start must have privacy parameter $\eps_2 \leq \eps - \eps_1$. The adversary can adaptively interleave queries between $\mathcal M_1$ and $\mathcal M_2$. 

We will first define the notion of a \emph{universal} mechanism to help us with the reduction. Informally, a ``universal’’ mechanism can be any mechanism with the privacy loss bounded by the preset privacy loss budget. The universal mechanism will first verify if an $\mathcal{M}$ attempted by an adversary to start satisfies the given privacy loss parameter guarantee as well as if the given privacy loss parameter does not exceed the preset privacy loss budget. Upon verifying this, it will interact with the adversary as $\mathcal{M}$. Given a preset privacy budget $\eps$, an RDP$_\alpha$ universal mechanism $\mathcal U_{\alpha, \eps}(s, m)$ would operate as follows:

\SetKwFunction{AlgUniversalMech}{$\mathcal U_{\alpha, \eps}$}
\begin{algorithm}
    \DontPrintSemicolon
    \caption{Operation of $\mathcal U_{\alpha, \eps}(s, m)$.}
    
        \Fn{\AlgUniversalMech{$s, m$}}{
            \If{$m = \lambda$}{
                $x \gets s$, $s' \gets (\texttt{0}, x)$ \tcp*[f]{initialize state of $\mathcal U$}\;
                \Return{$(s, \lambda)$}\;
            }
            \If{$s$ is of form $s = (\texttt{0}, x)$ and $m = \calM$, where $\calM$ is an RDP$_\alpha$ mechanism}{
                \If{$\calM$ is $(\alpha, \eps_1)$-RDP}{
                    $(m', s_\calM) \gets \calM(s, \lambda)$, $s' \gets (\texttt{1}, s_\calM)$\;
                }\Else{
                    $m' \gets \texttt{invalid query}$, $s' \gets s$\;
                }
            }\ElseIf{$s$ is of form $s = (\texttt{1}, s_\calM)$}{
                $(s_\calM', m') \gets \calM(s_\calM, m)$\;
                $s' \gets (1, s_\calM')$\;
            }
            \Return{$(s', m')$}\;
        }
\end{algorithm}

The state of $\mathcal U_{\alpha, \eps}(s, m)$ only needs the state of $\calM$, the first mechanism it starts, and a bit to indicate where the first mechanism has started. 

Given a deterministic adversary $\mathcal A$ against $\calF_2$-$FiltCon(IM)$, let $(\calM_1, \eps_1) = \mathcal A(\lambda, \lambda)$, which is $\calA$'s first query to $\calF_2$-$FiltCon(IM)$. Assume that $\calM_1$ is $\eps_1$-RDP$_\alpha$ (else the query will be rejected by $\calF_2$-$FiltCon(IM)$, so we should use the first valid query of $\calA$). 

We will construct an adversary $\mathcal A'$ that interacts with $\concomp(\calM_1, \mathcal U_{\alpha, \eps - \eps_1})$, where $\mathcal U_{\alpha, \eps - \eps_1}$ is the universal $(\eps - \eps_1)$-RDP$_\alpha$ mechanism, and the adversary's first query to $\mathcal{U}_{\alpha, \eps - \eps_1}$ is an $\eps_2$-$\text{RDP}_\alpha$ mechanism $\mathcal{M}_2$. For each query $\mathcal A$ makes to $\calF_2$-$FiltCon(IM)$ prior to starting $\calM_2$, $\mathcal A'$ will make a query to $\calM_1$. When $\mathcal A$ starts $\calM_2$, $\mathcal A'$ will make its first query to $\mathcal U_{\alpha, \eps - \eps_1}$. Upon verifying $\eps_2\le \eps-\eps_1$, $\mathcal{U}_{\alpha, \eps - \eps_1}$ interacts with the adversary just as $\mathcal{M}_2$ does. Afterward, $\mathcal A'$ can interleave queries at will between $\calM_1$ and $\calM_2$, interacting with $\concomp(\calM_1, \mathcal U_{\alpha, \eps - \eps_1})$ just as $\mathcal A$ does with $\calF_2$-$FiltCon(IM)$.

In this case, $\View(\mathcal A' \leftrightarrow \concomp(\calM_1, \mathcal U_{\alpha, \eps - \eps_1})(x)) \equiv \View(\mathcal A' \leftrightarrow \cP \circ \concomp(\calM_1, \calM_2)(x))$ for an interactive postprocessing $\cP$ that does not pass on the adversary's first query to start a second mechanism to $\mathcal U_{\alpha, \eps - \eps_1}$, but otherwise preserves all of the adversary's queries and the mechanism's answers. The same holds for the computation on $x'$. By Theorem \ref{thm.concomp_renyi}, $\concomp(\mathcal M_1, \mathcal M_2)$ is $\eps$-RDP$_\alpha$, which means by Corollary \ref{corollary.pim_privacy} that $\concomp(\calM_1, \mathcal U_{\alpha, \eps - \eps_1})$ is as well. Therefore, $\calF_2$-$FiltCon(IM)$ is an $\eps$-RDP$_\alpha$ I.M. By Definition \ref{def.ddp_cf}, $\calF_2(\cdot; \eps)$ is a valid RDP$_\alpha$ IM-filter.

\emph{Induction on $K$.} By the inductive hypothesis, for every $\eps \geq 0$, $\calF_{K - 1}(\cdot; \eps)$-$FiltCon(IM)$ is $\eps$-RDP$_\alpha$. We define the postprocessing $\cP$ interacting with $\calF_2(\cdot; \eps)$-$FiltCon(IM)$ so that for any $K > 1$ and dataset $x$, $\View(\cA \leftrightarrow \cP \circ \calF_2(\cdot; \eps)$-$FiltCon(IM)(x)) \equiv \View(\cA \leftrightarrow \calF_{K}$-$FiltCon(IM)(x))$. The algorithmic pseudocode of $\cP$ is in the appendix in Algorithm \ref{alg.rdp_post}.

When receiving the adversary's request to start $\calM_1$ and verifying the privacy loss is within budget, $\cP$ will start both $\calM_1$ and $\calF_{K - 1}$-$FiltCon(IM)$ in $\calF_2$-$FiltCon(IM)$. $\cP$ will directly pass on queries to $\calM_1$. For queries $q$ to $\calM_j$ where $j > 1$ (in the form $(j, q)$), $\cP$ will postprocess the message as $(2, (j - 1), q)$, indicating that $\cP$ is passing $q$ on to the $(j - 1)$st mechanism of $\calF_{K - 1}$-$FiltCon(IM)$. (Concretely, suppose the adversary thinks it's interacting concurrently with $\calF_4$-$FiltCon(IM)$ maintaining $\cM_1, \cM_2, \cM_3, \cM_4$. In this construction, the adversary will actually be interacting with $\calF_2$-$FiltCon(IM)$ maintaining $\calM_1$ and $\calF_{3}$-$FiltCon(IM)$, the latter of which is comprised of $\cM_2, \cM_3, \cM_4$, which are the first, second, and third mechanisms, respectively, of $\calF_3$-$FiltCon(IM)$.)

Given a privacy-loss budget $\eps$, suppose the adversary starts a $\eps_1$-RDP$_\alpha$ mechanism $\mathcal M_1$ with $\eps_1 \le \eps$. As soon as $\calM_1$ gets created, $\cP$ will feed in $\calM_1$ to $\calF_2$-$FiltCon(IM)$. Upon receiving confirmation that $\calM_1$ has properly started, $\cP$ will then feed in $\calF_{K - 1}$ to $\calF_2$-$FiltCon(IM)$ with an $\epsilon - \epsilon_1$ budget. We don't need to handle the parsing in the $v = Q$ section because $\cP$ will always start two mechanisms at once. We can view $\calF_{K - 1}$-$FiltCon(IM)$ as the first query to $\mathcal U_2$.

By construction, for any deterministic adversary $\cA$ and dataset $x$, $\View(\cA \leftrightarrow \cP \circ \calF_{2}\text{-}FiltCon(IM)(x))$ is identically distributed as $ \View(\cA \leftrightarrow \calF_K$-$FiltCon(IM)(x))$. By Theorem \ref{thm.pim_privacy}, $\cP \circ \calF_{2}$-$FiltCon(IM)$ is an $\eps$-RDP$_\alpha$ I.M., which means $\calF_K\text{-}FiltCon(IM)$ is $\eps$-RDP$_\alpha$ I.M. By Definition \ref{def.ddp_cf}, $\calF_K$ is a valid RDP$_\alpha$ IM-filter.
\end{proof} 

To show that the filter works with an unbounded number of mechanisms, we examine the behavior as $K \to \infty$. For this to make sense, we observe that we are working with distributions on sequences that are well-behaved in that we can marginalize to finite subsequences. Suppose $(\Omega^\infty, \Gamma^\infty)$ is the direct product of an infinite sequence of measurable spaces $(\Omega_1, \Gamma_1), (\Omega_2, \Gamma_2), \dots$, i.e. $\Omega^\infty = \Omega_1 \times \Omega_2 \times \dots$ and $\Gamma^\infty$ is the smallest $\sigma$-algebra containing all cylinder sets $S_n(A) = \{ x^\infty \in \Omega^\infty \mid x_1, x_2, \dots, x_n \in A\}$, $A \in \Gamma^n$, for $n = 1, 2, \dots$ and $\Gamma^n = \Gamma_1 \otimes \Gamma_2 \otimes \dots \otimes \Gamma_n$. We call a sequence of probability distributions $P^1, P^2, \dots$ \emph{consistent} if $P^n$ is a distribution on $\Omega^n = \Omega_1 \times \Omega_2 \times \dots \times \Omega_n$ and $P^{i + 1}(A \times \Omega_{i + 1}) = P^i(A)$ for $A \in \Gamma^i$, $i = 1, \dots, n$.
For any such sequence there exists a distribution $P^\infty$ on $(\Omega^\infty, \Gamma^\infty)$ such that its marginal distribution on $\Omega^n$ is $P^n$, in the sense that $P^\infty(S_n(A)) = P^n(A)$, $A \in \Gamma^n$, where $\Omega^\infty = \Omega_1 \times \Omega_2 \times \dots $, $S_n(A)$ is the \emph{cylinder set} $\{ x^\infty \in \Omega^\infty \dots \} $, and $\Gamma^\infty$ is the smallest $\sigma$-algebra containing all the cylinder sets $S_k(\mathcal B)$ for $k \in \mathbb N$ and $\mathcal B \in \Gamma^k$.

\begin{lemma}[\cite{van2014renyi}]\label{lemma.infinite-divergence}
Let $P^1, P^2, \dots$ and $Q^1, Q^2, \dots$ be consistent sequences of probability distributions on $(\Omega^1, \Gamma^1), (\Omega^2, \Gamma^2), \dots$, where, for $n = 1, 2, \dots, \infty$, $(\Omega^n, \Gamma^n)$ is the direct product of the first $n$ measurable spaces in the infinite sequence $(\Omega_1, \Gamma_1), (\Omega_2, \Gamma_2), \dots$. Then for any $a \in (0, \infty]$ $$\lim_{n \to \infty} D_\alpha(P^n||Q^n) = D_\alpha(P^\infty||Q^\infty).$$ 
\end{lemma}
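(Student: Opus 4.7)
The plan is to sandwich $D_\alpha(P^\infty||Q^\infty)$ between the monotone limit of the finite divergences on both sides: an upper bound via the data-processing inequality, and a matching lower bound via a martingale argument. First, by consistency, the canonical projection $\pi_n : \Omega^\infty \to \Omega^n$ pushes $P^\infty$ and $Q^\infty$ forward to $P^n$ and $Q^n$ respectively. Since R\'enyi divergence is closed under postprocessing, $D_\alpha(P^n||Q^n) \le D_\alpha(P^\infty||Q^\infty)$ for every $n$, and the same argument applied to the projection $\Omega^{n+1} \to \Omega^n$ shows that $(D_\alpha(P^n||Q^n))_n$ is monotone non-decreasing. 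Hence the limit exists and $\lim_n D_\alpha(P^n||Q^n) \le D_\alpha(P^\infty||Q^\infty)$.

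For the reverse inequality, I would first assume $P^\infty \ll Q^\infty$ and set $Z_n := dP^n/dQ^n$, viewed as a $\Gamma^n$-measurable function on $\Omega^\infty$. Consistency implies that $(Z_n)_n$ is a nonnegative $(\Gamma^n)$-martingale under $Q^\infty$, and L\'evy's upward convergence theorem yields $Z_n \to Z_\infty$ almost surely under $Q^\infty$, with $Z_\infty = dP^\infty/dQ^\infty$. For $\alpha > 1$, Fatou's lemma applied to the nonnegative integrand $Z_n^\alpha$ gives
\begin{equation*}
E_{Q^\infty}[Z_\infty^\alpha] \;\le\; \liminf_n E_{Q^\infty}[Z_n^\alpha] \;=\; \liminf_n E_{Q^n}[Z_n^\alpha],
\end{equation*}
where the final equality uses that $Z_n$ is $\Gamma^n$-measurable and $Q^\infty$ restricted to $\Gamma^n$ coincides with $Q^n$. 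Taking logarithms and dividing by $\alpha - 1$ yields $D_\alpha(P^\infty||Q^\infty) \le \liminf_n D_\alpha(P^n||Q^n)$, which combined with the upper bound gives the claimed equality.

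The main obstacle is the case $P^\infty \not\ll Q^\infty$, in which $D_\alpha(P^\infty||Q^\infty) = \infty$ and the martingale identification of the limit breaks down. The standard remedy is to invoke the Lebesgue decomposition $P^\infty = P^\infty_{ac} + P^\infty_s$ with respect to $Q^\infty$ and exploit the fact that any singularity witness can be approximated arbitrarily well by a cylinder set in some $\Gamma^n$, forcing $D_\alpha(P^n||Q^n) \to \infty$ as well. The endpoint $\alpha = \infty$ additionally requires interpreting the divergence as an essential supremum of the likelihood ratio and handling it by a direct truncation argument. These measurability, integrability, and singular-part checks are precisely what is carried out in van Erven and Harremo\"{e}s \cite{van2014renyi}, whose result we are invoking.
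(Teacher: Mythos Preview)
The paper does not give its own proof of this lemma; it is stated with a citation to van Erven and Harremo\"{e}s \cite{van2014renyi} and used as a black box in the proof of Theorem~\ref{thm.renyi-general-case}. So there is no in-paper argument to compare your proposal against.

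That said, your sketch is the standard route and matches the argument in the cited reference: data-processing for the easy direction and monotonicity, and a Doob/L\'evy martingale limit for the likelihood ratios together with Fatou for the reverse inequality, with the singular part handled separately. One small remark: the lemma as stated covers all $\alpha \in (0,\infty]$, whereas your Fatou step is written only for $\alpha > 1$; the cases $\alpha \in (0,1)$ and $\alpha = 1$ require the obvious sign/limit adjustments (for $\alpha < 1$ the prefactor $1/(\alpha-1)$ is negative and one uses concavity/reverse Fatou or dominated convergence instead). Since the paper only invokes the lemma for $\alpha > 1$, this does not affect the application here.
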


\begin{proof}[Proof of Theorem \ref{thm.renyi-general-case}.]

Fix a divergence parameter $\alpha\geq1$, and a pair of datasets $x, x'$. Consider $\eps \geq 0$, and initialize an adversary $\mathcal A$. Consider the views $V = (r_1, a_1, r_2, a_2, \dots)$ and $V' = (r_1', a_1', r_2', a_2', \dots)$ of $\calA$ interacting with $\calF(\cdot; \eps)$-$FiltCon(IM)$ on $x$ and $x'$, respectively. Let $q_i$ denote the adversary's $i$th query and $a_i$ denote the answer to the adversary's $i$th query.

Define sequences of random variables $V_1, V_2, V_3, \dots$ by $V_i = (r_1, a_1, r_2, a_2, \dots, r_i, a_i)$, and similarly for $V_1', V_2', V_3',\dots$.  By construction, both the distributions of $V_1, \dots$ and $V_1', \dots$ are consistent. 

By Lemma \ref{lemma.renyi-general-case}, for every $k = 1, 2, \dots$ and $\alpha > 1$, we have $D_\alpha(V_k||V_k') \leq \eps$. Then, invoking Lemma \ref{lemma.infinite-divergence}, we have $$D_\alpha(X_\infty||Y_\infty) =  \lim_{k \to \infty} D_\alpha(X_k||Y_k) \le \eps,$$
where the last inequality is because the sequence of $D_\alpha(X_k||Y_k)$ monotonically increases as $k \to \infty$, and is bounded above by $\eps$.
\end{proof}

Our approach can be used to provide another proof in \cite{feldman2022individual} for the bounds on privacy filters for sequentially-composed noninteractive DP mechanisms as well. They identify the sequence of losses incurred for each query -- in this case the invocation of a noninteractive mechanism and the reception of its answers -- as a supermartingale. They then apply the optional stopping theorem for martingales to bound the divergence between adjacent datasets and verify the validity of the filter. 

The basis for our proof is that concurrent composition of two interactive RDP mechanisms implies a fully adaptive $\calF_2$-$FiltCon(IM)$. Having interactivity allows us to proceed by induction in the proof of Lemma \ref{lemma.renyi-general-case}, describing $\calF_k$ in terms of $\calF_{k - 1}$. We could do the same for the sequential composition since induction preserves sequentiality. Then by Lemma \ref{lemma.renyi-general-case}, $\calF_k$-$FiltSeq(IM)$ is an $\eps$-RDP$_\alpha$ I.M. Thus, the sequential composition of two interactive R\'enyi DP mechanisms of order $\alpha$ implies a fully adaptive filter for sequential composition of RDP mechanisms of order $\alpha$. 

\subsubsection{Odometer}

Because of the bijective relationship between valid continuation rules and privacy-loss accumulators, we are able to define an additive relationship for concurrent odometers of R\'enyi mechanisms as well. 


\begin{theorem}[Valid $\eps$-RDP$_\alpha$ odometer for concurrently-composed interactive mechanisms]
    Let $\alpha > 1$. Define $\calG: \mathbb R_{\geq 0}^* \to \mathbb R_{\geq 0}$ by $\calG(\eps_1, \dots, \eps_K) = \sum_{i = 1}^K \eps_i$. $\calG$ is a valid privacy-loss accumulator for the concurrent composition of a sequence of mechanisms $\calM_1, \calM_2, \dots,$ where $\calM_i$ is an $\eps_i$-RDP$_\alpha$ interactive mechanism.
\end{theorem}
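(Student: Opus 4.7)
The plan is to derive the odometer result directly from the filter result already established in Theorem~\ref{thm.renyi-general-case}, using the bijection between valid privacy filters and valid privacy-loss accumulators given by Lemma~\ref{lemma.f_odom} (part 2). Concretely, observe that the proposed accumulator $\calG(\eps_1,\dots,\eps_K) = \sum_{i=1}^K \eps_i$ and the filter $\calF(\eps_1,\eps_2,\dots;\eps) = \mathbb{I}(\sum_i \eps_i \le \eps)$ from Theorem~\ref{thm.renyi-general-case} satisfy the relation $\calF(\eps_1,\dots,\eps_k;\eps) = \mathbb{I}(\calG(\eps_1,\dots,\eps_k) \le \eps)$ for every $\eps \ge 0$, i.e.\ $\calF$ is precisely the filter built from $\calG$ as in Lemma~\ref{lemma.f_odom}.

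The first step is to invoke Theorem~\ref{thm.renyi-general-case}, which gives that for every $\alpha > 1$ and every $\eps \ge 0$, the mechanism $\calF(\cdot;\eps)\text{-}FiltCon(IM)$ is an $\eps$-$\mathrm{RDP}_\alpha$ interactive mechanism; equivalently, $\calF$ is a valid $\mathrm{RDP}_\alpha$ concurrent IM-filter. The second step is to apply the ``filter implies accumulator'' direction of Lemma~\ref{lemma.f_odom}(2), instantiated with the distance measure $\mathcal{D} = \mathrm{RDP}_\alpha$ on the totally ordered set $(\mathbb{R}_{\ge 0} \cup \{\infty\}, \le)$. Since $\calF(\cdot;\eps) = \mathbb{I}(\calG(\cdot) \le \eps)$ is valid for every $\eps \in \mathbb{R}_{\ge 0}$, the lemma yields that $\calG(\cdot)\text{-}OdomCon(IM)$ is a valid $\mathrm{RDP}_\alpha$ odometer, which by Definition~\ref{def.ddp_odom} is exactly the statement that $\calG$ is a valid $\mathrm{RDP}_\alpha$ concurrent IM-privacy-loss accumulator.

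There is essentially no obstacle beyond verifying that the hypotheses of Lemma~\ref{lemma.f_odom} are met in the R\'enyi-DP setting: one should check that the partial order used for $\mathrm{RDP}_\alpha$ is the usual numerical ordering (so $\preceq$ agrees with $\le$) and that $\calG$ takes values in $\calD = \mathbb{R}_{\ge 0} \cup \{\infty\}$, both of which are immediate from Section~\ref{sec.RDP}. If a fully self-contained argument is desired, one could alternatively reproduce the reduction inside Lemma~\ref{lemma.f_odom}: given an adversary $\cA$ against $\calG\text{-}OdomCon(IM)$ with truncation threshold $\eps$, build an adversary $\cA'$ against $\calF(\cdot;\eps)\text{-}FiltCon(IM)$ that passes every non-\texttt{privacy\_loss} query through unchanged and simulates any \texttt{privacy\_loss} query locally by summing the privacy-loss parameters announced so far, halting as soon as the running sum would exceed $\eps$. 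The truncated view of $\cA$ against the odometer is then identically distributed to the view of $\cA'$ against the filter, and Theorem~\ref{thm.renyi-general-case} finishes the argument. This makes explicit that the only nontrivial content is already contained in the induction of Lemma~\ref{lemma.renyi-general-case}, together with the passage to the limit via Lemma~\ref{lemma.infinite-divergence}.
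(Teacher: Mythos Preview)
Your proposal is correct and follows essentially the same approach as the paper: define $\calF(\cdot;\eps) = \mathbb{I}(\calG(\cdot) \le \eps)$, invoke Theorem~\ref{thm.renyi-general-case} to get that $\calF$ is a valid $\mathrm{RDP}_\alpha$ IM-filter, and then apply Lemma~\ref{lemma.f_odom}(2) to conclude that $\calG$ is a valid IM-privacy-loss accumulator. The additional checks you mention (that $\preceq$ coincides with $\le$ for $\mathrm{RDP}_\alpha$, and the alternative direct reduction) are not needed for the paper's argument but do no harm.
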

    
\begin{proof}
Consider the relation $\calF(\eps_1, \dots, \eps_K; \eps): \calD^* \times \calD \to \{ 0, 1\}$ such that $\calF(\eps_1, \dots, \eps_K; \eps) = \mathbb I(\calG(\eps_1, \dots, \eps_K \leq \eps)$. By Theorem \ref{thm.renyi-general-case}, $\calF$ is a valid RDP$_\alpha$ filter. By Lemma \ref{lemma.f_odom}, $\calG$ is a valid privacy-loss accumulator for R\'enyi mechanisms of order $\alpha$.
\end{proof}

\subsection{Concurrent Filter \& Odometer for zCDP}

\begin{theorem}[Valid zCDP filter for concurrently-composed interactive mechanisms]\label{thm.zcdp-filter}

Define the relation $\calF: \mathbb R_{\geq 0}^* \times \mathbb R_{\geq 0} \to \{ 0, 1\}$ by  $\calF(\rho_1, \rho_2, \dots; \rho)= \mathbb{I}(\sum_i \rho_i \le \rho)$. Then $\calF$ is a valid $\rho$-zCDP filter for the concurrent composition of zCDP interactive mechanisms. 
\end{theorem}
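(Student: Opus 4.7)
The plan is to reduce the zCDP statement to R\'enyi DP of every order and then invoke Theorem \ref{thm.renyi-general-case}. By Definition \ref{def.zcdp}, a mechanism is $\rho$-zCDP if and only if it is $\rho\alpha$-RDP$_\alpha$ for every $\alpha \in (1, \infty)$. Hence, to prove $\calF(\cdot; \rho)\text{-}FiltCon(IM)$ is $\rho$-zCDP, it suffices to fix an arbitrary $\alpha > 1$ and show that this filter, when fed $\rho_i$-zCDP mechanisms, is $\rho\alpha$-RDP$_\alpha$.

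For a fixed $\alpha > 1$, let $\calF_R(\eps_1, \dots, \eps_k; \eps) = \mathbb{I}(\sum_i \eps_i \leq \eps)$ denote the RDP$_\alpha$ filter, which Theorem \ref{thm.renyi-general-case} shows is a valid RDP$_\alpha$ IM-filter. The key observation is that $\calF(\cdot; \rho)\text{-}FiltCon(IM)$ on zCDP inputs and $\calF_R(\cdot; \rho\alpha)\text{-}FiltCon(IM)$ on RDP$_\alpha$ inputs execute identically up to a trivial parameter rescaling: any $\rho'$-zCDP mechanism is $(\alpha, \rho'\alpha)$-RDP by Definition \ref{def.zcdp}, so verification in the RDP version succeeds whenever it does in the zCDP version, and the halt conditions $\sum_i \rho_i \leq \rho$ and $\sum_i \rho_i\alpha \leq \rho\alpha$ are equivalent.

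To make this rigorous, I would fix neighboring datasets $x, x'$ and restrict to a deterministic adversary $\calA$ against $\calF(\cdot; \rho)\text{-}FiltCon(IM)$ using Lemma \ref{lemma.deterministic_adversaries}. I then construct a PIM mechanism $\cP$ that sits between $\calA$ and $\calF_R(\cdot; \rho\alpha)\text{-}FiltCon(IM)$: whenever $\calA$ issues a start-mechanism query $(\calM, \rho')$, $\cP$ rewrites it to $(\calM, \rho'\alpha)$ before forwarding it; every interactive query $(j, q)$ to an existing mechanism and every answer are passed through unchanged. By construction, for each $y \in \{x, x'\}$,
\[
\View(\calA \leftrightarrow \calF(\cdot; \rho)\text{-}FiltCon(IM)(y)) \equiv \View(\calA \leftrightarrow \cP \circ \calF_R(\cdot; \rho\alpha)\text{-}FiltCon(IM)(y)).
\]
By Theorem \ref{thm.renyi-general-case}, $\calF_R(\cdot; \rho\alpha)\text{-}FiltCon(IM)$ is $\rho\alpha$-RDP$_\alpha$, and by Theorem \ref{thm.pim_privacy} postprocessing by $\cP$ preserves this, so $\cP \circ \calF_R(\cdot; \rho\alpha)\text{-}FiltCon(IM)$ is $\rho\alpha$-RDP$_\alpha$. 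The view equivalence then gives $\rho\alpha$-RDP$_\alpha$ for $\calF(\cdot; \rho)\text{-}FiltCon(IM)$, and taking this across all $\alpha > 1$ yields the desired $\rho$-zCDP guarantee.

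The main obstacle is only superficial: one must check that the abstract verification step in the filter pseudocode is compatible with the PIM's rewriting, i.e., that a mechanism certified as $\rho'$-zCDP by the zCDP verifier is also accepted as $(\alpha, \rho'\alpha)$-RDP by $\calF_R$'s verifier. Since the implication $\rho'$-zCDP $\Rightarrow (\alpha, \rho'\alpha)$-RDP is immediate from Definition \ref{def.zcdp} and the paper treats verification abstractly as an oracle on mechanism metadata, this poses no real difficulty. In particular, there is no analogue of the finite-communication subtlety that arose for $f$-DP, because we never need the reduction in Theorem \ref{thm.reduction}; Theorem \ref{thm.renyi-general-case} already handles arbitrary interaction.
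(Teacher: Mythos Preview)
Your proposal is correct and follows essentially the same approach as the paper: reduce zCDP to RDP$_\alpha$ for each fixed $\alpha$, construct a PIM that rewrites each start-mechanism query $(\calM,\rho')$ as $(\calM,\rho'\alpha)$ while passing everything else through, invoke Theorem~\ref{thm.renyi-general-case} for the RDP filter, and then apply postprocessing (Theorem~\ref{thm.pim_privacy}/Corollary~\ref{corollary.pim_privacy}) plus the view equivalence to conclude for every $\alpha$. The paper even writes out your PIM explicitly as Algorithm~\ref{alg.zcdp_post}, and your remark about the verification step (zCDP $\Rightarrow$ RDP$_\alpha$) is exactly the point the paper relies on implicitly.
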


\begin{proof}
    By Definition \ref{def.zcdp}, a mechanism $\calM_i$ that is $\rho_i$-zCDP is also $(\alpha, \rho_i\alpha)$-RDP for all $\alpha > 1$. For a particular $\alpha$, by Theorem \ref{thm.renyi-general-case}, $\mathbb I(\sum_{i} \rho_i\alpha \leq \rho\alpha) = \mathbb I(\alpha\sum_{i} \rho_i \leq \rho\alpha) = \mathbb I(\sum_{i} \rho_i \leq \rho)$ is a valid RDP IM-filter, meaning that a filter $\calF^{(\alpha)}$-$FiltCon(IM)$ constructed from $\calF^{(\alpha)}(\rho_1\alpha, \rho_2\alpha, \dots; \rho\alpha) := \mathbb I(\sum_i \rho_i\alpha \leq \rho\alpha) = \calF(\rho_1, \rho_2, \dots; \rho)$ is an $(\alpha, \rho\alpha)$-RDP I.M. Since Theorem \ref{thm.renyi-general-case} holds for all $\alpha$, $\calF^{(\alpha)}$-$FiltCon(IM)$ is an $(\alpha, \rho\alpha)$-RDP I.M. for all $\alpha > 1$. 
    
    To show that $\calF(\rho_1, \rho_2, \dots; \rho) = \mathbb I(\sum_i \rho_i \leq \rho)$ is a valid $\rho$-zCDP IM-filter, we construct a postprocessing $\cP$ that transforms $\calF^{(\alpha)}(\cdot; \rho\alpha)$-$FiltCon(IM)$ to $\calF(\cdot; \rho)$-$FiltCon(IM)$. Informally, $\cP$ converts all queries from the adversary that start $\rho_i$-zCDP mechanisms into $(\alpha, \rho_i\alpha)$-RDP mechanisms, and preserves all other messages. The algorithmic pseudocode of $\cP$ is in the appendix in Algorithm \ref{alg.zcdp_post}. 

    By Corollary \ref{corollary.pim_privacy}, because $\View(\calA \leftrightarrow \calF$-$FiltCon(IM)) \equiv \View(\calA \leftrightarrow \cP \circ \calF^{(\alpha)}$-$FiltCon(IM))$, $\calF(\cdot; \rho)$-$FiltCon(IM)$ is a valid $\rho$-zCDP I.M., which means $\calF(\cdot; \rho)$ is a valid $\rho$-zCDP IM-filter.
\end{proof}

\begin{theorem}[Valid $\rho$-zCDP odometer for concurrently-composed interactive mechanisms]
    Define $\calG: \calD^* \to \calD$ to be $\calG(\rho_1, \rho_2, \dots) = \sum_{i = 1} \rho_i$. Then $\calG$ is a valid privacy-loss accumulator for the concurrent composition of a sequence of mechanisms $\calM_1, \calM_2, \dots,$ where $\calM_i$ is an $\rho_i$-zCDP interactive mechanism. 
\end{theorem}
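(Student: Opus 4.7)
The plan is to derive the zCDP odometer result directly from the zCDP filter result (Theorem \ref{thm.zcdp-filter}) using the bijective correspondence between valid privacy-loss accumulators and valid filters established in Lemma \ref{lemma.f_odom}. This is exactly the same strategy used to obtain the RDP odometer from the RDP filter, and to obtain the $f$-DP odometer from the $f$-DP filter (Theorem \ref{thm.fDP.odometer}).

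Concretely, I would define the continuation rule $\calF : \mathbb{R}_{\geq 0}^* \times \mathbb{R}_{\geq 0} \to \{0,1\}$ by
\[
\calF(\rho_1, \rho_2, \dots; \rho) \;=\; \mathbb{I}\!\left(\calG(\rho_1, \rho_2, \dots) \leq \rho\right) \;=\; \mathbb{I}\!\left(\sum_i \rho_i \leq \rho\right).
\]
By Theorem \ref{thm.zcdp-filter}, $\calF$ is a valid $\rho$-zCDP concurrent IM-filter, i.e., $\calF(\cdot; \rho)$-$FiltCon(IM)$ is a $\rho$-zCDP interactive mechanism for every $\rho \geq 0$. Then I would invoke part (2) of Lemma \ref{lemma.f_odom}, which states that a function $\calG: \calD^* \to \calD$ is a valid concurrent IM-privacy-loss accumulator if and only if the induced filter $\calF(\cdot; d) = \mathbb{I}(\calG(\cdot) \preceq d)$ is a valid $\calD$-DP concurrent IM-filter. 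Applying this in the zCDP direction (with the partial order being the usual $\leq$ on $\mathbb{R}_{\geq 0} \cup \{\infty\}$), the validity of $\calF$ immediately yields that $\calG(\rho_1, \rho_2, \dots) = \sum_i \rho_i$ is a valid $\rho$-zCDP concurrent IM-privacy-loss accumulator, which is precisely the statement of the theorem.

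There is essentially no obstacle beyond verifying that Lemma \ref{lemma.f_odom} applies to the zCDP probability-distance framework. Since zCDP fits the generalized probability distance setup of Definition \ref{def.genprobdistance} (as noted immediately after Definition \ref{def.zcdp}, with postprocessing and joint convexity both holding), Lemma \ref{lemma.f_odom}, which is proved for an arbitrary generalized distance $(\calD, \preceq, D)$, applies verbatim. The only thing worth spelling out in the write-up is that the truncation condition ``$d_i \not\preceq d$'' in Algorithm \ref{alg:truncview} specializes for zCDP to ``$\sum_{j \leq i} \rho_j > \rho$,'' matching exactly the halting condition of the filter. Hence the proof is a one-line reduction, and no new technical work is required beyond citing Theorem \ref{thm.zcdp-filter} and Lemma \ref{lemma.f_odom}.
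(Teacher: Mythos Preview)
Your proposal is correct and matches the paper's proof essentially verbatim: define $\calF(\cdot;\rho)=\mathbb I(\calG(\cdot)\le\rho)$, invoke Theorem~\ref{thm.zcdp-filter} to get that $\calF$ is a valid zCDP IM-filter, and then apply Lemma~\ref{lemma.f_odom} to conclude $\calG$ is a valid zCDP IM-privacy-loss accumulator. Your additional remark that zCDP fits the generalized probability distance framework of Definition~\ref{def.genprobdistance} (so that Lemma~\ref{lemma.f_odom} applies) is a nice clarification that the paper leaves implicit.
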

    
\begin{proof}
Define a relation $\calF(\rho_1, \rho_2, \dots; \rho): \calD^* \times \calD \to \calD$ such that $\calF(\rho_1, \rho_2, \dots; \rho) = \mathbb I(\calG(\rho_1, \rho_2, \dots \leq \rho)$. By Theorem \ref{thm.zcdp-filter}, $\calF$ is a valid $\rho$-zCDP filter. By Lemma \ref{lemma.f_odom}, $\calG$ is a valid zCDP privacy-loss accumulator. 
\end{proof}
\section{Implementation}\label{sec.implement}

In this section, we provide an overview of the implementation of differentially private mechanisms in the open-source software project OpenDP and the Tumult Analytics platform. We then discuss the implications of our results on the two platforms. 

\subsection{Practical Application in the OpenDP Library}
The OpenDP Library \cite{gaboardi2020programming} is a modular collection of statistical algorithms used to build differentially private computations. 
The OpenDP Library represents differentially private computations with measurements and odometers. 
Measurements and odometers fully characterize the privacy guarantees they give in terms of an input domain, input metric, and privacy measure. 
They both contain a function to make a differentially private release and a privacy map to reason about the privacy spend of a release.
A privacy map is a function that takes in a bound on the distance between adjacent inputs/datasets ($d_{in}$),
and translates it to a bound on the distance between respective outputs ($d_{out}$, a privacy parameter).
From the perspective of the library, an \emph{interactive measurement} is simply a measurement for which the function emits a \emph{queryable} (Figure \ref{fig:queryable}), OpenDP terminology for an object that implements an interactive mechanism. 
A queryable is modeled as an opaque state machine, consisting of an internal state value and a transition function.
When an analyst passes a query into the queryable, the state is updated, and an answer is returned.

\begin{figure}
    \centering
    \includegraphics[width=\linewidth]{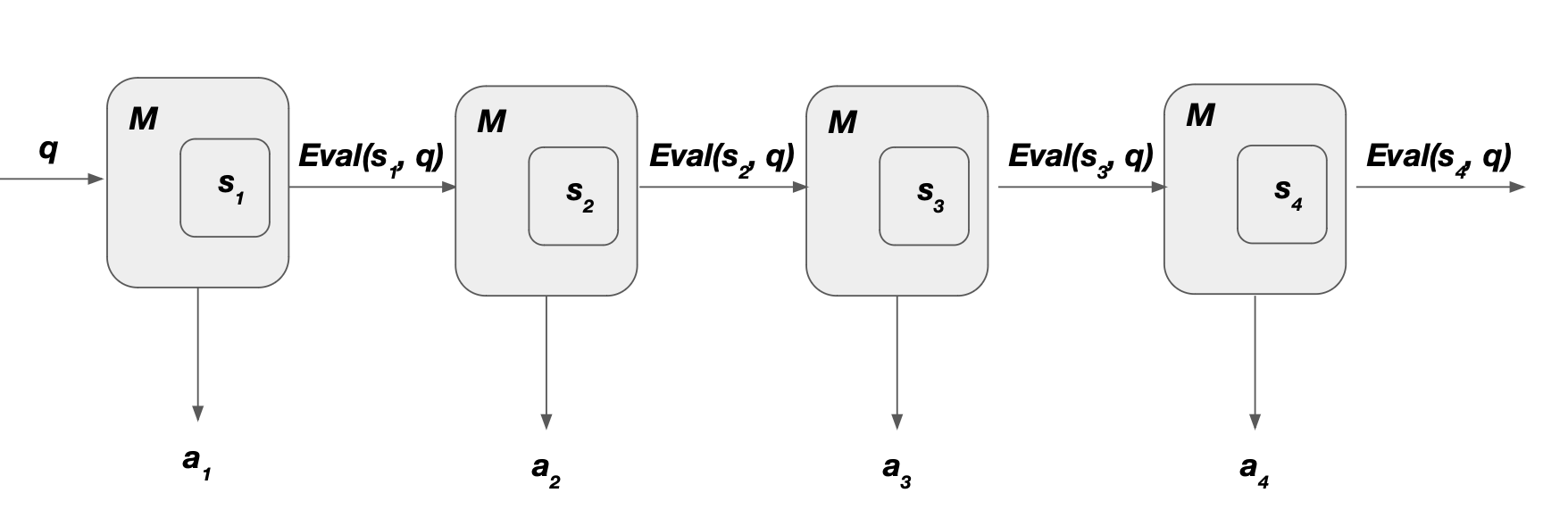}
    \caption{Diagram of a queryable.}
    \label{fig:queryable}
\end{figure}

All interactivity in the OpenDP Library uses queryables, and the privacy guarantee of an interactive measurement is defined exactly as in Definitions \ref{def.im_e_d} and \ref{def.ddp}, quantitatively for all possible adversary's strategies $\calA$.

Interactive measurements are well-suited to capture interactive composition when the desired privacy parameters for each query are known up-front. 
For example, the internal state of a compositor queryable consists of the dataset and per-query privacy parameters.
Each time a query $q_i$ (which may itself be a measurement) is passed to the compositor queryable, 
a privacy parameter (like $\eps_i$) is popped off of its internal state.
Unlike measurements, odometers always emit a queryable from their function, 
and the privacy map is maintained \emph{inside} the queryable.
OpenDP's implementation of odometers differs from the pseudocode in this paper, 
in that the input distance is not fixed up-front:
the \texttt{privacy\_loss} query also holds a $d_{in}$.
Thus, queryables spawned by odometers can be queried with a bound on the distance between adjacent input datasets $d_{in}$, 
which returns the current corresponding privacy spend $d_{out}$.

The state of odometer queryables in OpenDP consist of the dataset and a vector of privacy maps (one from each query).
When the odometer queryable is passed a privacy-loss query, 
the queryable passes $d_{in}$ from the query into each of the child privacy maps stored in its state,
which emit the $d_i$ used in Algorithms \ref{alg.o_construction} and \ref{alg.co_construction}.
The privacy loss returned is the composition of these privacy loss parameters.

A filter is an instance of an interactive measurement.
Any odometer can be converted into a filter IM by fixing an upper bound on the privacy parameters, as in Lemma \ref{lemma.f_odom}.
That is, any queryable spawned by the filter IM will refuse to answer any queries that would exceed the configured privacy parameters.

The OpenDP Library uses ``wrapping'' to handle situations where a queryable must influence the behavior of queryables it spawns in order to uphold its privacy guarantees.
For example, a sequential composition queryable must wrap any queryables it spawns in a logic that first asks the sequential compositor queryable for permission to execute (which allows the sequential compositor to maintain sequentiality), executes the query, and then recursively wraps any queryable in the answer with the same logic.

Concurrent composition improves upon sequential composition in that concurrent composition does not need to influence the behavior of child queryables.
Thus, concurrent compositors, odometers, and filters benefit from a simple implementation, in that they do not need to implement complicated sequentiality constraints via ``wrapping''.
More specifically, implementing \emph{any} composition primitive for interactive mechanisms is more complex than for noninteractive mechanisms, as the composition primitive needs to provide the analyst/adversary access to many interactive mechanisms (in such a way that the only access to the sensitive data is through queries, and in particular ensuring that the private internal state of the mechanisms cannot be inspected directly), in contrast to simply sending the analyst the results of noninteractive mechanisms. In the case of sequential composition, the compositor implementation had to be even more complex in order to enforce sequentiality; our work removes the need for this additional complexity.

Concurrent composition also allows the analyst to work in a more exploratory fashion. 
By allowing non-sequential access to mechanisms, the analyst is free to sequence their queries arbitrarily.
Analysts may also return to previous (un-exhausted) mechanisms without incurring additional privacy penalties for burn-in.
This will also allow analysts to interact with multiple mechanisms simultaneously in settings where computational concurrency may make sequencing ambiguous.
Our filter and odometer theorems also allow analysts to choose privacy parameters adaptively, and partition their privacy budget across different mechanisms, providing added flexibility. 

We provide our implementation of concurrent composition based on the OpenDP Library on Github at \url{https://github.com/concurrent-composition/concurrent-composition}.

\subsection{Practical Application in the Tumult Framework}

Tumult Analytics \cite{tumultanalyticssoftware} \cite{tumultanalyticswhitepaper} is an open-source framework for releasing aggregate information from sensitive datasets with differential privacy. Analytics queries are evaluated using differentially private mechanisms in the context of a Tumult Analytics \emph{session}, which is essentially a privacy filter. Analytics is written on top of Tumult Core, a framework for constructing differentially private mechanisms with automated privacy guarantees. Interactive mechanisms are the basic abstraction for interactivity in Tumult Core (an interactive mechanism is called a \emph{queryable} in Tumult Core). Tumult Core currently restricts interleaving queries to the composition of queryables to prevent concurrent composition. For example, suppose we start with a privacy filter queryable, and our first query spawns a new queryable using some portion of the budget. Tumult Core requires that the user finish interacting with this spawned queryable before asking a new query to the parent privacy filter, otherwise we may (adaptively) spawn a second queryable and concurrently ask queries of both spawned queryables. Note that while certain concurrent composition results are known from prior work, Tumult Core currently does not allow concurrent composition under any circumstances to reduce code complexity and maintain UX consistency. The results in this paper allows for removing this restriction.

Although spawning queryables within queryables may seem niche, one common application in Tumult Analytics is with parallel composition. Parallel composition in differential privacy says that if we partition a dataset and run DP queries on each subset in the partition, the overall privacy loss is the maximum privacy loss across the mechanisms. In Tumult Analytics, the mechanisms the user runs on each subset of the partitioned data can be interactive. Under the hood, this is implemented as follows: the top level privacy filter spawns a new interactive mechanism, consisting of multiple privacy filters (one for each subset in the partition). To prevent concurrent composition where it is not allowed, we require that the user finish interacting with this parallel composition interactive mechanism before they can ask any more queries of the top-level privacy filter, or spawn a new parallel composition mechanism from the top-level privacy filter.

The results in this paper allow for removing the restrictions on concurrent composition, and thereby improve the existing system in two ways. First, it reduces code complexity and improves auditablility. The privacy guarantee of an interactive mechanism is contingent on the promise that there will be no concurrent composition in scenarios where it is not allowed. This property is more challenging to verify than other properties of the system that contribute to the privacy guarantee required, which tend to be localized and easy to verify in isolation.

Second, removing restrictions on concurrent composition improves the UX of Tumult Analytics. As mentioned previously, users of Analytics must finish interacting with one queryable before they start interacting with a subsequent queryable, and this multi-queryable scenario happens frequently when using parallel composition. This has the potential to cause issues:
\begin{itemize}
    \item It is a possible a user could have a reason for wanting to interleave queries to two or more queryables.
    \item Even if interleaving queries isn't required, the user may structure their code such that queries to multiple queryables are interleaved. This would cause the user to see an unfamiliar error, and require them to restructure their code.
\end{itemize}
Both of these issues would be solved if restrictions on concurrent composition for privacy filters was removed, which the results of our paper can enable.

\section*{Acknowledgement}
S.V. is supported by a gift from Apple, NSF grant BCS-2218803, a grant from the Sloan Foundation and a Simons Investigator Award. V.X. is supported in part by NSF grant BCS-2218803. W.Z. is supported in part by a Computing Innovation Fellowship from the Computing Research Association (CRA) and the Computing Community Consortium (CCC), and the BU Census grant. 

We thank the JPC referees for their detailed reviews and valuable feedback. We are especially grateful to Roodabeh Safavi Hemami for identifying an error in Algorithm 7 in a previous version of this work.

\bibliographystyle{ACM-Reference-Format}
\balance
\bibliography{JPC/ref-final}

\newpage
\appendix

\section{Algorithms}
\subsection{Algorithm for proof of Theorem \ref{thm.fDP.main}}

In this section, we enclose the IM-to-IM postprocessing algorithm $\cP$ for the proof of Theorem \ref{thm.fDP.main}, as defined in Algorithm \ref{alg.fdp_nif_cf_post}. $\cP$ will interact with $\NIF$ only to initialize and return $\NIF$'s response for a new noninteractive $f$-DP mechanism, if an adversary asks to start an interactive $f$-DP mechanism.  Otherwise, for a query to mechanism $\calM_j$, $\cP$ will use the NIM-to-IM postprocessing of $\calM_j$, $\cP_j$, to answer the query. This allows any adversary $\calA$ to interact with $\NIF$ as if it were interacting with $\CF$.

\SetKwFunction{AlgPostprocessing}{$\cP$}

\begin{algorithm}
    \DontPrintSemicolon
    \caption{Interactive postprocessing $\cP$ that transforms $f$-DP $\NIF$ into $\CF$.}\label{alg.fdp_nif_cf_post}
        \Fn{\AlgPostprocessing{$s, v, m$}}{
            \If(\tcp*[f]{initialize $\cP$}){$s = \lambda$}{
                $s \gets ([])$, where $[]$ is an empty list\;
            }
            \If(\tcp*[f]{$\cP$ accepting queries}){$v = Q$}{
                Parse $s$ as $([(\cP_1, s^{\cP_1}, f_1), \dots, (\cP_k, s^{\cP_k}, f_k)])$\;
                \If(\tcp*[f]{$\cA$ starts a new mechanism}){$m$ is of the form $m = (\cM', f')$}{
                    Let $\cP_{k + 1}$ be the interactive postprocessing for noninteractive mechanism $\mathcal N_{k + 1}$ so that $\cP_{k + 1}\circ \mathcal N_{k + 1} \equiv \cM_{k + 1}$ on dataset universe $\{x, x'\}$\;
                    $s' \gets ([(\cP_1, s^{\cP_1}, f_1), \dots, (\cP_{k + 1}, \texttt{pending}, f_{k + 1})])$\;
                    $m' \gets (\mathcal N_{k + 1}, f_{k + 1})$\;
                }\ElseIf(\tcp*[f]{$\cA$ queries an old mechanism}){$m$ is of form $m = (j, q)$ where $j = 1, \dots, k$ and $q$ is a query to $\cM_j$}{
                    $(s', m') \gets \cP_j(s, q)$\;
                    $s' \gets ([(\cP_1, s^{\cP_1}, f_1), \dots, (\cP_j, (s^{\cP_j})', f_j), \dots (\cP_k, s^{\cP_k}, f_k)])$\;
                }
            }\ElseIf(\tcp*[f]{$\cP$ accepting answers}){$v = A$}{
                \If{$s$ is of form $((\cP_1, s^{\cP_1}, f_1), \dots, (\cP_k, \texttt{pending}, f_k))$}{
                    \If(\tcp*[f]{did not successfully start a new mechanism}){$m = \texttt{invalid query}$}{
                        $s' \gets ([(\cP_1, s^{\cP_1}, f_1), \dots, (\cP_{k - 1}, s^{\cP_{k - 1}}, f_{k - 1})])$\;
                        $m' \gets m$\;
                    }\Else(\tcp*[f]{noninteractive output of new mechanism}){
                        $(s^{\cP_k}, m') \gets \cP(\lambda, A, m)$ \tcp*[f]{set up new interactive postprocessor's state}\;
                        $s' \gets ([(\cP_1, s^{\cP_1}, f_1), \dots, (\cP_k, s^{\cP_k}, f_k)])$\;
                    }
                }
            }
            \Return{$(s', A, m')$}\;
        }
\end{algorithm}

\subsection{Algorithm for proof of Lemma \ref{lemma.renyi-general-case}}

In this section, we enclose the algorithm for I.M.-to-I.M. postprocessing algorithm $\cP$ for the proof of Lemma \ref{lemma.renyi-general-case}, as defined in Algorithm \ref{alg.rdp_post}. This algorithm assumes inductively that $\calF_{K - 1}$-$FiltCon(IM)$ is a valid $\text{RDP}_\alpha$ filter. For any adversary, when receiving the adversary's request to start $\calM_1$ and verifying the privacy loss is within budget, $\cP$ will start both $\calM_1$ and $\calF_{K - 1}$-$FiltCon(IM)$ in $\calF_2$-$FiltCon(IM)$. $\cP$ will directly pass on queries to $\calM_1$. For queries $q$ to $\calM_j$ where $j > 1$ (in the form $(j, q)$), $\cP$ will postprocess the message as $(2, (j - 1), q)$, indicating that $\cP$ is passing $q$ on to the $(j - 1)$th mechanism of $\calF_{K - 1}$-$FiltCon(IM)$.

\begin{algorithm}
\small
    \DontPrintSemicolon
    \caption{Interactive postprocessing $\cP$ that transforms $\calF_2(\cdot; \eps)$-$FiltCon(IM)$ into $\calF_K(\cdot; \eps)$-$FiltCon(IM)$.}\label{alg.rdp_post}
    \Fn{\AlgPostprocessing{$s, v, m$}} {
         $k \gets$ number of mechanisms (started or attempting to start)\;
         \If(\tcp*[f]{initialize $\cP$}){$s = \lambda$}{
            $s \gets (k, [])$, where $[]$ is an empty list, $v \gets Q$\;
         }
         \If(\tcp*[f]{$\cP$ accepting queries}){$v = Q$}{
            \If(\tcp*[f]{$\cA$ starts a new mechanism}){$m$ is of the form $m = (\cM', (\alpha, \epsilon'))$}{
                \If(\tcp*[f]{currently no mechanisms}){$k = 0$}{
                    $\cM_1 \gets \cM'$, $\epsilon_1 \gets \epsilon'$\;
                    $b_1 \gets \texttt{no}$ \tcp*[f]{$b_1$ is the boolean keeping track of whether $\cM_1$ has successfully started}
                    $k \gets k + 1$, $s' \gets (k, [(\cM_1, (\alpha, \epsilon_1), b_1)])$, $m' \gets m$\;
                }\Else(\tcp*[f]{both $\calM_1$ and $\calF_{K - 1}$ started}){
                    $s' \gets s$, $m' \gets m$ \tcp*[f]{pass on query; the filter will handle any rejections (any request will be over budget)}
                }
            }\ElseIf(\tcp*[f]{$\cA$ queries an old mechanism}){$m$ is of the form $m = (j, q)$ where $j = 1, \dots, K$ and $q$ is a query to $\cM_j$}{
                \If(\tcp*[f]{querying $\cM_1$}){$j = 1$}{
                    $m' \gets (2, (j - 1, q))$, $s' \gets s, m' \gets m$ \;
                }
            }
            $v \gets A$ \tcp*[f]{we will be talking to the mechanism next}\;
            \Return{$(s', Q, m')$}\;
         }\ElseIf(\tcp*[f]{$\cP$ accepting answers}){$v = A$}{
            \If(\tcp*[f]{started $\calM_1$}){$k = 1$, $b_1 = \texttt{no}$, and $m$ is not an error message}{
                $b_1 \gets \texttt{yes}$, $b_2 \gets \texttt{no}$, $m' \gets (\calF_{K - 1}\text{-}FiltCon(IM), (\alpha, \epsilon - \epsilon_1))$\;
                $k \gets k + 1$, $m_\calM \gets m$\;
                $s' \gets (k, [(\calM_1, (\alpha, \epsilon_1), b_1), (\calF_{K - 1}\text{-}FiltCon(IM), (\alpha, \epsilon - \epsilon_1), b_2, m_\calM)])$ \tcc{start $\calF_{K - 1}$-$FiltCon(IM)$ immediately, store $m$ for return after $\calF_{K - 1}$ starts}
                \Return{$(s', Q, m')$}\;
            }\ElseIf(\tcp*[f]{just started $\calF_{K - 1}$}){$k = 1$ and $b_2 = \texttt{no}$}{
                Parse $s$ as $s = (2, [(\calM_1, (\alpha, \epsilon_1), b_1), (\calF_{K - 1}\text{-}FiltCon(IM), (\alpha, \epsilon - \epsilon_1), b_2, m_\calM)])$\;
                $b_2 \gets \texttt{yes}$\;
                $m' \gets m_\calM$ \tcp*[f]{pass on the answer from $\calM_1$}\;
                $v \gets Q$ \tcp*[f]{talking to adversary next}\;
                \Return{$(s', A, m')$}\;
            }\Else{
                 $s' \gets s, v \gets Q, m' \gets m$\;
                 \Return{$(s', A, m')$}\;
            }
         }
    }
\end{algorithm}

\subsection{Algorithm for proof of Theorem \ref{thm.zcdp-filter}}

In this section, we enclose the algorithm for I.M.-to-I.M. postprocessing algorithm $\cP$ for the proof of Theorem \ref{thm.zcdp-filter}, as defined in Algorithm \ref{alg.zcdp_post}. Essentially, $\cP$ only postprocesses queries that start new $\rho$-zCDP mechanisms into queries that start new $(\alpha, \rho\alpha)$-RDP mechanisms. All other interactions are preserved.

\begin{algorithm}
    \DontPrintSemicolon
    \caption{Interactive postprocessing $\cP$ that transforms $\calF^{(\alpha)}(\cdot; \eps)$-$FiltCon(IM)$ to $\calF(\cdot; \rho)$-$FiltCon(IM)$.}\label{alg.zcdp_post}
        \Fn{\AlgPostprocessing{$s, v, m$}}{
            \If(\tcp*[f]{initialize $\cP$}){$s = \lambda$}{
                $s \gets ([])$, where $[]$ is an empty list\;
                $v \gets Q$\;
            }

            \If(\tcp*[f]{$\cP$ accepting queries}){$v = Q$}{
                \If(\tcp*[f]{$\cA$ starts a new mechanism}){$m$ is of the form $m = (\calM', \rho')$}{
                    $m' \gets (\calM', (\alpha, \rho'\alpha))$\;
                }\ElseIf{$m$ is of the form $m = (j, q)$ where $j = 1, 2, \dots$ and $q$ is a query to $\calM_j$}{
                    $m' \gets m$ \tcp*[f]{ferry query to mechanism}\;
                }\ElseIf{$m$ cannot be parsed correctly}{
                    $m' \gets \texttt{invalid query}$\;
                }
                $v \gets A$ \tcp*[f]{talk to the mechanism next}\;
                \Return{$(s, Q, m')$}\;
            }\ElseIf(\tcp*[f]{$\cP$ accepting answers}){$v = A$}{
                $v \gets Q$ \tcp*[f]{talk to adversary next}\;
                \Return{$(s, A, m)$} \tcp*[f]{ferry answer back}\;
            }
            
        }
\end{algorithm}

\section{Proofs}
\subsection{Proof of Lemma \ref{lemma.f_odom}}\label{appendix.f_odom_proof}

In this section, we enclose the full proof of Lemma \ref{lemma.f_odom}.

\begin{lemma}[Lemma \ref{lemma.f_odom} restated] 
\begin{enumerate}
    \item A function $\calG:\calD^*\rightarrow \calD$ is a valid NIM-privacy-loss accumulator if and only if $\calF: \calD^* \times \calD \to \{0, 1\}$ constructed from $\calG(\cdot)$ such that $\calF(\cdot;d) = \mathbb I(\calG(\cdot) \preceq d)$ is a valid $\calD$-DP NIM-filter.
    \item A function $\calG:\calD^*\rightarrow \calD$ is a valid concurrent IM-privacy-loss accumulator if and only if $\calF: \calD^* \times \calD \to \{0, 1\}$ constructed from $\calG(\cdot)$ such that $\calF(\cdot;d) = \mathbb I(\calG(\cdot) \preceq d)$ is a valid $\calD$-DP concurrent IM-filter.
\end{enumerate}
\end{lemma}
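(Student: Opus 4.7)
The plan is to prove both implications by exhibiting that the filter and the truncated odometer are interactive post-processings of each other, and then invoking Theorem \ref{thm.pim_privacy}. Fix $d \in \calD$; since the NIM and concurrent IM cases are argued identically (replacing $\CF$ and $\CO$ with $\NIF$ and $\NIO$), we describe only the IM case.

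For the direction ``$\calF$ valid $\Rightarrow$ $\calG$ valid'', first define an interactive mechanism $\cM$ whose view against any deterministic adversary $\cA$ on input $x$ coincides with $\texttt{Trunc}_d(\View(\cA \leftrightarrow \CO(x)))$; concretely, $\cM$ simulates $\CO$ but enters the looping halt state as soon as it would answer a query causing $\calG(d_1, \dots, d_k) \not\preceq d$. Next, exhibit a PIM $\cP$ with $\cP \circ \CF \equiv \cM$: $\cP$ maintains its own list of the privacy parameters of all mechanisms launched so far, relays launch requests and mechanism queries through to the filter unchanged, responds locally to adversary-issued $\texttt{privacy\_loss}$ queries by returning $\calG(d_1, \dots, d_k)$ without contacting the filter, and halts when the filter replies ``insufficient budget''. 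Since $\CF$ is $d$-$\calD$ DP by hypothesis, Theorem \ref{thm.pim_privacy} yields that $\cM$ is $d$-$\calD$ DP, which is exactly the validity condition for the privacy-loss accumulator $\calG$.

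For the converse, ``$\calG$ valid $\Rightarrow$ $\calF$ valid'', symmetrically construct a PIM $\cP'$ with $\cP' \circ \cM \equiv \CF$. Here $\cP'$ also tracks the running list $(d_1, \dots, d_k)$; it forwards valid launch and existing-mechanism queries to $\cM$, intercepts any launch request that would push $\calG$ past $d$ and returns ``insufficient budget'' without ever contacting $\cM$, responds ``invalid query'' to adversary-issued $\texttt{privacy\_loss}$ queries, and otherwise relays $\cM$'s answers unchanged. Since $\cM$ is $d$-$\calD$ DP by assumption, Theorem \ref{thm.pim_privacy} gives that $\CF$ is $d$-$\calD$ DP, so $\calF$ is a valid filter.

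The main technical obstacle is reconciling a cosmetic behavioural mismatch at exactly the step where the budget is exhausted: the filter halts \emph{before} launching the offending mechanism and emits ``insufficient budget'', whereas the truncated odometer lets the odometer launch the mechanism internally and merely discards the resulting message from the view. Because only the adversary-visible transcript enters into the definition of $\calD$ DP, this internal difference is absorbed by the two postprocessors as described, and one should carefully verify round by round that the transcripts match up to and including the halting step. A secondary point is that adversary-issued $\texttt{privacy\_loss}$ queries are meaningful to the odometer but not to the filter; because each postprocessor sees every launch request and thus knows the full list $(d_1, \dots, d_k)$, it can simulate the missing behaviour locally, so no additional privacy cost is incurred.
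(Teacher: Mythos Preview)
Your proposal is correct and follows essentially the same approach as the paper: define an intermediate interactive mechanism $\cM$ whose view equals $\texttt{Trunc}_d(\View(\cA \leftrightarrow \CO(x)))$, exhibit PIM postprocessings realizing $\cM$ as $\cP \circ \CF$ and $\CF$ as $\cP' \circ \cM$, and invoke Theorem \ref{thm.pim_privacy} in each direction. Your identification of the two subtleties (the halting-step mismatch and the local simulation of \texttt{privacy\_loss} queries) matches exactly what the paper's Algorithms \ref{alg.bijection_left} and \ref{alg.bijection_right} are designed to handle.
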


\begin{proof}
    We will prove the bijection relation for IM, as the proof for NIM holds similarly. Fix a pair of datasets $x, x'$.
    We will prove Lemma \ref{lemma.f_odom} for the interactive case, because the non-interactive case follows similarly. 

    Fix a pair of datasets $x, x'$.

    \emph{If.} Suppose that $\calF(\cdot; d) = \mathbb I(\calG(\cdot) \preceq d)$ is a valid $\calD$ DP concurrent IM-filter. Then by Definition \ref{def.ddp_cf}, $\calF(\cdot; d)$-$FiltCon(IM)$ is a $d$-$\calD$ DP interactive mechanism for every $d\in \calD$. Thus, for an adversary $\calA$ interacting with $\CO$, we can construct an interactive mechanism $\cM$ that will behave exactly like $\CO$ as in Algorithm \ref{alg.co_construction}, but on every round of operation also checks if $\calG(\cdot) \not\preceq d$, in which case $\cM$ will maintain its previous state and output \texttt{halt} message. It is easy to see that
    $\View(\cA \leftrightarrow \cM(x))$ is exactly the same as $\texttt{Trunc}_d(\View(\calA \leftrightarrow \CO(x)))$ for any database $x$.

    We now prove that $\cM$ is an interactive postprocessing of $\calF(\cdot; d)$-$FiltCon(IM)$. Define a postprocessing $\cP$ as in Algorithm \ref{alg.bijection_left}. This mechanism allows $\cM$ to behave the same as $\calF(\cdot; d)$-$FiltCon(IM)$, just adding additional support for the deterministic \texttt{privacy\_loss} queries, which do not change the state of the odometer and therefore do not affect divergence.

    Therefore, 
       \begin{align}
       & D(\texttt{Trunc}_d(\View(\calA\leftrightarrow \CO(x)))|| \\ & \texttt{Trunc}_d(\View(\calA\leftrightarrow \CO(x'))))\notag \\
       & = D(\texttt{View}(\calA \leftrightarrow \cM(x))||\View(\calA \leftrightarrow \cM(x')))\notag\\
        & =D(\texttt{View}(\calA \leftrightarrow \cP \circ \CF(x)|| \\ & \texttt{View}(\calA \leftrightarrow \cP \circ \CF(x'))) \notag \\
        & \preceq D(\texttt{View}(\mathcal{A} \leftrightarrow \CF(x))|| \texttt{View}(\mathcal{A} \leftrightarrow \CF(x'))) \label{ineq.post1} \\
        & \preceq d,
        \end{align}
    where ~\eqref{ineq.post1} follows from Theorem \ref{thm.pim_privacy}.
    By Definition \ref{def.ddp_odom}, $\calG: \calD^* \to \calD$ is a valid $\calD$ DP privacy-loss accumulator.

    \emph{Only if.} Suppose that $\calG: \calD^* \to \calD$ is a valid $\calD$ DP privacy-loss accumulator. For every $d\in \calD$, because the continuation rule is $\calF(\cdot; d) = \mathbb I(\calG(\cdot) \preceq d)$,  $\calF(\cdot; d)$-$FiltCon(IM)$ will halt when $\calG(\cdot) \not\preceq d$, which is the same condition for outputting $\View$ truncated at $d$ as in Algorithm \ref{alg:truncview}. We can use the same $\cM$ as in the previous direction that is constructed such that, for an adversary $\cA$ and database $x$, $\View(\cA \leftrightarrow \cM(x))$ is exactly the same as $\texttt{Trunc}_d(\View(\calA \leftrightarrow \CO(x)))$.

        We now define an interactive postprocessing $\cP$ that transforms $\cM$ into $\calF(\cdot; d)$-$FiltCon(IM)$ in
        Algorithm \ref{alg.bijection_right}, which removes support for $\texttt{privacy\_loss}$ queries since $\calF(\cdot; d)$-$FiltCon(IM)$ would not support that. Other interaction is essentially preserved, since $\cM$ and $\calF(\cdot; d)$-$FiltCon(IM)$ will both halt when $\calG(\cdot) \not\preceq d$. 
        
        Again, because \texttt{privacy\_loss} queries do not affect divergence, we have
        \begin{align}
            & D(\texttt{View}(\calA \leftrightarrow \CF(x) ||  \texttt{View}(\calA \leftrightarrow \CF(x'))) \notag \\
            & = D(\texttt{View}(\calA \leftrightarrow \cP \circ \cM(x) || \View(\calA \leftrightarrow \cP \circ \cM(x')))  \notag \\
            & \preceq D(\View(\calA \leftrightarrow \calM(x))||\View(\calA \leftrightarrow \calM(x'))) \label{ineq.post2} \\
        & = D( \texttt{Trunc}_d(\texttt{View}(\mathcal{A} \leftrightarrow \CO)(x) || \\ & \texttt{Trunc}_d(\texttt{View}(\mathcal{A}\leftrightarrow \CO)(x'))) \notag \\
            & \preceq d 
        \end{align}
       where ~\eqref{ineq.post2} follows from Theorem \ref{thm.pim_privacy}.
        Thus, $\calF(\cdot; d)$-$FiltCon(IM)$ is a $d$-$\calD$ DP interactive mechanism. By Definition \ref{def.ddp_cf}, since this is true for arbitrary $d$, $\calF$ is a valid $\calD$ DP filter.
    \end{proof}

\begin{algorithm}
    \DontPrintSemicolon
    \caption{Interactive postprocessing $\cP$ that transforms $\calF(\cdot; d)$-$FiltCon(IM)$ into $\cM$}\label{alg.bijection_left}
        \Fn{\AlgPostprocessing{$s, v, m$}}{
            \If(\tcp*[f]{initialize $\cP$}){$m = \lambda$}{
                $s' \gets ([])$, where $[]$ is an empty list\;
            }

            \If(\tcp*[f]{accepting queries from adversary}){$v = Q$}{
                \If{$m$ is of form $m = (\calM', d')$}{
                    Parse $s$ as $([\calM_1, \dots, \calM_k])$\;
                    $s' \gets ([\calM_1, \dots, \calM_k, (\calM_{k + 1}, \texttt{pending})])$\;
                }\ElseIf{$m$ is of form $m = (j, q)$ where $j = 1, \dots, k$ and $q$ is a query to $\calM_j$}{
                    $s' \gets s$\;
                }\ElseIf{$m = \texttt{privacy\_loss}$}{
                    $m' \gets \calG(\cdot)$ \;
                    \Return{$(s', A, m')$}\;
                }
                $m' \gets m$\;
                $v' \gets Q$, $v \gets A$ \tcp*[f]{pass query on to mechanism}\;
            }\ElseIf(\tcp*[f]{accepting answers from filter}){$v = A$}{
                \If{$s$ is of form $s = ([\calM_1, \dots, (\calM_k, \texttt{pending})])$}{
                    \If(\tcp*[f]{successfully started new mechanism}){$m$ is not an error message}{
                        $s' \gets ([\calM_1, \dots, \calM_k])$\;
                    }\Else{
                        $s' \gets ([\calM_1, \dots, \calM_{k - 1}])$\;
                    }
                }\Else{
                    $s' \gets s$\;
                }
                $m' \gets m$, $v' \gets A$, $v \gets Q$ \tcp*[f]{pass query on to adversary}\;
            }
            \Return{$(s', v', m')$}\;
        }
\end{algorithm}

\begin{algorithm}
    \DontPrintSemicolon
    \caption{Interactive postprocessing $\cP$ that transforms $\cM$ into $\calF(\cdot; d)$-$FiltCon(IM)$}\label{alg.bijection_right}
        \Fn{\AlgPostprocessing{$s, v, m$}}{
            \If(\tcp*[f]{initialize $\cP$}){$s = \lambda$}{
                $s \gets ([])$, where $[]$ is an empty list\;
            }
            \If(\tcp*[f]{accepting queries from adversary}){$v = Q$}{
                \If{$m$ is of form $m = (\calM', d')$}{
                    Parse $s$ as $([\calM_1, \dots, \calM_k])$\;
                    $s' \gets ([\calM_1, \dots, \calM_k, (\calM_{k + 1}, \texttt{pending})])$\;
                }\ElseIf{$m$ is of form $m = (j, q)$ where $j = 1, \dots, k$ and $q$ is a query to $\calM_j$}{
                    $s' \gets s$\;
                }\ElseIf(\tcp*[f]{remove support for privacy loss queries}){$m$ is of form $m = \texttt{privacy\_loss}$}{
                    $m' \gets \texttt{invalid query}$ \;
                    \Return{$(s', A, m')$}\;
                }

                $m' \gets m$\;
                $v' \gets Q$, $v \gets A$ \tcp*[f]{pass query on to mechanism}\;
            }\ElseIf(\tcp*[f]{accepting answers from filter}){$v = A$}{
                \If{$s$ is of form $s = ([\calM_1, \dots, (\calM_k, \texttt{pending})])$}{
                    \If(\tcp*[f]{successfully started new mechanism}){$m$ is not an error message}{
                        $s' \gets ([\calM_1, \dots, \calM_k])$\;
                    }\Else{
                        $s' \gets ([\calM_1, \dots, \calM_{k - 1}])$\;
                    }
                }\Else{
                    $s' \gets s$\;
                }
                $m' \gets m$, $v' \gets A$, $v \gets Q$ \tcp*[f]{pass query on to adversary}\;
            }
            \Return{$(s', v', m')$}\;
        }
\end{algorithm}

\subsection{Proof of Example \ref{example.eps-delta-filter} ($(\eps, \delta)$-DP filter)}

In this section, we demonstrate how the $(\eps, \delta)$-DP filter and odometer for noninteractive mechanisms defined in \cite{whitehouse2023fully} correspond to the filter and odometer we construct in Algorithm \ref{alg.k_noninteractive_comp} and Algorithm \ref{alg.o_construction}, respectively.

\begin{example}[Example \ref{example.eps-delta-filter} restated]
    For every $\delta' > 0$, 
    \begin{align*}
        & \calF((\eps_1, \delta_1), \dots, (\eps_k, \delta_k); (\eps, \delta)) \\ 
        & = \mathbb I\left(\epsilon \leq \sqrt{2\log\left(\frac{1}{\delta'}\right)\sum_{m \leq k}\eps_m^2} + \frac{1}{2}\sum_{m \leq k} \eps_m^2\right) \cdot \mathbb I\left(\delta' + \sum_{m \leq k} \leq \delta\right)
    \end{align*}
    is a valid $(\eps, \delta)$-DP privacy filter. 
\end{example}

\begin{proof}
    This filter is translated from the following theorem from \cite{whitehouse2023fully}, who define an  $(\eps, \delta)$-DP privacy filter in their notation as follows: 

    \begin{theorem}[$(\eps, \delta)$-DP privacy filter \cite{whitehouse2023fully}]
        \label{example.wrrw-eps-delta-filter}
        Suppose $\calM_1, \calM_2, \dots$ is a sequence of mechanisms such that, for any $k \geq 1$, $\calM_k$ is $(\epsilon_k, \delta_k)$-DP conditioned on $\calM_1, \dots, \calM_{k - 1}$. ($\calM_1, \calM_2, \dots$ are random variables that take on values that are the possible outputs of the respective mechanism on a dataset.) Let $\eps > 0$ and $\delta = \delta' + \delta''$ be a target approximation parameter such that $\delta' > 0$, $\delta'' \geq 0$. Define $N := N((\eps_1, \delta_1), (\eps_2, \delta_2), \dots) := \inf\{ k \in \mathbb N: \eps < \sqrt{2\log\left(\frac{1}{\delta'}\right)\sum_{m \leq k + 1}\eps_m^2} + \frac{1}{2}\sum_{m \leq k + 1} \eps_m^2 \text{ or } \delta'' < \sum_{m \leq k + 1} \delta_m \}$. Then the composition of these mechanisms $Comp(\calM_1, \dots, \calM_{N(\cdot)})$ is $(\eps, \delta)$-DP. 
    \end{theorem}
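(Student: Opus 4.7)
The plan is to adapt the exponential supermartingale argument underlying the classical advanced composition theorem to this adaptive, stopped setting. The key technical ingredient is that both the stopping time $N$ and the realized sequence $(\eps_k, \delta_k, \cM_k)_{k \ge 1}$ are measurable with respect to the filtration generated by the adversary's view through round $k-1$, so one can legitimately apply the optional stopping theorem to an exponential martingale built from the per-round privacy loss.

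First I would handle the $\delta_k$ contributions via the standard privacy-loss-random-variable reduction: there is an event $E$ with $\Pr[E^c] \le \sum_{k \le N} \delta_k$ on which each step $\cM_k$ behaves as a pure $\eps_k$-DP mechanism relative to the history, so that the per-round privacy loss $Z_k - Z_{k-1}$ (where $Z_k := \log \tfrac{\Pr[V_k \mid x]}{\Pr[V_k \mid x']}$ and $V_k$ is the view after the first $k$ mechanisms) satisfies the usual pure-DP MGF bound conditional on $V_{k-1}$. Since $\eps_k$ is measurable with respect to $V_{k-1}$, the process $M_k := \exp\!\bigl(\lambda Z_k - \tfrac{\lambda(\lambda+1)}{2}\sum_{i \le k}\eps_i^2\bigr)$ is then a supermartingale on $E$ with respect to the adversary's filtration, for every fixed $\lambda > 0$.

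Next I would invoke the optional stopping theorem at $N$, which is a bounded stopping time because each accepted mechanism strictly increases $\sum_{i \le k}\eps_i^2$ and the filter halts as soon as this sum pushes the advanced-composition estimate above $\eps$. Optional stopping yields $\mathbb E[M_N \mathbf 1_E] \le 1$, and a Markov-style bound then gives $\Pr[Z_N > \tfrac{\lambda(\lambda+1)}{2}\sum_{i \le N}\eps_i^2 + u,\, E] \le e^{-\lambda u}$. Choosing $\lambda = \sqrt{2\log(1/\delta')/\sum_{i \le N}\eps_i^2}$ and $u = \sqrt{2\log(1/\delta')\sum_{i \le N}\eps_i^2}$ makes the tail probability at most $\delta'$ while forcing $Z_N \le \eps$ by the very definition of $N$. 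Combining, $\Pr[Z_N > \eps] \le \Pr[Z_N > \eps,\, E] + \Pr[E^c] \le \delta' + \sum_{k \le N}\delta_k \le \delta$, which is the desired $(\eps,\delta)$-DP guarantee for $\comp(\cM_1, \ldots, \cM_N)$.

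The main obstacle will be justifying the $\delta$-absorbing event $E$ rigorously in the adaptive setting: textbook pure-DP MGF arguments assume a fixed, pre-specified sequence of parameters, and here one has to show that the pure-DP-like MGF bound holds uniformly over the entire adaptive history. This amounts to decomposing each approximate-DP step $\cM_k$, viewed on its effective input distribution given the history, into a pure $\eps_k$-DP part plus a failure event of probability at most $\delta_k$, and then showing that the union of these failure events (taken up to the data-dependent time $N$) is still bounded by $\sum_{k \le N}\delta_k$. Once this coupling is in place, the rest of the proof is a direct application of optional stopping followed by a Chernoff-style optimization over $\lambda$.
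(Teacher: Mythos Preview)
The paper does not prove this theorem; it is quoted from \cite{whitehouse2023fully} and invoked as a black box in the appendix to justify Example~\ref{example.eps-delta-filter}. So there is no in-paper proof to compare against. Your supermartingale-plus-optional-stopping outline is in fact the approach of the cited reference, and the high-level structure (privacy-loss random variable, exponential supermartingale, stop, Chernoff) is correct.

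That said, there are two genuine gaps in your sketch. First, you set $\lambda = \sqrt{2\log(1/\delta')/\sum_{i \le N}\eps_i^2}$, but $\sum_{i \le N}\eps_i^2$ is random; the parameter $\lambda$ must be fixed \emph{before} you form the process $M_k$, or optional stopping is illegitimate. The repair is to note that the filter forces $\sum_{i \le N}\eps_i^2 \le S^\ast$, where $S^\ast$ is the deterministic solution of $\tfrac{1}{2}S^\ast + \sqrt{2\log(1/\delta')\,S^\ast} = \eps$, and to take $\lambda = \sqrt{2\log(1/\delta')/S^\ast}$. With this fixed $\lambda$ the bound $Z_N \le \tfrac{\lambda+1}{2}\sum_{i\le N}\eps_i^2 + \log(1/\delta')/\lambda$ is monotone in the sum and attains exactly $\eps$ at $S^\ast$, which is what you need.

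Second, $N$ is not a bounded stopping time: nothing prevents $\eps_k = 0$ or $\eps_k \to 0$, so ``each accepted mechanism strictly increases $\sum_{i \le k}\eps_i^2$'' does not force $N < \infty$. You should instead appeal to Ville's inequality for nonnegative supermartingales, which yields $\Pr[\sup_k M_k \ge 1/\delta'] \le \delta'$ without any boundedness assumption on the stopping time, or else run the argument on $N \wedge n$ and let $n \to \infty$.
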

    $\calM_k$ being $(\eps_k, \delta_k)$-DP conditioned on the previous mechanisms means the privacy-loss parameters are chosen adaptively. This is analogous to our setting for filters as defined in Algorithm \ref{alg.k_noninteractive_comp}, in which the adversary can send in a query that decomposes into a mechanism and distance, a signal to the filter to start a new mechanism. Since the query could depend on the history of interaction, the new mechanism's parameters are adaptively chosen.
    
    We define $\calF: \calD^* \to \{ 1, 0\}$ as 
    \begin{align*}
        & \calF((\eps_1, \delta_1), \dots, (\eps_k, \delta_k); (\eps, \delta)) \\ 
        & = \mathbb I\left(\sqrt{2\log\left(\frac{1}{\delta'}\right)\sum_{m \leq k}\eps_m^2} + \frac{1}{2}\sum_{m \leq k} \eps_m^2 \leq \eps \right) \cdot \mathbb I\left(\delta' + \sum_{m \leq k} \delta_m \leq \delta \right),
    \end{align*}
    meaning that a mechanism equipped with $\calF$ will continue until for some $k$, $\sqrt{2\log\left(\frac{1}{\delta'}\right)\sum_{m \leq k + 1}\eps_m^2} + \frac{1}{2}\sum_{m \leq k + 1} \eps_m^2 > \epsilon$ or $\sum_{m \leq k + 1} \delta_m > \delta''$. 
    This is equivalent to \cite{whitehouse2023fully}'s notion of $N(\cdot)$. Both the condition on $\epsilon$ and the condition on $\delta''$ are increasing in the number of mechanisms being composed. The difference is that in Whitehouse et al.'s \cite{whitehouse2023fully} setting, the number of mechanisms being composed $N(\cdot)$ is the smallest number for adding another mechanism will cause $\sqrt{2\log\left(\frac{1}{\delta'}\right)\sum_{m \leq N(\cdot) + 1}\eps_m^2} + \frac{1}{2}\sum_{m \leq N(\cdot) + 1} \eps_m^2$ to exceed $\epsilon$, or in other words, $N(\cdot)$ is the largest number for which $\sqrt{2\log\left(\frac{1}{\delta'}\right)\sum_{m \leq N(\cdot)}\eps_m^2} + \frac{1}{2}\sum_{m \leq N(\cdot)} \eps_m^2 \leq \epsilon$. Since we define our filter in terms of a bound on the privacy parameter instead of a bound on the number of mechanisms, this is equivalent to checking that for $k$ mechanisms being composed, $\sqrt{2\log\left(\frac{1}{\delta'}\right)\sum_{m \leq k}\eps_m^2} + \frac{1}{2}\sum_{m \leq k} \eps_m^2 \leq \epsilon$. The same holds for the $\sum_{m \leq k} \delta_m \leq \delta''$ term. Since $\delta'$ is already set, we fold the $\sum_{m \leq k}\delta_m \leq \delta''$ requirement into the $\delta' + \delta'' \leq \delta$ requirement to get the condition $\delta' + \sum_{m \leq k}\delta_m \leq \delta$. Taking the lower bound of the $k$ that satisfies either $\sqrt{2\log\left(\frac{1}{\delta'}\right)\sum_{m \leq k}\eps_m^2} + \frac{1}{2}\sum_{m \leq k} \eps_m^2 \leq \epsilon$ or $\delta' + \sum_{m \leq k}\delta_m \leq \delta$ is equivalent to finding the $k$ that satisfies both conditions. 

    With fully adaptive composition of noninteractive mechanisms, the mechanism will only halt if starting a new mechanism would exceed its privacy-loss budget. Since $N(\cdot)$ mechanisms composed are $(\eps, \delta)$-DP in \cite{whitehouse2023fully}'s setting, we know that a mechanism equipped with $\calF$ that has not yet halted will be $(\eps, \delta)$-DP as well.   

    Then by Definition \ref{def.ddp_f}, $\NIF$ is an $(\eps, \delta)$-DP interactive mechanism, meaning that $\calF$ is a valid $(\eps, \delta)$-DP for the sequence of $(\eps_i, \delta_i)$-DP noninteractive mechanisms as defined above. 
\end{proof}

To define a concurrent privacy odometer, we make use of Lemma \ref{lemma.filter_to_odo}.

\begin{theorem}[$(\eps, \delta)$-DP privacy odometer]
    Let $\delta = \delta' + \delta''$ be a target approximation parameter such that $\delta' > 0$, $\delta'' \geq 0$. Then 
    \begin{align*}
        & \calG((\eps_1, \delta_1), \dots, (\eps_k, \delta_k)) \\
        & = \begin{cases}
            \left(\sqrt{2\log\left(\frac{1}{\delta'}\right)\sum_{m \leq k}\eps_m^2} + \frac{1}{2}\sum_{m \leq k} \eps_m^2, \delta\right) & \delta' + \sum_{m \leq k} \delta_m \leq \delta \\
            (\infty, \infty) & \text{otherwise}
        \end{cases}
    \end{align*}
    
    is a valid $(\eps, \delta)$-DP privacy-loss accumulator for noninteractive mechanisms, where $\eps = \sqrt{2\log\left(\frac{1}{\delta'}\right)\sum_{m \leq k}\eps_m^2} + \frac{1}{2}\sum_{m \leq k} \eps_m^2$.
\end{theorem}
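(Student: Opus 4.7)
The plan is to reduce this theorem to Theorem~\ref{example.eps-delta-filter} via the filter/accumulator bijection of Lemma~\ref{lemma.f_odom}. Part~(1) of that lemma states that $\calG$ is a valid NIM-privacy-loss accumulator if and only if $\calF(\,\cdot\,;d) := \mathbb{I}(\calG(\,\cdot\,) \preceq d)$ is a valid $\calD$-DP NIM-filter, so it suffices to show that the indicator induced by our $\calG$ coincides with the continuation rule whose validity is already established in Theorem~\ref{example.eps-delta-filter}.

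Unwinding the approx-DP partial order $(\eps_1,\delta_1) \preceq (\eps_2,\delta_2) \Leftrightarrow (\eps_1 \le \eps_2 \text{ and } \delta_1 \le \delta_2)$, I would split $\mathbb{I}(\calG((\eps_1,\delta_1),\ldots,(\eps_k,\delta_k)) \preceq (\eps,\delta))$ along the two branches of $\calG$. When $\delta' + \sum_{m \le k} \delta_m \le \delta$, the accumulator outputs $(S_k,\delta)$ with $S_k = \sqrt{2\log(1/\delta')\sum_{m \le k} \eps_m^2} + \tfrac12\sum_{m \le k} \eps_m^2$; the $\delta$-coordinate comparison $\delta \le \delta$ holds trivially, so only $S_k \le \eps$ remains. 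When $\delta' + \sum_{m \le k} \delta_m > \delta$, the accumulator returns $(\infty,\infty)$, which is not $\preceq (\eps,\delta)$ for any finite target, so the indicator is $0$. Combining these, the indicator equals $\mathbb{I}(S_k \le \eps)\cdot\mathbb{I}(\delta' + \sum_{m \le k}\delta_m \le \delta)$, which is exactly the continuation rule $\calF$ of Theorem~\ref{example.eps-delta-filter}.

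Since Theorem~\ref{example.eps-delta-filter} already certifies this $\calF$ as a valid approx-DP NIM-filter (its proof translates the Whitehouse et al.\ filter into our framework), applying the ``if'' direction of Lemma~\ref{lemma.f_odom} then yields that $\calG$ is a valid approx-DP NIM-privacy-loss accumulator, as desired.

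The main (and essentially only) obstacle is the bookkeeping in matching $\mathbb{I}(\calG \preceq (\eps,\delta))$ to the two-factor indicator; the one delicate point is the overflow branch, where one must confirm that $(\infty,\infty)$ is $\preceq$-incomparable with every finite target $(\eps,\delta)$, so that the accumulator's out-of-budget case aligns cleanly with the filter's rejection condition. Once this is noted, the result is an immediate corollary of the two previously established results and requires no further argument.
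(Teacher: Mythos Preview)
Your proposal is correct and follows exactly the approach the paper takes: the paper simply states ``To define a concurrent privacy odometer, we make use of Lemma~\ref{lemma.filter_to_odo}'' (the filter/accumulator bijection, whose general form is Lemma~\ref{lemma.f_odom}) and then states the theorem, so your reduction to Theorem~\ref{example.eps-delta-filter} via Lemma~\ref{lemma.f_odom} is precisely what is intended, only spelled out in more detail.
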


\end{document}